\newtheorem{thm}{Theorem}[section]
\newtheorem{cor}[thm]{Corollary}
\newtheorem{lem}[thm]{Lemma}
\newtheorem{conj}[thm]{Conjecture}
\newtheorem{prop}[thm]{Proposition}
\theoremstyle{definition}
\theoremstyle{remark}
\numberwithin{equation}{section}
\newcommand{\abs}[1]{\left\vert#1\right\vert}
\newcommand{\set}[1]{\left\{#1\right\}}
\newcommand{\parr}[1]{\left (#1\right )}
\newcommand{\brac}[1]{\left [#1\right ]}
\newcommand{\Real}{\mathbb R}
\newcommand{\Natural}{\mathbb N}
\newcommand{\eps}{\varepsilon}
\newcommand{\too}{\rightarrow}
\newcommand{\A}{\mathcal{A}}
\newcommand{\bbar}[1]{\overline{#1}}
\newcommand{\wt}[1]{\widetilde{#1}} %wide tilde
\newcommand{\wh}[1]{\widehat{#1}} %wide hat
\def \i{\textbf{\footnotesize{i}}\hspace{0.5mm}} % traditional \i , which is a letter i without a dot}
\def \D{\mathcal{D}} %unit disk
\def \T{\mathcal{T}} %Equilateral [e^i {0,2pi/3,4pi/3}]
\def \A{\mathcal{A}} %affine map
\def \B{\mathcal{B}} %affine map
\def \C{\mathbb{C}} %complex plane
\def \Z{\mathbb{Z}} %integers
\def \SU{\mathbb{S}^1} %the unit circle
\def \re{\mathrm{Re}} %real part of complex number
\def \im{\mathrm{Im}} %real part of complex number
\newcommand {\closure}[1]{\textrm{Closure}(#1)} %closure of a set
\newcommand {\interior}[1]{\textrm{Interior}(#1)} %interior of a set
\def \vphi{\varphi} %the notation for the local charts
\def \O{\mathcal{O}} %big O notation
\def \S{\mathcal{S}} %surface (in R^3)
\def \K{\mathrm{\textbf{K}}} % vector of conformal distortions
\def \Tau{\mbox{\boldmath$\tau$}} % a vector of rotations tau_j
\def \F{\mathrm{\textbf{F}}} % face set
\def \E{\mathrm{\textbf{E}}} % edge set
\def \V{\mathrm{\textbf{V}}} % vertices set
\def \nei{\mathcal{N}} % \nei(j) the neighbors' indices of vertex i
\def \FF{\mathcal{F}} %the affine maps
\def \FK{\mathcal{F}_K} %the K-QC affine maps
\def \FKtau{\mathcal{F}_{K,\tau}} %maximal convex piece of \FK set around rotation theta using tau as parameter
\def \CD{\mathrm{\textbf{D}}}%conformal distortion
\def \DCD{\mathrm{\textbf{D}^d}} %discrete conformal distortion
\def \J{\mathrm{J}}%Jacobian
\def \-{\hspace{-2mm}}
\begin{document}

\title[]{Approximation of Polyhedral Surface Uniformization}%
\author{Yaron Lipman}%
\address{Weizmann Institute Of Science}%
\email{Yaron.Lipman@weizmann.ac.il}%

%\thanks{}%
%\subjclass{*}%
\keywords{discrete conformal mapping, discrete uniformization, quasiconformal mappings, simplicial maps, surface meshes, triangulated surfaces, polyhedral surfaces}%

%\date{}%
%\dedicatory{}%
%\commby{}%
% ----------------------------------------------------------------
\begin{abstract}
We present a constructive approach for approximating the conformal map (uniformization) of a polyhedral surface to a canonical domain in the plane. The main tool is a characterization of convex spaces of quasiconformal simplicial maps and their approximation properties. As far as we are aware, this is the first algorithm proved to approximate the uniformization of general polyhedral surfaces.
\end{abstract}
\maketitle

% ----------------------------------------------------------------
\section{Introduction}

A polyhedral surface $\S$ is defined by stitching planar polygons along congruent edges. A polyhedral surface can be endowed with a conformal structure making it a Riemann surface \cite{stephenson2005introduction,Bobenko11}. The celebrated uniformization theory \cite{ahlfors2010conformal,farkas1992riemann} then implies the existence of a conformal map $\Psi:\S\too\T$ between $\S\subset \Real^d$  ($d$ is typically 3 in applications) and a topologically equivalent domain in the plane $\T\subset \C$. The main focus of this paper is building constructive approximations of this map for polyhedral surfaces with generally shaped polygonal faces. Topologically, we restrict our attention to disk-type surfaces. To date, we are not aware of any other existing algorithm that is proved to converge in the limit to the uniformization map for general polyhedral surfaces (i.e., with polygonal faces of arbitrary shape).

The problem of approximating \emph{planar} conformal mappings is considered well-understood and there is a wealth of methods that produce approximations to conformal mappings between planar domains \cite{Porter05historyand,papamichael2010numerical,driscoll2002schwarz}. Circle-packing \cite{stephenson2005introduction}, imitates conformal mappings by replacing infinitesimal circles with finite one, and was shown to converge to conformal mappings as the circles are refined \cite{Rodin_Sullivan_1987,He96onthe,He_Schramm98}. However, it seems no full generalization to polyhedral surfaces with generally shaped faces exists \cite{stephenson2005introduction}. Other constructions of ``discrete uniformization'' exist in the field of Discrete Differential Geometry (DDG) \cite{DDG_Oberwolfach08} where the focus is building a consistent and rich discrete theory. An example for such construction that can be used to compute discrete uniformization is by Springborn and coauthors \cite{Springborn:2008:CET:1360612.1360676,Bobenko10_CETM}. Discrete Ricci flow is another example \cite{jin2008discrete}. Other popular constructions can be found in \cite{Levy:2002:LSC:566654.566590,Gu:2003:GCS:882370.882388,Sheffer:2005:AFR:1061347.1061354,Kharevych:2006:DCM:1138450.1138461,ben2008conformal} however no proof of convergence to the uniformization map is provided for any of these methods so-far.

%of conformal mappings include application of the Schwarz-Christoffel formula %\cite{driscoll2002schwarz}. However, we are not aware of any uses or generalizations %of this formula to the polyhedral surface case.

In this paper we construct algorithms for approximating the uniformization map with guarantees (i.e. with convergence proof). Without loosing any generality we can subdivide each polygon in $\S$ into triangles and henceforth assume we have a triangulation $\S=(\V,\E,\F)$, where $\V=\set{v_i}\subset \Real^d$, is the set of vertices; $\E=\set{e_k}$, the set of edges; and $\F=\set{f_j}$, the set of oriented planar faces (triangles).

%\begin{floatingfigure}[r]{0.3\textwidth}
\begin{wrapfigure}{r}{0.2\textwidth}
  \begin{center}\vspace{-0.4cm}\hspace{-0.4cm}
    \includegraphics[width=0.2\textwidth]{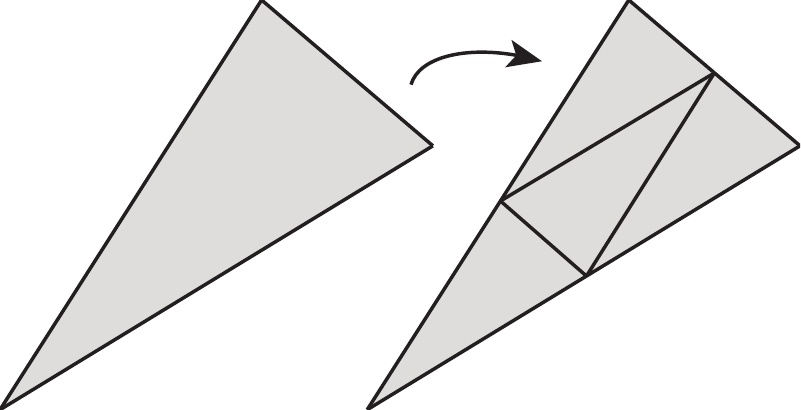}\vspace{-0.5cm}
  \end{center}
  %\caption{The Projection approach.}
  %\label{fig:multi_connected}
\end{wrapfigure}
We will construct successive approximations to $\Psi$ over a series of regular subdivisions of the surface $\S=\S^0\prec \S^1 \prec ...\prec \S^q$. By regular subdivision we mean that at each level we cut every face into four similar faces by connecting the mid-edges points, see inset figure, and Figure \ref{fig:cat_head} (top row). Our approximations will be simplicial mappings $\Phi^q\in\FF^{\S^q}$ of $\S^q$, namely piecewise-affine (over faces) and continuous mappings into the complex plane $\C$, the collection of such mappings over $\S^q$ will be denoted by $\FF^{\S^q}$. $\Phi^q$ will converge locally uniformly to $\Psi$. We will restrict our attention here to topological disks, and take as a canonical domain $\T$ the equilateral $\Delta(t_1,t_2,t_3)$ defined by its three corners $t_\ell=e^{\i (\ell-1)2\pi/3}, \ell=1,2,3$. To set a unique target uniformization map we will mark three distinct (positively oriented) boundary vertices $v_1,v_2,v_3 \in \V$ that will be mapped to the corners $t_1,t_2,t_3$ (respectively). This fixes all the degrees of freedoms of the map. Figure \ref{fig:cat_head} shows examples of simplicial approximations $\Phi^q$ to $\Psi:\S\too\T$ for a series of three refinements of a particular polyhedral surface: the middle row shows
planar checkerboard texture mapped by the inverse of the simplicial maps $\Phi^q$ to visualize the ``conformality'' of the approximations. The bottom row shows the homeomorphic image of $\S^q$ under $\Phi^q$ onto the equilateral domain. Bright-red color indicates high conformal distortion, while grey indicates low conformal distortion. Note that the approximations are improving as the mesh is refined. The main result on which we build upon when developing the algorithms in this paper  is:
\begin{figure}[t] %\vspace{-0.5cm}
\centering
\begin{tabular}{@{\hspace{0.0cm}}c@{\hspace{0.05cm}}c@{\hspace{0.0cm}}c@{\hspace{0.0cm}}c@{\hspace{0.0cm}}}
  \includegraphics[width=0.25\columnwidth]{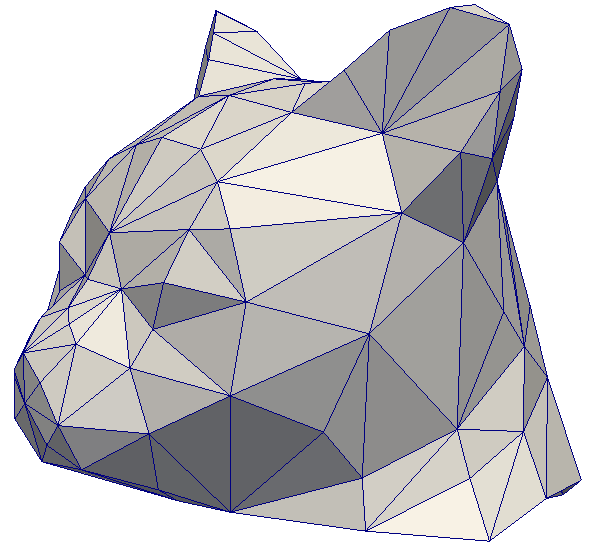}&
    \includegraphics[width=0.25\columnwidth]{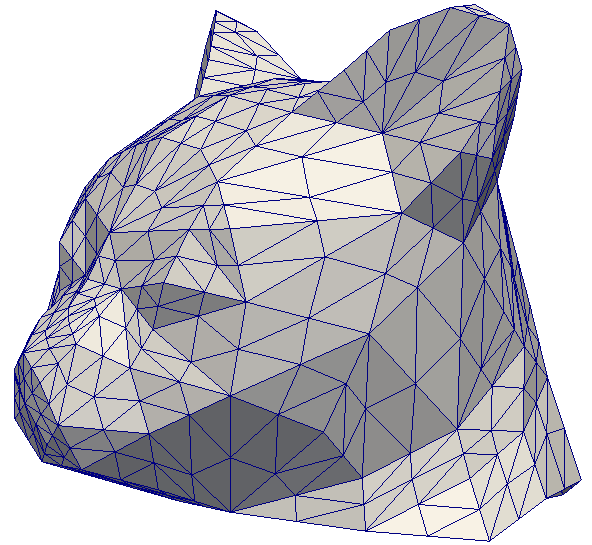}&
      \includegraphics[width=0.25\columnwidth]{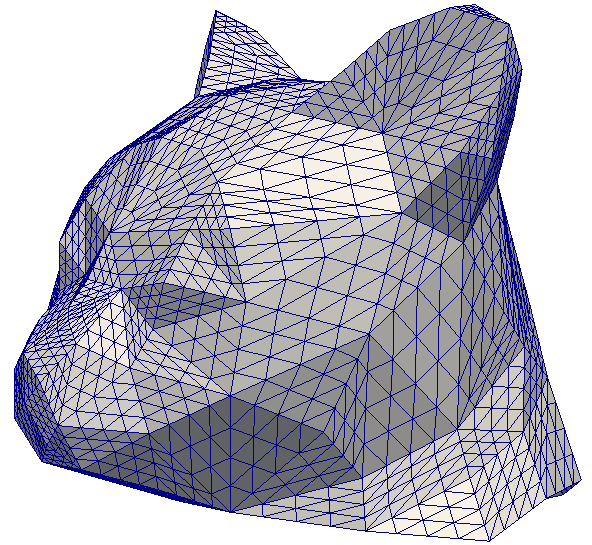}&
        \includegraphics[width=0.25\columnwidth]{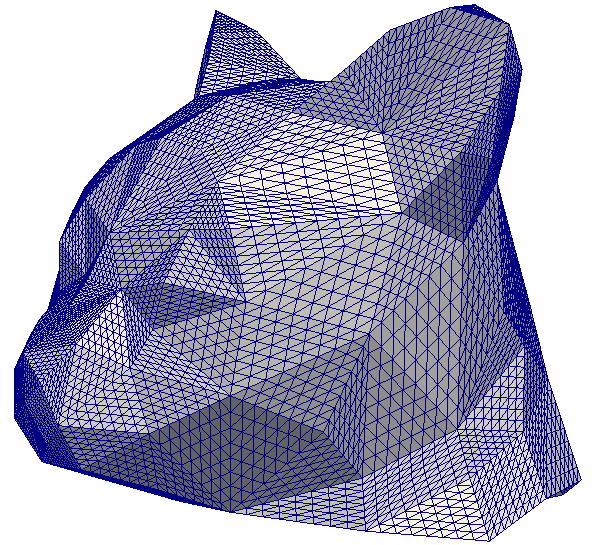}\\
&
    \includegraphics[width=0.25\columnwidth]{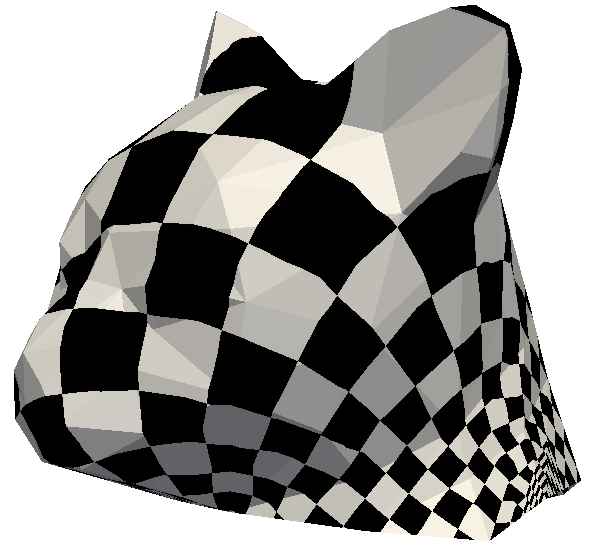}&
      \includegraphics[width=0.25\columnwidth]{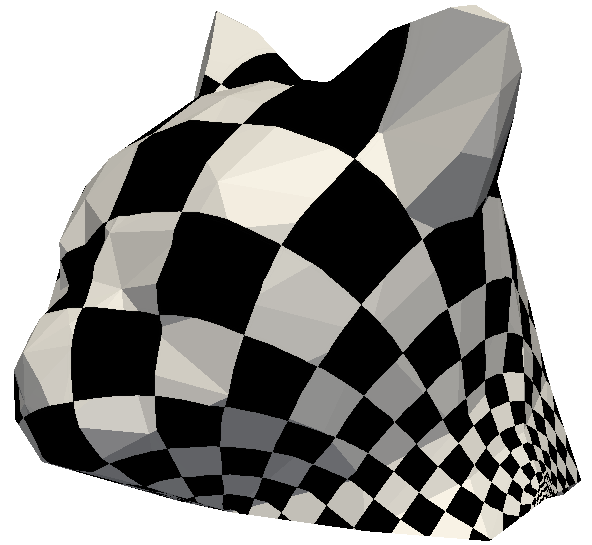}&
        \includegraphics[width=0.25\columnwidth]{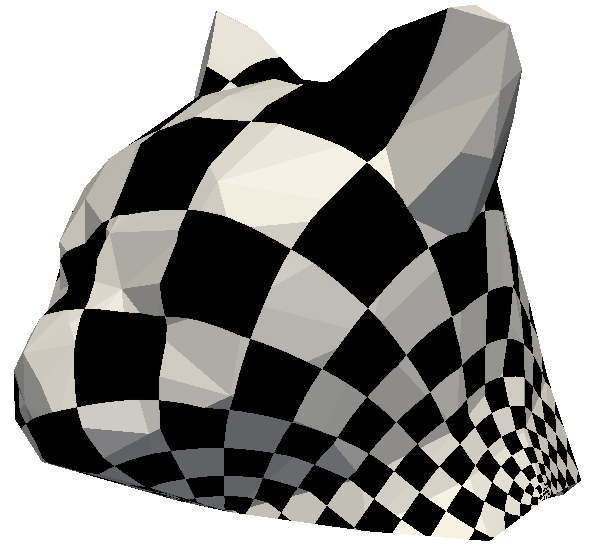}\\
&
    \includegraphics[width=0.25\columnwidth]{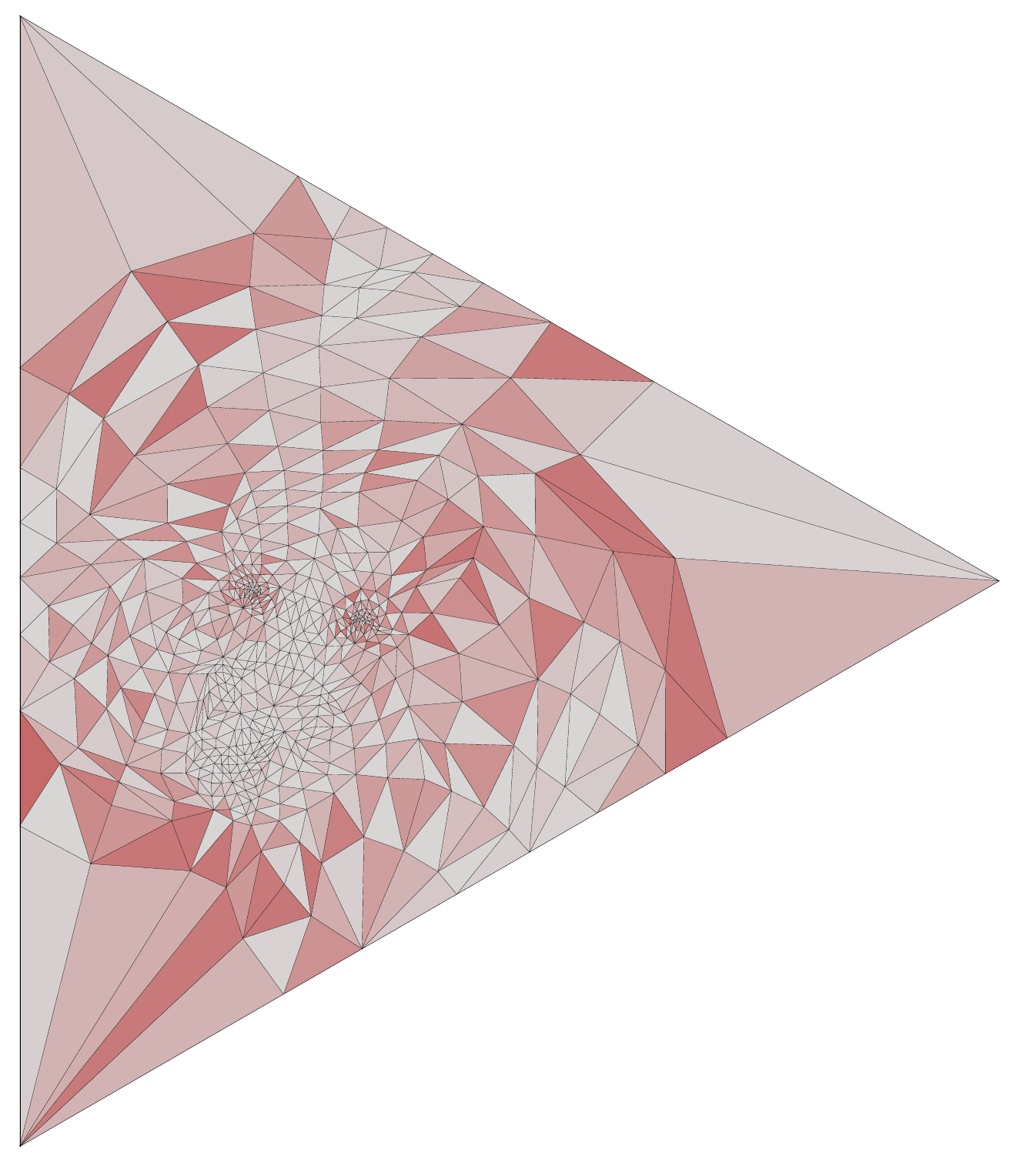}&
      \includegraphics[width=0.25\columnwidth]{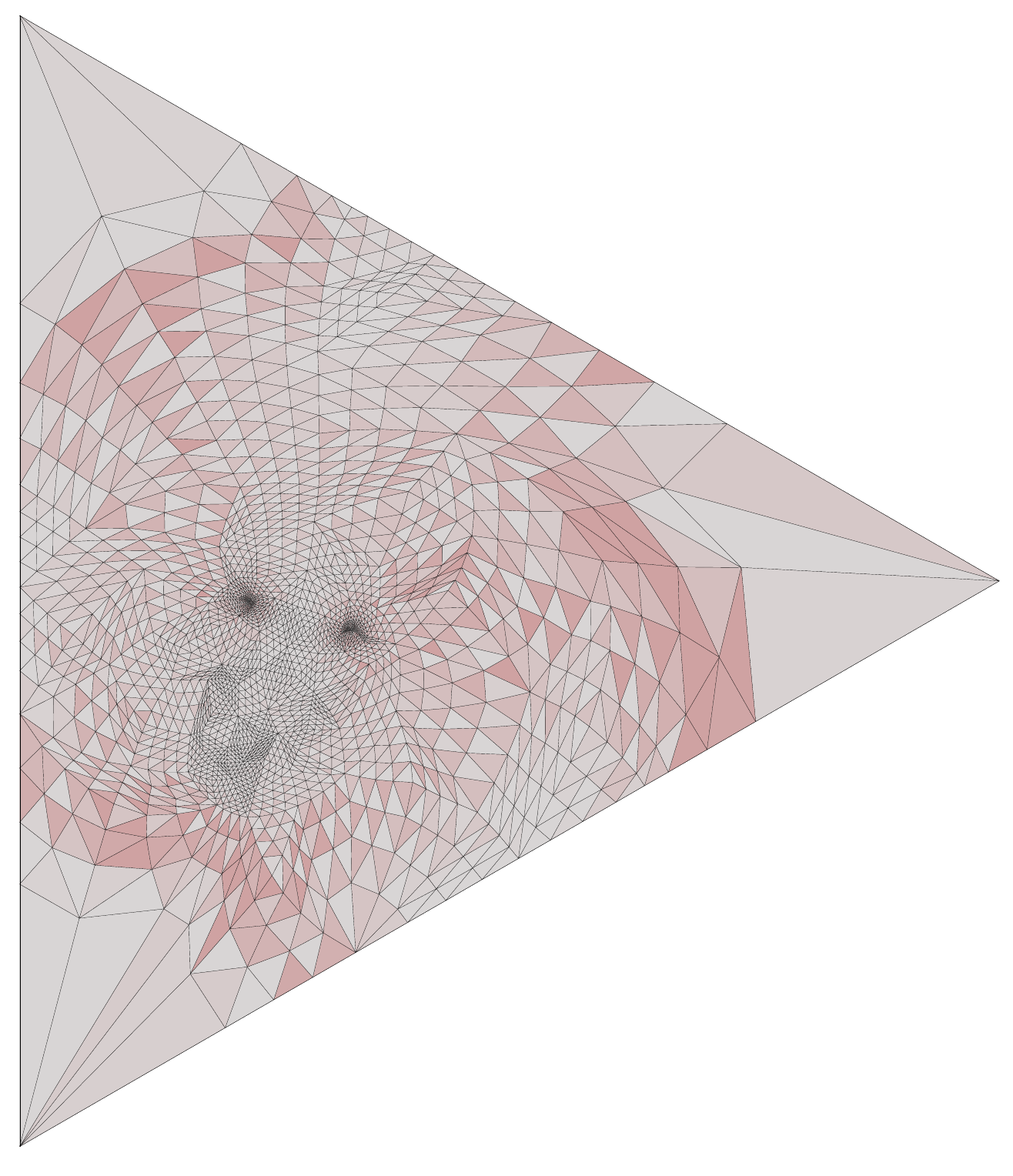}&
        \includegraphics[width=0.25\columnwidth]{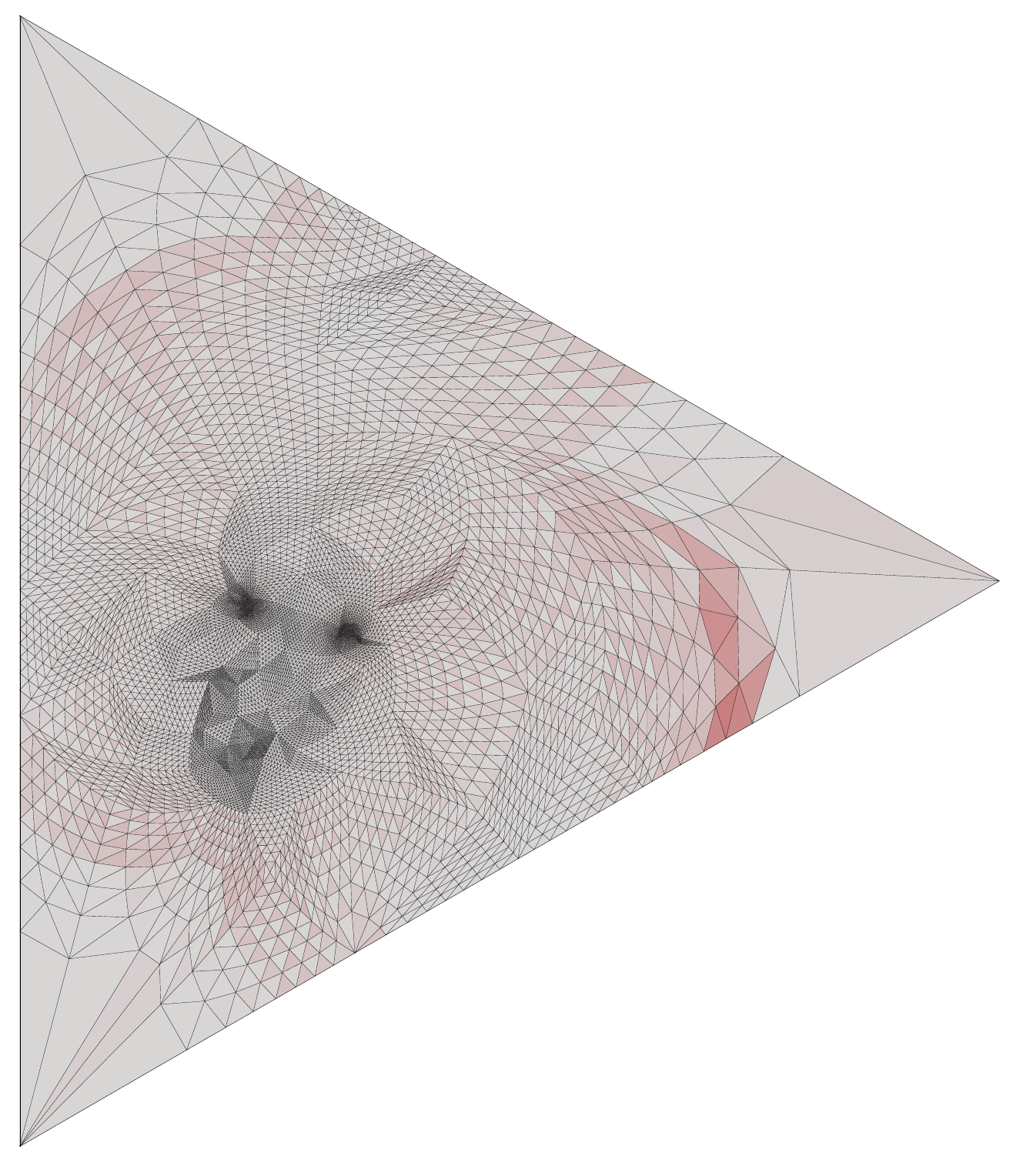}\\
$\S$ & $\S^1$ & $\S^2$ & $\S^3$ \\
\end{tabular}
  \caption{Approximation of the uniformization map with simplicial maps on subdivided versions of the origianl polyhedral surface $\S$.\vspace{-0.5cm}}\label{fig:cat_head}
\end{figure}

%\begin{figure}[h]
%  % Requires \usepackage{graphicx}
%  \includegraphics[width=0.6\columnwidth]{figures/cat_subdivision.png}\\
%  \caption{Approximation of the uniformization map with simplicial maps on subdivided versions of the origianl polyhedral surface $\S$.}\label{fig:cat_head}
%\end{figure}
\begin{thm}\label{thm:main}
Let $\Psi:\S\too\T$ be the uniformization map of a disk-type polyhedral surface to the equilateral $\T$ taking three prescribed boundary vertices of $\S$ to the corners of $\T$. Let $\S^q$ be the $q^{th}$-level subdivided version of $\S$. \\
If, for an arbitrary but fixed $\eps>0$, the argument of $\Psi'$ is known up to an error of $\pm(\frac{\pi}{2}-\eps)$, then one can construct a series of non-empty \emph{convex} spaces $U^q=\set{\Phi^q}\subset \FF^{\S^q}$ of simplicial maps of $\S^q$ such that:

\begin{enumerate}
\item
Every map $\Phi^q \in U^q$ is $K$-quasiconformal (QC) homeomorphism that maps $\S$ onto $\T$, with some constant $K$ independent of $q$.

\item
Every series $\set{\Phi^q}_{q\geq Q}$, where $\Phi^q\in U^q$, converges locally uniformly to the uniformization map $\Psi$. That is, $\Phi^q\circ\Psi^{-1}:\T\too\T$ converges uniformly in any compact subset of $\interior{\T}$ to the identity map $I_d:\T\too\T$.
\end{enumerate}
\end{thm}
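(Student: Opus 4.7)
I would build $U^q$ as the polytope cut out in the finite-dimensional affine space $\FF^{\S^q}$ by three groups of convex constraints on the nodal values of $\Phi^q$: (i) the marked vertices $v_1,v_2,v_3$ are pinned to $t_1,t_2,t_3$; (ii) the remaining boundary vertices of $\S^q$ are constrained to lie on the appropriate open edge of $\partial\T$, a linear half-plane condition per vertex; and (iii) for every face $f$ of $\S^q$, the argument of the holomorphic derivative $\partial\Phi^q|_f$, computed in a local isometric chart of $f$, is required to lie in a sector of opening $\pi-2\eps'$ (with $0<\eps'<\eps$ to be fixed later) centered on a per-face estimate $\theta_f$ of $\arg\Psi'|_f$ inherited from the hypothesis. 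A sector through the origin in $\C$ of opening less than $\pi$ is the intersection of two half-planes, and $\partial\Phi^q|_f$ is linear in the three nodal values of $f$, so each of (i)--(iii) reduces to linear equalities or inequalities in those values, making $U^q$ a convex polytope.

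For non-emptiness at all sufficiently large $q$, I would exhibit a concrete member, namely the nodal interpolant $\Psi^q$ of $\Psi$ at the vertices of $\S^q$. Regular $1$-to-$4$ subdivision preserves a finite set of triangle shapes and shrinks diameters at rate $2^{-q}$, while $\Psi$ is holomorphic in the interior of every face away from the cone-vertex set; hence on any face not adjacent to a cone point one has $\arg\partial\Psi^q|_f\to\arg\Psi'|_f$ uniformly in the local chart, whence the sector constraint is met by choosing $\eps'<\eps$. Near a cone point or marked corner, $\Psi$ has an algebraic expansion $c(z-z_0)^{\alpha}$; a direct computation exploiting the self-similarity of the subdivided triangles about $z_0$ under regular subdivision shows that the sector condition is preserved uniformly in $q$.

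The uniform $K$-quasiconformality bound comes from a local linear-algebra fact: if the holomorphic part of an affine map of a triangle of shape regularity at most $\sigma$ is confined to a sector of opening strictly less than $\pi$, then the antiholomorphic part has modulus strictly smaller than $|\partial\Phi|$ and the QC dilation is bounded by a constant $K=K(\eps',\sigma)$. Regular subdivision preserves $\sigma$ across refinement levels, so $K$ is $q$-independent. Gluing the per-face dilation bound across shared edges (where $\Phi^q$ is continuous by construction), combining with the boundary correspondence, and invoking a standard ACL-plus-degree argument yields that each $\Phi^q\colon\S\to\T$ is a $K$-QC homeomorphism onto the equilateral target.

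For the convergence statement, the family $\{\Phi^q\circ\Psi^{-1}\}$ is normal on $\T$ by the uniform $K$-bound and fixes the three corners, and the sector condition passes to any local-uniform subsequential limit $g$ in the sense that $\arg(\partial g)$ lies almost everywhere in a sector of opening $\pi-2\eps$ centered on $\arg\Psi'$. The main obstacle, and the step I would invest the most effort in, is upgrading this to the statement that every such $g$ must be conformal, hence—by the uniqueness of the equilateral uniformization with three fixed corners—the identity. I expect to achieve this by a Beltrami-coefficient argument: the convex constraint (iii), when translated into conditions on the dilatation $\mu_q$ of $\Phi^q\circ\Psi^{-1}$, restricts $\mu_q$ to a convex set whose weak-$\ast$ closure as $q\to\infty$ collapses to $\mu\equiv 0$, because the per-face argument constraint acts at an increasingly fine scale while the mean $\theta_f$ resolves to $\arg\Psi'$ pointwise. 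The remaining steps are bookkeeping around the polyhedral cone singularities and the one-sided boundary correspondence.
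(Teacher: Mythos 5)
There is a genuine gap in your constraint (iii) and in everything downstream of it. Your only per-face convex condition is a sector constraint on the holomorphic part $\alpha_f=\partial\Phi^q|_f$; you impose nothing on the antiholomorphic part $\beta_f=\bar{\partial}\Phi^q|_f$. The claim in your third paragraph --- that confining $\arg\alpha_f$ to a sector of opening less than $\pi$ forces $|\beta_f|<|\alpha_f|$ and a dilatation bound $K(\eps',\sigma)$ --- is false: for an affine map $z\mapsto \alpha z+\beta\bar{z}$ on a triangle, $\alpha$ and $\beta$ are independent linear functions of the three nodal values, so a map with $\alpha=1$ and $\beta=100$ satisfies every sector constraint while being orientation-reversing with unbounded distortion, irrespective of the triangle's shape regularity. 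Consequently your polytope $U^q$ is not contained in the $K$-quasiconformal maps, item (1) fails, normality of $\{\Phi^q\circ\Psi^{-1}\}$ is unavailable, and the Beltrami argument for item (2) has nothing to collapse: even with $\arg\alpha_f$ pinned exactly to $\arg\Psi'$, the admissible $\mu_f=\beta_f/\alpha_f$ ranges over all of $\C$, so the weak-$\ast$ limit need not vanish.

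The missing idea is the coupled constraint the paper builds the convex spaces from: fix per-face thresholds $K_j=1+2^{-cq\kappa_{i}}$ tending to $1$ (but more slowly than the distortion of the nodal interpolant of $\Psi$) and directions $\tau_j$, and impose $\abs{\beta_j}\leq\frac{K_j-1}{K_j+1}\,\re\parr{e^{-\i\tau_j}\alpha_j}$. This single second-order-cone condition is convex; it implies $\abs{\beta_j}\leq\frac{K_j-1}{K_j+1}\abs{\alpha_j}$, hence $\DCD\leq K_j$ and orientation preservation, which gives (1) with $K$ independent of $q$ and in fact $K_q\to 1$ --- exactly the hypothesis Theorem \ref{thm:approximation} needs for (2); and it still admits every affine map of distortion $k\leq K_j$ whose $\arg\alpha_j$ lies within $\cos^{-1}\parr{\frac{K_j+1}{K_j-1}\frac{k-1}{k+1}}$ of $\tau_j$, an angle tending to $\pi/2$ precisely because the interpolant of $\Psi$ has distortion $1+\O(2^{-q\kappa_i})=o(K_j-1)$. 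That is how the $\pm(\frac{\pi}{2}-\eps)$ knowledge of $\arg\Psi'$ is actually consumed, and Lemmas \ref{lem:optimality_of_FKtet}--\ref{lem:angle_restriction} show these sets are the \emph{maximal} convex subsets of the $K$-QC cone, so no purely angular sector condition of the kind you propose can work without also tying $\abs{\beta_j}$ to $\re\parr{e^{-\i\tau_j}\alpha_j}$. A secondary point: the distortion must be measured against the chart image triangle $\Delta(\vphi_i(v_{j_1}),\vphi_i(v_{j_2}),\vphi_i(v_{j_3}))$ obtained from the power-map charts $z\mapsto z^{\gamma_i}$, not against the rigidly unfolded face, or else the per-face distortion does not tend to $1$ near non-flat vertices and the feasibility of the interpolant fails there.
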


Let us clarify the assumption ``the argument of $\Psi'$ is known up-to an error of $\pm(\frac{\pi}{2}-\eps)$''. What we mean by that is that at every point $p\in \S$, we can choose (arbitrary) chart $(\vphi_i,\Omega_i)$, $\vphi_i:\Omega_i\subset\S\too\C$, where $p\in\Omega_i$, $z=\vphi_i(p)$, and that we can point an angle $\tau$ in the range $$\parr{\arg \brac{ (\Psi\circ\vphi_i^{-1})'(z) }-\parr{ \frac{\pi}{2}-\eps},\arg \brac{(\Psi\circ\vphi_i^{-1})'(z) }+\parr{\frac{\pi}{2}-\eps}},$$
for $\eps>0$ arbitrary small but fixed. Intuitively, if we know in which "half" of $\SU$ ($\SU$ denotes the unit circle) the argument of the derivative of the map we are looking for resides in, then we can approximate $\Psi$ via a convex program, namely looking for an element in a \emph{known} convex subset of the simplicial maps $U^q \subset \FF^{\S^q}$.

This theorem will be proved in several parts: first, in Section \ref{s:feasibility} we will show that $\FF^{\S^q}$ contains at-least one simplicial map that is: 1) quasiconformal, namely a homeomorphism with bounded conformal distortion, 2) its conformal distortion is converging to 1 with almost linear rate. Second, in Section \ref{s:approximation}, we will show that \emph{any} series of quasiconformal simplcial maps $\Phi^q$ that their conformal distortion converge to 1 converges to the uniformization map $\Psi$ as described in Theorem \ref{thm:main}. In Section \ref{s:algorithm} we will study the space of $K$-quasiconformal simplicial maps $\FF^{\S^q}_K\subset \FF^{\S^q}$ and characterize a collection of convex subsets $\set{\FF^{\S^q}_{K,\Tau}}$, $\Tau=(\tau_1,..,\tau_{|\F^q|}), \tau_j\in \SU$. We will further show that given an $\frac{\pi}{2}-\eps$ approximation of the argument of $\Psi'$, one can single out one of these convex subsets $\FF^{\S^q}_{K,\Tau^*}$ that will be non-empty. This will finish the proof of Theorem \ref{thm:main}.

Building upon Theorem \ref{thm:main} we will suggest two algorithms. The first algorithm is exhaustive but theoretically fully justified: it will test many candidates $\Tau$, one of which is guaranteed to lead to a non-empty space $\FF^{\S^q}_{K,\Tau}$. The number of candidates will be rather large but shown to be independent of $q$ (i.e., exponential in the number of original faces $|\F|=|\F^0|$). Due to the large number of candidates $\tau$ this algorithm will mainly have theoretical importance but only limited practical applicability. Nevertheless, as far as we are aware this is the first algorithm to approximate the uniformization map for general faced polyhedral surfaces.

The second algorithm will be greedy in nature: it will start with some arbitrary convex space $\FF^{\S^q}_{K,\Tau^0}$, and will iteratively move to other convex spaces $\FF^{\S^q}_{K,\Tau^1},\FF^{\S^q}_{K,\Tau^2},...$ When terminating successfully it guarantees an approximation to the uniformization map, enjoying all the properties of the first algorithm (and Theorem \ref{thm:main}). However, the drawback here is that we do not have a proof that it will always ends up with a non-empty space (i.e., terminate successfully). Nevertheless, it works well in practice and is much more computationally efficient than the (first) exhaustive algorithm.

We start by defining the smooth and discrete conformal structures of $\S$, set notations and a few preliminary lemmas.

\section{Smooth and discrete conformal structures}\label{s:smooth_and_discrete_conformal_structures}
As noted above, a polyhedral surface $\S$ admits a smooth (classical) conformal structure. We start by defining it.

%\begin{floatingfigure}[r]{0.3\textwidth}
\begin{wrapfigure}{r}{0.45\textwidth}
  \begin{center}
    \includegraphics[width=0.45\textwidth]{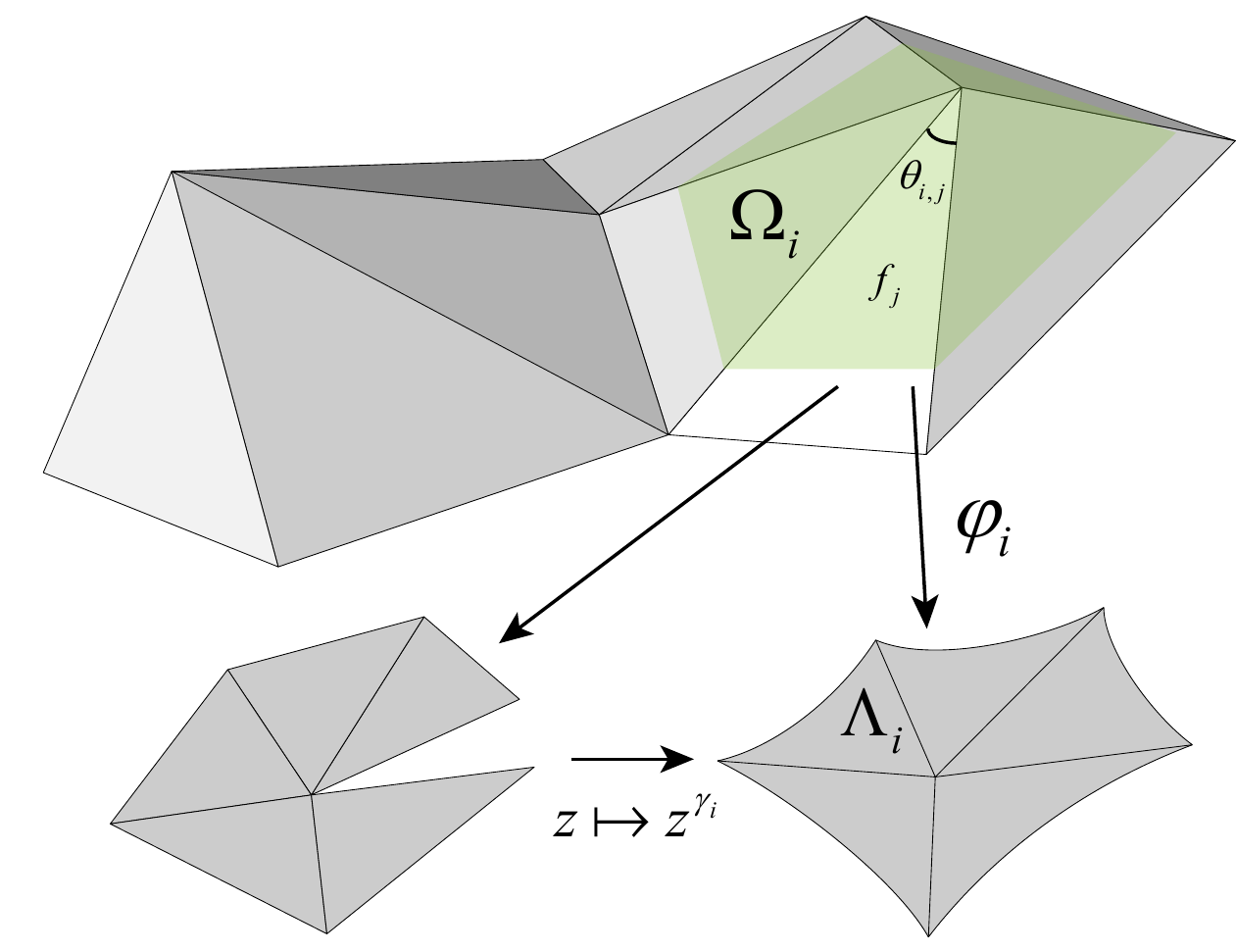}%\vspace{-0.2cm}
  \end{center}
  %\caption{The Projection approach.}
  %\label{fig:multi_connected}
\end{wrapfigure}
%\end{floatingfigure}
A conformal structure is defined by providing a conformal atlas, that is, a set of analytic coordinate charts \cite{farkas1992riemann}. We will define such an atlas already customized to our later constructions. We fix a constant $0 < \zeta < 1/4$, and for each vertex $v_i\in\V$, denote by $\nei_i$ the set of indices of 1-ring neighbors to vertex $i$.

We distinguish three types of vertices in our surface $\S$: interior, boundary, and corner. Interior vertices are vertices in the interior of $\S$, boundary are on the boundary but not one of the three corners $v_1,v_2,v_3$.

\paragraph{\textbf{Interior charts}} Let $v_i\in\V$ be an interior vertex. Set $\Theta_i=2\pi$. Set the domain $\Omega_i\subset \S$ to be the interior of the convex hull (over the surface) of the points $\zeta v_i + (1-\zeta)v_{i'}$, where $i'\in\nei_i$ (see inset, where $\Omega_i$ for one particular vertex is colored in green). We define the chart $\vphi_i:\Omega_i\too \C$ by first rigidly unfolding each triangle in $\Omega_i$ onto the plane, taking $v_i$ to the origin, and second, composing each (now planar) triangle with the map $z\mapsto z^{ \gamma_i }$, $\gamma_i=\frac{\Theta_i}{\theta_i}$ where $\theta_i=\sum_j \theta_{i,j}$ is the sum of angles at vertex $v_i$, where $\theta_{i,j}$ is the angle of face $j$ (adjacent to vertex $i$) at vertex $i$. This composition is made (possibly by incorporating rigid transformations in the plane) such that $\vphi_i$ is globally continuous, see \cite{stephenson2005introduction,Bobenko11} for more details and the inset for an illustration. Let us denote $\Lambda_i=\vphi_i(\Omega_i)\subset\C$.

%We will map the surface $\S$ onto the unit equilateral $\T=\Delta(t_1,t_2,t_3)$ defined as the convex hull of the triplet $t_\ell=e^{\i(\ell-1)2\pi/3}\in \C, \ell=1,2,3$, that is $\set{\sum_\ell \lambda_\ell t_\ell \mid \lambda_\ell\geq0, \sum_\ell \lambda_\ell=1}$.

\paragraph{\textbf{Boundary charts}} Let $v_i\in\V$ be a boundary vertex ($i\ne 1,2,3$). Set $\Theta_i=\pi$. The domain $\Omega_i$ is defined as the interior of the convex hull of $v_i$ and $\zeta v_i + (1-\zeta)v_{i'}$, $i'\in\nei_i$, and  $\vphi_i:\Omega_i\too\C$ is defined via mapping the neighborhood congruently to the plane as before, composing it with $z\mapsto z^{\gamma_i}$, where $\gamma_i=\frac{\Theta_i}{\theta_i}$.

\paragraph{\textbf{Corner charts}} For the three corner vertices $v_1,v_2,v_3$ we set $\Theta_i=\frac{\pi}{3}$ and define $\Omega_i$, $i=1,2,3$, similar to the boundary vertices' charts with the exception of using the mapping  $z\mapsto z^{\gamma_i}$, where $\gamma_i=\frac{\Theta_i}{\theta_i}$.

All the transition maps $\vphi_{i'}\circ\vphi_i^{-1}$ are conformal and therefore define a conformal structure over $\S$ :

% Let us prove a simple but useful lemma regarding the charts:
% \begin{lem}\label{lem:extend_charts}
% There exists a constant $\kappa>0$, such that for all charts $(\Omega_i,\vphi_i)$
% \end{lem}

\begin{lem}\label{lem:transition_maps_holo_and_bounded_derivative}
The transition maps $\vphi_{i,i'}=\vphi_i\circ \vphi_{i'}^{-1}$ are holomorphic.
\end{lem}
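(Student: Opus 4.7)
The plan is to write each chart $\vphi_i$ as an orientation-preserving isometric unfolding of $\Omega_i$ into the plane, followed by the power map $z\mapsto z^{\gamma_i}$, and then to observe that on $\Omega_i\cap\Omega_{i'}$ the composition $\vphi_i\circ\vphi_{i'}^{-1}$ steers clear of the only points where these ingredients fail to be holomorphic, namely the branch points $z=0$ of the two power maps.

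The first step is to exploit the truncation $0<\zeta<1/4$: by construction the only vertex of $\V$ lying in $\Omega_i$ is $v_i$ itself, so symmetrically $v_{i'}\notin\Omega_i$ and $v_i\notin\Omega_{i'}$. Hence $\Omega_i\cap\Omega_{i'}$ contains neither $v_i$ nor $v_{i'}$, which in particular gives $\vphi_{i'}(\Omega_i\cap\Omega_{i'})\subset\Lambda_{i'}\setminus\set{0}$. The second step is to verify holomorphy of $\vphi_i$ on $\Omega_i\setminus\set{v_i}$ by a local argument: given any $p$ in this set, choose a small simply connected open disk $D\ni p$ contained in $\Omega_i\setminus\set{v_i}$ and in at most two adjacent triangular faces of $\S$ sharing a radial edge through $v_i$; on $D$ the rigid unfolding is a single orientation-preserving planar isometry $z\mapsto a z+b$, $|a|=1$ (two such isometries that agree on a line segment must agree globally), hence holomorphic. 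Since the unfolded image of $D$ avoids the origin, a single branch of $z\mapsto z^{\gamma_i}$ is holomorphic on it, so $\vphi_i|_D$ is holomorphic as a composition of two holomorphic maps. The same reasoning applies to interior, boundary, and corner charts uniformly, with the only change being the value of $\Theta_i\in\set{2\pi,\pi,\pi/3}$; applied to the inverse power map, it also yields that $\vphi_{i'}^{-1}$ is holomorphic on $\Lambda_{i'}\setminus\set{0}$.

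Combining the two steps, on $\vphi_{i'}(\Omega_i\cap\Omega_{i'})$ the transition map $\vphi_{i,i'}=\vphi_i\circ\vphi_{i'}^{-1}$ is a composition of holomorphic maps, hence holomorphic, as claimed. I expect the only place requiring a bit of care to be the verification that the piecewise-defined unfolding really is a single planar isometry on $D$: this is where the orientation of the faces $\F$ enters the argument, ruling out any antiholomorphic contribution across a radial edge. Otherwise the whole construction is designed exactly so that the conical singularity at $v_i$ is absorbed by the factor $z\mapsto z^{\gamma_i}$ with $\gamma_i=\Theta_i/\theta_i$, and conformality elsewhere is inherited automatically from the Euclidean structure of the triangular faces.
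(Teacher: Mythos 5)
Your overall strategy is the same as the paper's: write each chart as an orientation-preserving development of $\Omega_i$ into the plane followed by a branch of $z\mapsto z^{\gamma_i}$, check via $\zeta<1/4$ that the overlap $\Omega_i\cap\Omega_{i'}$ avoids both cone points (so the branch points of the two power maps are never hit), and conclude that the transition map is locally a composition of holomorphic pieces. That part is fine, and your observation that $v_i\notin\Omega_{i'}$ and $v_{i'}\notin\Omega_i$ is exactly the right way to make precise the paper's parenthetical ``note that we avoid the origin.''

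There is, however, a genuine gap in your treatment of edge crossings. Your justification that the unfolding is a single planar isometry on a disk $D$ straddling a radial edge rests on the premise that the two per-face isometries agree on the developed image of the common edge. For an \emph{interior} chart this premise fails at exactly one radial edge: the development of the full $1$-ring lays the triangles out consecutively in a sector of total angle $\theta_i$, so the first and last triangles share an edge on the surface whose two developed images are the rays at angles $0$ and $\theta_i$ — these coincide only if $\theta_i\in 2\pi\Z$. Before the power map the piecewise unfolding is not even continuous across this ``closing'' edge; continuity of $\vphi_i$ there is restored only after applying $z\mapsto z^{\gamma_i}$ with two different branches on the two sides. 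Since the overlap $\Omega_i\cap\Omega_{i'}$ necessarily straddles the edge $[v_i,v_{i'}]$, for the neighbor $i'$ sitting across that closing edge your local argument does not apply, and ``two isometries agreeing on a segment agree globally'' cannot be invoked. To close the gap you need precisely what the paper alludes to with ``standard extension theorems of conformal maps'': on a disk straddling that edge the transition map is continuous, and holomorphic off the (analytic arc) image of the edge, hence holomorphic on the whole disk by Morera's theorem / removability of analytic arcs (equivalently, by a Schwarz-reflection–type identity of the two one-sided holomorphic extensions along the arc). With that one additional step your proof is complete; everywhere else (generic radial edges, boundary and corner charts, whose developments need no closing edge) your single-isometry argument is correct.
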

\begin{proof}
The holomorphy of the transition maps $\vphi_{i,i'}=\vphi_i\circ \vphi_{i'}^{-1}$ can be understood (see e.g.,\cite{Bobenko11}) from the fact that these maps are compositions of similarities $z\mapsto \alpha z+ \delta$ and the analytic maps (note that we avoid the origin) $z\mapsto z^{\vartheta\pi/\theta_i}$, $\vartheta=1/3,1,2$. The holomorphy across edges can be verified using standard extension theorems of conformal maps.
%The fact that the derivatives can be bounded by a universal constant can be explained as follows. There exists some positive $\eps>0$ such that each $\vphi_{i,i'}$ can be extended to $\D(z',\eps)=\set{z \mid |z-z'|<\eps }$ (we denote $\D(\xi,r)=\set{z\mid \abs{z-\xi}<r}$, $\D=\D(0,1)$) around every point $z'\in\vphi_{i'}(\Omega_{i'}\cap\Omega_{i})$. We can also find a constant $M>0$ sufficiently large such that $\abs{\vphi_{i,i'}(z')}\leq M$ for all $z'\in \cup_{z\in\vphi_{i'}(\Omega_{i'}\cap\Omega_{i})}\D(z,\eps)$. From Cauchy theorem
%\begin{eqnarray*}
%\abs{\vphi'_{i,i'}(z')}&=& \frac{1}{2\pi}\int_{\partial\D(z',\eps)}\abs{\frac{\vphi_{i,i'}(\xi)}{(\xi-z')^2}}\abs{d\xi}\\
%&\leq&\frac{1}{2\pi}M \frac{2\pi \eps}{\eps^2}\\
%&=&\frac{M}{\eps}.
% \end{eqnarray*}
\end{proof}

Now that we have a (smooth,classical) conformal structure, the notion of conformal mappings from $\S$ to the plane $\C$ is well-defined; a map $\Psi:\S\too \C$ is conformal if for every chart $(\Omega_i,\vphi_i)$, the map $\psi_i=\Psi \circ \vphi_i^{-1}$ defined over $\Lambda_i=\vphi_i(\Omega_i)$ is conformal in the classical sense.

The main object of this paper is to approximate the conformal mapping $\Psi$ that maps the polyhedral surface bijectively to the equilateral triangle $\T$. The existence and uniqueness of such a mapping is set by the uniformization theorem \cite{ahlfors2010conformal,farkas1992riemann}:
\begin{thm}
There exists a homeomorphism $\Psi:\S\too \T$ that is conformal in the interior of $\S$, $\Psi:\interior{\S}\too \interior{\T}$\footnote{$\interior{Q}$ denotes the interior of the set $Q$.}.  The map $\Psi$ is uniquely set once required to take $v_1,v_2,v_3$ to the corners $t_1,t_2,t_3$ of $\T$.
\end{thm}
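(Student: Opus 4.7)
The plan is to reduce this statement to the classical uniformization theorem for bordered Riemann surfaces of disk-type, then upgrade it with a boundary-extension step and a three-point normalisation for uniqueness.

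First, I would invoke Lemma \ref{lem:transition_maps_holo_and_bounded_derivative}: the collection $\set{(\Omega_i,\vphi_i)}$ is a holomorphic atlas, so $\interior{\S}$ inherits the structure of a Riemann surface. Since $\S$ is topologically a closed disk, $\interior{\S}$ is a simply connected non-compact Riemann surface. The classical uniformization theorem (or directly the Riemann mapping theorem transported through the atlas, cf.\ \cite{ahlfors2010conformal,farkas1992riemann}) then yields a conformal homeomorphism $\psi_0:\interior{\S}\too\mathbb{D}$ onto the open unit disk. Composing with the Schwarz--Christoffel biholomorphism $\mathbb{D}\too\interior{\T}$ produces a conformal homeomorphism $\Psi_0:\interior{\S}\too\interior{\T}$.

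Second, to extend to the boundary I would observe that the boundary and corner charts flatten $\partial\S$ to a real-analytic arc locally, so in the intrinsic conformal topology $\partial\S$ is a Jordan curve. Since $\partial\T$ is also a Jordan curve, Carath\'eodory's prime-end theorem gives a homeomorphic extension $\Psi:\S\too\T$. The particular choice $\Theta_i=\pi/3$ at the three corner vertices matches the interior angle of $\T$ at the corners $t_\ell$, so the chart-based map near $v_\ell$ is (locally) a sector-to-sector conformal bijection with matching opening angles, and no singular behaviour arises. For uniqueness I would use that the group of conformal automorphisms of $\T$ (pulled back from $\mathrm{Aut}(\mathbb{D})$) is three-real-parameter and acts simply transitively on positively oriented triples of boundary points; so prescribing $\Psi(v_\ell)=t_\ell$ for $\ell=1,2,3$ pins the map down.

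The main obstacle is the boundary-extension step: one must confirm that $\partial\S$, with the intrinsic conformal topology induced by the mixed interior/boundary/corner atlas, is actually a Jordan curve, so that Carath\'eodory's theorem applies. This is where the careful design of the atlas pays off: the boundary chart $\vphi_i$ at a non-corner boundary vertex uses the exponent $\gamma_i=\pi/\theta_i$, which precisely rectifies the sum of adjacent face angles into a flat half-plane sector, and the corner chart with $\gamma_i=(\pi/3)/\theta_i$ does the analogous thing for an opening angle $\pi/3$. Consequently $\partial\S$ becomes locally a real-analytic arc in every chart, and the global boundary curve is a Jordan curve. Once this verification is made, the rest of the argument is a textbook assembly of the Riemann mapping theorem, Carath\'eodory extension, and the three-point normalisation of $\mathrm{Aut}(\mathbb{D})$.
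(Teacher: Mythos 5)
Your proposal is correct, and it supplies in full the standard argument that the paper only cites: the paper states this theorem without proof, referring to \cite{ahlfors2010conformal,farkas1992riemann}, and your route (holomorphic atlas from Lemma \ref{lem:transition_maps_holo_and_bounded_derivative}, Riemann mapping to $\mathbb{D}$, Schwarz--Christoffel to $\interior{\T}$, Carath\'eodory extension, three-point normalisation via $\mathrm{Aut}(\mathbb{D})$) is exactly the intended one. Two small points deserve a sentence each if you write this out: first, you should say why the simply connected non-compact surface $\interior{\S}$ is hyperbolic rather than conformally $\mathbb{C}$ (e.g.\ pass to the conformal double of $\S$, a compact genus-zero surface $\cong\hat{\mathbb{C}}$, inside which $\interior{\S}$ sits as a proper simply connected subdomain whose complement is not a single point); second, Carath\'eodory's theorem is a statement about planar Jordan domains, so the boundary-extension step should be run after this transplantation -- the doubled picture realises $\interior{\S}$ as a Jordan domain in $\hat{\mathbb{C}}$, since your chart computation shows $\partial\S$ is locally a Jordan arc. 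Note also that the angle-matching $\Theta_i=\pi/3$ at the corners is not needed for the bare homeomorphic extension (Carath\'eodory handles corners of any opening); the paper exploits that matching later, in the proof of Theorem \ref{thm:feasibility_result}, where the Schwarz reflection at the corners requires it.
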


We will approximate $\Psi$ by constructing quasiconformal simplicial mappings $\Phi^q:\S^q\too \T$ from subdivided versions of $\S$ to $\T$.

%\begin{floatingfigure}[l]{0.3\textwidth}
\begin{wrapfigure}{l}{0.3\textwidth}
  \begin{center}
    \includegraphics[width=0.3\textwidth]{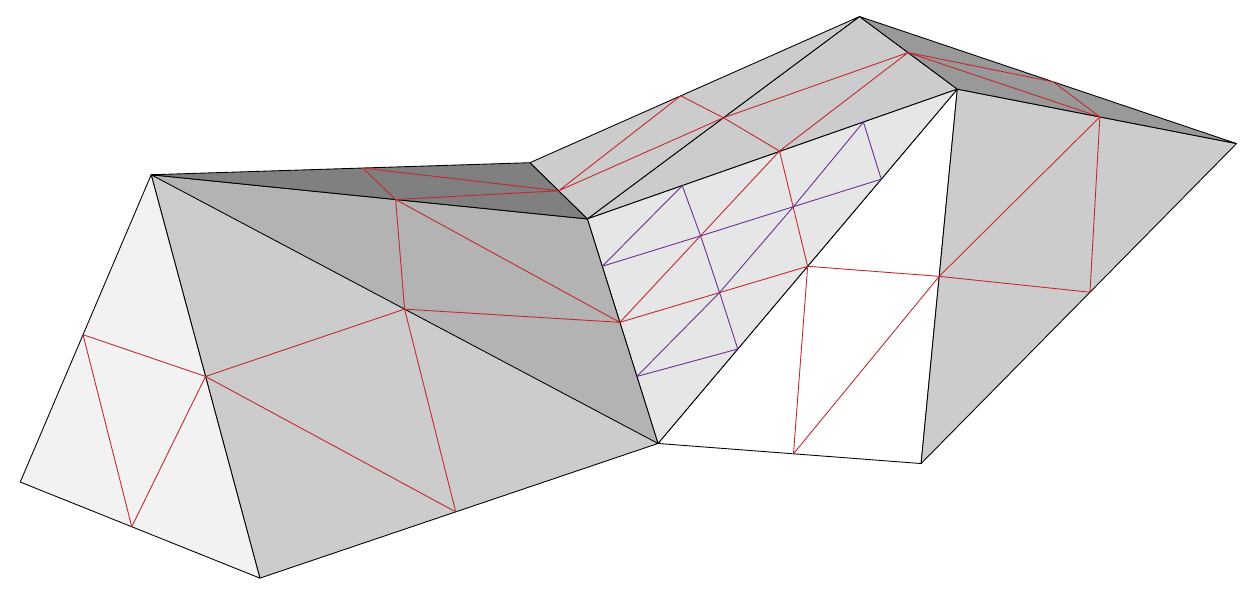}%\vspace{-0.2cm}
  \end{center}
  %\caption{The Projection approach.}
  %\label{fig:multi_connected}
\end{wrapfigure}
%\end{floatingfigure}
The subdivided triangulations $\S^q=\parr{\V^q,\E^q,\F^q}, q=0,1,2,...$ of $\S$ are constructed by the standard $1-4$ subdivision rule in each triangle, where $\S^0=\S$. For example, in the inset the red mesh shows $\S^1$, and we show in purple all the faces $f_{j'}\in \F^2 \cap f_j $ inside one face $f_j\in\F$ of $\S$.

We now turn to define the \emph{discrete} conformal structure over $\S^q$. A discrete conformal structure is basically assigning angles to the corners of each triangle such that the angle sum of each triangle is $\pi$. Equivalently, we can embed each triangle in the Euclidean plane and think of it up-to a similarity transformation. A \emph{consistent} discrete conformal structure will approximate the smooth one as the surface is refined. A simple way to do it is by mapping each triangle to the plane with the charts $\vphi_i$ and taking its image Euclidean triangle to define its conformal structure. Obviously, using different atlas, or assignments of triangles to charts will lead to different discrete conformal structure, but as we will show, at the limit it won't matter. So we will use the Atlas defined above, and describe an arbitrary assignment of faces to charts in $\F^q$.

We start by associating, for $q = 2$, each triangle $f_j\in\F^2$ to some chart $\set{\Omega_i,\vphi_i}$, $i=i_j$, by making sure that $\closure{f_j}\subset \Omega_i$.

%\begin{floatingfigure}[r]{0.2\textwidth}
\begin{wrapfigure}{r}{0.2\textwidth} %\vspace{-0.5cm}
  \begin{center}
    \includegraphics[width=0.2\textwidth]{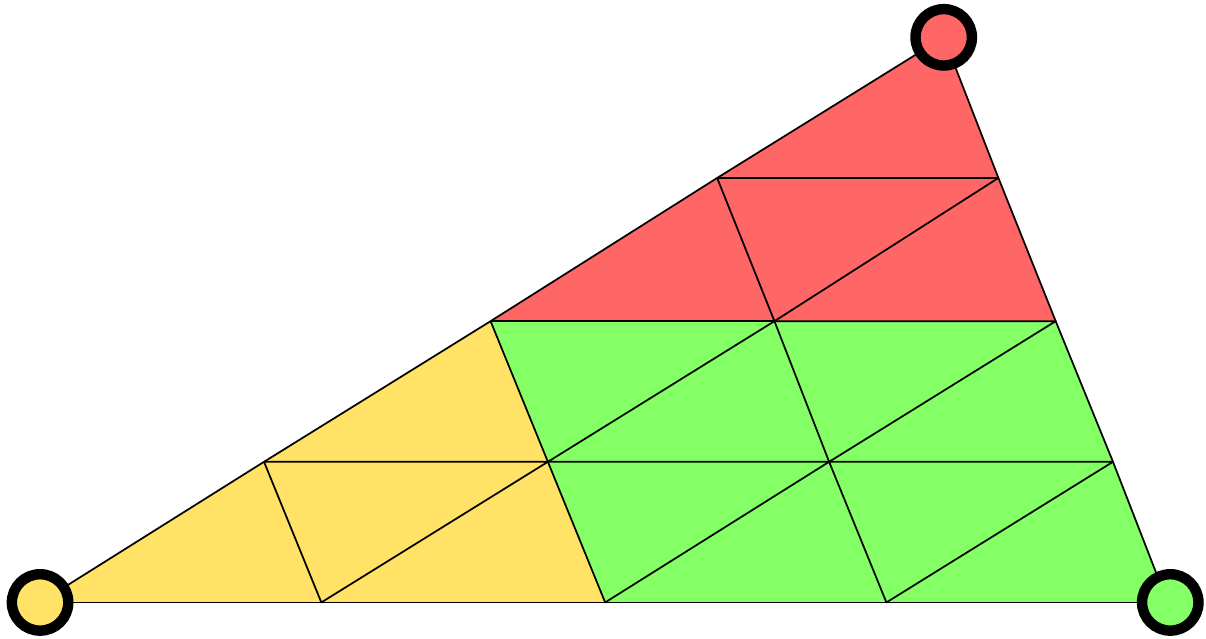}%\vspace{-0.2cm}
  \end{center}
  %\caption{The Projection approach.}
  %\label{fig:multi_connected}
\end{wrapfigure}
%\end{floatingfigure}
For example, we can use the assignment rule shown in the inset figure (each color indicates association of that colored face to a different vertex of the triangle and hence to a difference chart).

%\begin{floatingfigure}[l]{0.3\textwidth}
\begin{wrapfigure}{r}{0.5\textwidth}
  \begin{center}
    \includegraphics[width=0.5\textwidth]{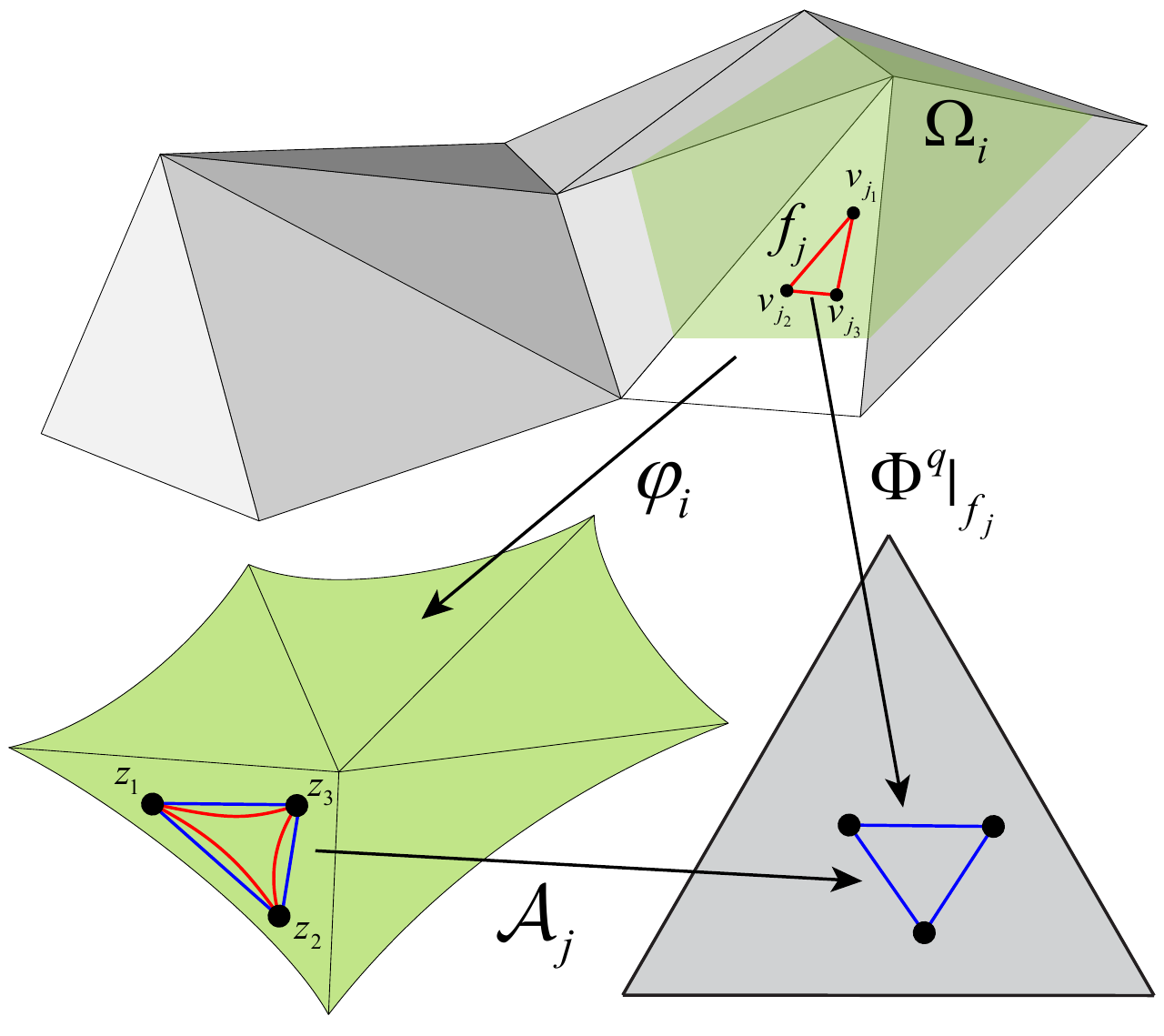}%\vspace{-0.2cm}
  \end{center}
  %\caption{The Projection approach.}
  %\label{fig:multi_connected}
\end{wrapfigure}
%\end{floatingfigure}
For any $f_{j}\in \F^{q}, q> 2$ we associate the chart based on the triangle's (unique) ancestor in level $q=2$. We define the discrete conformal structure for each face $f_j\in \F^q, q\geq 2$ by mapping the face's vertices $v_{j_1},v_{j_2},v_{j_3}$ to the plane, that is $z_\ell = \vphi_i(v_{j_\ell}), \ell=1,2,3$, $\set{\Omega_i,\vphi_i}$ the associated chart $i=i_j$, and using the Euclidean triangle $\Delta(z_1,z_2,z_3)$ to define the discrete conformal structure of $f_j$. The inset below shows in blue (bottom-left) the triangle $\Delta(z_1,z_2,z_3)$.

Once we defined the discrete conformal structure we can define the \emph{discrete conformal distortion} of a simplicial map $\Phi^q:\S^q\too\C$. We look per face $f_j\in\F^q$, $q\geq 2$, and denote by $\A_j$ the affine map mapping the triangle $\Delta(z_1,z_2,z_3)$ (remember that $z_\ell=\vphi_i(v_{j_\ell})$) to the triangle $\Delta(\Phi^q(v_{j_1}),\Phi^q(v_{j_2}),\Phi^q(v_{j_3}))$. We will also refer to $\A_j$ as $\Phi^q\mid_{f_j}$ expressed ``in the local coordinate chart''. Then, the discrete conformal distortion $\DCD(\Phi^q\mid_{f_j})$ of $\Phi^q\mid_{f_j}$ is defined by $$\hspace{-9cm}\DCD(\Phi^q\mid_{f_j})=\CD(\A_j),$$
where $\CD(\A_j)$ denotes the standard conformal distortion of the planar affine map $\A_j$, namely, $\CD(\A_j)=\frac{\Sigma_j}{\sigma_j}$, the ratio of the larger to smaller singular values $\Sigma_j\geq\sigma_j\geq 1$ of the linear part of the planar affine map $\A_j$. See the inset figure for an illustration.

The discrete conformal distortion of the full simplicial map $\Phi^q$ is accordingly defined by
\begin{equation}\label{e:discrete_conformal_distortion}\hspace{-9cm}
  \DCD(\Phi^q)=\max_{f_j \in \F^q}\DCD(\Phi^q\mid_{f_j}).
\end{equation}
For later use, we denote by $\DCD_{\Omega}(\Phi^q)$, $\Omega\subset \S$, the discrete conformal distortion of the map $\Phi^q$ restricted only to triangles contained in the set  $\Omega$, that is
\begin{equation}\label{e:discrete_conformal_distortion_Omega}%\hspace{-5cm}
    \DCD_\Omega(\Phi^q)=\max_{f_j \in \F^q , f_j\subset \Omega}\DCD(\A_j).
\end{equation}
Let us denote by $\FF^{\S^q}$ the space of simplicial maps (i.e., continuous  piecewise affine) mapping the triangulation $\S^q$ to the plane, satisfying the boundary constraints of mapping $\partial \S^q$ ($\partial\S^q$ denotes the boundary of the polyhedral surface $\S^q$) bijectively to $\partial \T$ and taking the corner vertices $v_1,v_2,v_3$ to the triangle's corners $t_1,t_2,t_3$. Further define the subset $\FF^{\S^q}_\K \subset \FF^{\S^q}$, $\K=(K_1,K_2,...,K_{|\F^q|})$ to include only orientation-preserving homeomorphisms $\Phi^q\in \FF^{\S^q}_\K$ such that $\DCD(\Phi^q\mid_{f_j})\leq K_j$. Denote $|\K|=\max{K_j}$, and by $\FF^{\S^q}_K$ we will mean $\cup_{\abs{\K}\leq K}\FF^{\S^q}_\K$, that is, simplicial maps with maximum bound $K$ on its conformal distortion per face.

There are two technical issues to be taken care-of for later constructions.
First, let us highlight a small technical property of the charts and the association rule of faces we have defined that will be used in Section \ref{s:feasibility} when we approximate the uniformization map with simplicial map: we show that there exists some positive gap between the triangles and the boundary of the charts' they are associated with. Denote the disk $\D(z,r)=\set{w\mid \abs{w-z}<r}$.
\begin{lem}\label{lem:epsilon_distance_to_charts_boundary}
There exists some constant $\eps_0>0$, such that $$\cup_{p\in f_j}\D(\vphi_i(p),\eps_0)\subset \Lambda_i=\vphi(\Omega_i),$$ for all $f_j\in\F^q, q\geq2$, associated with chart $\set{\Omega_i,\vphi_i}$, $i=i_j$
\end{lem}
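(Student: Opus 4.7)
The plan is to reduce the statement to the finite problem at level $q=2$ and then transport it to all higher levels through the subdivision hierarchy. The key observation is that the chart-to-face assignment for $f_{j'}\in\F^q$, $q>2$, was defined precisely as the chart of the unique level-$2$ ancestor $f_j\in\F^2$ of $f_{j'}$. Since the $1$-$4$ subdivision rule guarantees $f_{j'}\subset\closure{f_j}$ as subsets of $\S$, we have $\vphi_i(f_{j'})\subset\vphi_i(\closure{f_j})$ whenever $i=i_{j'}=i_j$. Hence any $\eps_0>0$ that works uniformly for all level-$2$ faces automatically works for every descendant face, and the entire lemma reduces to the case $q=2$.

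For $q=2$, the number of faces is finite. For each such $f_j\in\F^2$ the association rule was chosen so that $\closure{f_j}\subset\Omega_i$, where $i=i_j$. The chart map $\vphi_i$ is continuous on $\Omega_i$, so $\vphi_i(\closure{f_j})$ is a compact subset of the open planar set $\Lambda_i=\vphi_i(\Omega_i)\subset\C$. A standard metric-space compactness argument then produces $\delta_j>0$ such that the $\delta_j$-neighborhood of $\vphi_i(\closure{f_j})$ is contained in $\Lambda_i$, i.e. $\cup_{p\in f_j}\D(\vphi_i(p),\delta_j)\subset\Lambda_i$. Setting $\eps_0=\min_{f_j\in\F^2}\delta_j$, which is strictly positive as a minimum over a finite set, finishes the base case, and by the previous paragraph it works uniformly for all $q\geq 2$.

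The argument is essentially bookkeeping rather than analysis: no geometric estimate on the distortion of $\vphi_i$ or on the shape of the subdivided triangles is needed. The only conceptual point worth checking, and therefore the ``main obstacle'' such as it is, is that the chart assignment is genuinely inherited by all descendants — otherwise one could not replace the infinite family $\bigcup_{q\geq 2}\F^q$ by the finite set $\F^2$. This is built into the definition of the association rule given immediately before the lemma statement, so once that is made explicit the proof is concluded by compactness and finiteness.
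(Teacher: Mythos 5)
Your proof is correct and follows essentially the same route as the paper: reduce to the finite case $q=2$ via the inheritance of the chart assignment by descendants, then extract a uniform $\eps_0$ from finiteness. The only cosmetic difference is that the paper first bounds the gap between $f_j$ and $\partial\Omega_i$ on the surface (using $\zeta<1/4$) and then transports it through $\vphi_i$ via a derivative bound, whereas you argue directly in $\Lambda_i$ by compactness of $\vphi_i(\closure{f_j})$ inside the open set $\Lambda_i$ — which is, if anything, slightly cleaner.
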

\begin{proof}
It is enough to prove the inclusion for $q=2$. Take arbitrary face $f_j\in\F=\F^0$, and one of its vertices $v_i\in\V=\V^0$. Since when defining $\Omega_i$ we took $0 < \zeta < 1/4$ we see by the association procedure that every triangle $f_{j'}\in\F^q$ is far from $\partial\Omega_i$ at least by some constant $\eps_j>0$. When applying $\vphi_i$, since it has bounded derivatives (when considering it as map from the Euclidean faces to the complex plane) away from the vertex $v_i$, we have some other $\eps'_j>0$ bounding this distance from below. Taking the minimum $\eps'_j$ over the (finite) set of faces in $\F$ and their vertices we finish the argument.
\end{proof}

A second issue is that we need to make sure that approximating the power maps $h:z\mapsto z^\gamma$, $\gamma>0$ with simplicial maps (via sampling at the vertices and extending linearly) are $K$-quasiconformal with some universal bound on their distortion $K\geq 1$, independent of subdivision level $q$. For an example of a simplicial power map see Figure \ref{fig:power_map}.  \begin{figure}[h]
  % Requires \usepackage{graphicx}
    \includegraphics[width=0.7\columnwidth]{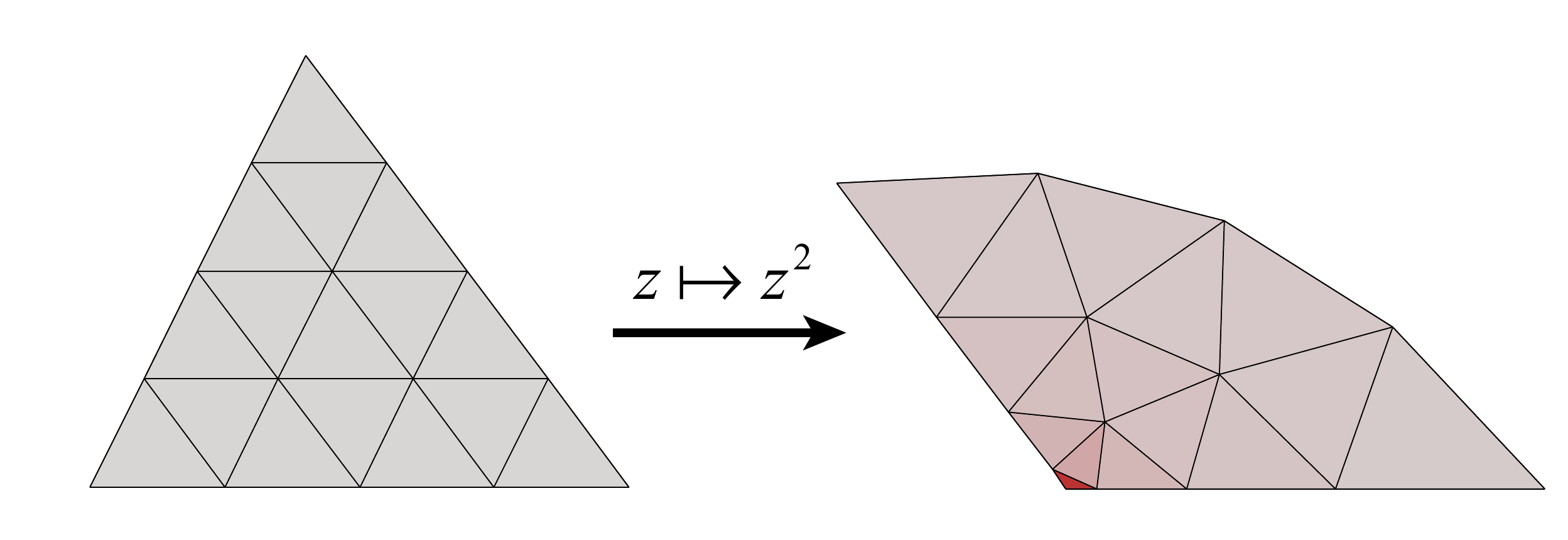}%\vspace{-0.2cm}\\
  \caption{The simplicial map $h^q,q=2$ sampled from $h(z)=z^2$.}\label{fig:power_map}
\end{figure}

We conjecture the following:
\begin{conj}\label{con:discretization_of_power_map}
Let $T=\Delta(\xi,0,\eta)$ be a triangle and denote the angle $\theta=\measuredangle(\xi,0,\eta)$. Further let $T^q$ be the $q^{th}$ level of regular 1-4 subdivision of $T$. Denote by $h(z)=z^\gamma$ the power map, and assume that $\gamma \theta < \pi$. Then, the simplicial maps $h^q$ defined by sampling $h(z)$ over the vertices of $T^q$ and extending by linearity are homeomorphisms that satisfy $\CD(h^q)\leq K$ for some $K\geq 1$ independent of $q$.
\end{conj}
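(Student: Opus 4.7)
The plan is to combine the scale invariance of $h(z)=z^\gamma$ under radial dilations, $h(\lambda z)=\lambda^\gamma h(z)$, with the self-similarity of the $1$--$4$ subdivision at the corner $0$. Write $R_k := T/2^k$, the sub-triangle of $T$ at the origin obtained after $k$ subdivisions, so $R_0=T \supset R_1 \supset \cdots$. At level $q$ the subdivision $T^q$ is the disjoint union of the $(q-1)$-level refinement of $R_1$ and the $(q-1)$-level refinement of the outer region $T\setminus R_1=T_A\cup T_B\cup T_C$ (the three non-corner children of the first $1$--$4$ split). Denote by $L_\tau$ the restriction of $h^q$ to a face $\tau\in T^q$ and set $K_q:=\max_{\tau\in T^q}\CD(L_\tau)$.

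First I would bound the distortion uniformly on the outer region. On the closed set $T_A\cup T_B\cup T_C$ the distance to the origin is at least some $c>0$, so $h$ is $C^\infty$ there with $|h'|$ bounded above and below and $\|h''\|_\infty<\infty$. Every triangle produced by $1$--$4$ refinement is similar to $T$, hence shape-regular of diameter $O(2^{-q})$. A standard affine-interpolation estimate for a $C^2$ map with non-vanishing Jacobian on a shape-regular triangle $\tau$ then gives $\CD(L_\tau)\leq 1+C\,\mathrm{diam}(\tau)$ with $C$ depending only on $T$ and $\gamma$, yielding a uniform outer bound $\CD(L_\tau)\leq K_{\mathrm{out}}$ on every such $\tau$, at every refinement level.

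On the inner region $R_1$ I would invoke self-similarity. Because $h(z/2)=2^{-\gamma} h(z)$ and the $(q-1)$-refinement of $R_1$ is a $(1/2)$-scaled copy of $T^{q-1}$, the piecewise-linear interpolant of $h$ on $R_1^{q-1}$ equals the piecewise-linear interpolant of $h$ on $T^{q-1}$ pre-composed with $z\mapsto 2z$ and post-composed with multiplication by $2^{-\gamma}$. Both are similarities, which preserve the conformal distortion of any affine map, so the distortion of $L_\tau$ on any $\tau\subset R_1^{q-1}$ equals that of the corresponding face in $T^{q-1}$. Combining with the outer estimate gives the recursion $K_q\leq\max(K_{\mathrm{out}},K_{q-1})$, and since the base case $K_1$ is finite by direct inspection of four explicit triangles, induction gives $K_q\leq\max(K_1,K_{\mathrm{out}})$ for all $q$.

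For the homeomorphism claim, every face's affine map is orientation-preserving: outer faces have distortion close to $1$ and inherit the orientation of $h$ itself (which is locally conformal there), while the innermost face $R_q$ has signed image area proportional to $|\xi\eta|^\gamma\sin(\gamma\theta)$, strictly positive since $0<\gamma\theta<\pi$. Combined with the fact that $\partial T$ maps injectively via $h$ onto the boundary of the image polygon, a standard degree-counting argument shows $h^q$ is a global homeomorphism onto its image. The main obstacle is the quantitative interpolation estimate in Step 2 on the outer triangles that sit adjacent to the shared edge with $R_1$, where $h'$ attains its largest values on $T\setminus R_1$; keeping $C$ uniform in $q$ requires a careful balance between $\|h''\|_\infty$ on shrinking neighborhoods of that shared edge and the shape regularity inherited from the $1$--$4$ rule, which is where the non-trivial quantitative work lies.
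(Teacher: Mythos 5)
First, note that the statement you are proving appears in the paper as Conjecture \ref{con:discretization_of_power_map} and is \emph{not} proved there; the paper only establishes the weaker Lemma \ref{lem:discretization_of_power_map}, which adds the hypotheses that $T$ is isosceles, $\lceil\gamma\rceil\theta<\frac{\pi}{2}$ and $\theta<60.4^\circ$. Your skeleton --- split $T^q$ into the outer region $T\setminus R_1$ and the corner copy $R_1$, control the outer part by smoothness of $h$ away from $0$, transport the bound into $R_1$ by the exact self-similarity $h(z/2)=2^{-\gamma}h(z)$, and induct --- is precisely the skeleton of the paper's appendix proof (its $T_{far}$/$T_{near}$ decomposition and the identity $h(az)=a^\gamma h(z)$). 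So the distortion recursion $K_q\le\max(K_{\mathrm{out}},K_{q-1})$ is sound, but only \emph{conditional on $h^q$ being a homeomorphism}, which is exactly how the paper phrases it.

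The genuine gap is in your homeomorphism step, and it sits exactly where the paper needs its extra hypotheses. The interpolation estimate $\CD(L_\tau)\le 1+C\,\mathrm{diam}(\tau)$ is only meaningful once $\mathrm{diam}(\tau)$ is small compared with the distance from $\tau$ to the origin; by self-similarity, at every level $q$ the annulus $R_{q-k}\setminus R_{q-k+1}$ for small $k$ contains faces whose diameter is \emph{comparable} to their distance to $0$, and for these the affine interpolant of $z^\gamma$ is far from conformal and may a priori degenerate or reverse orientation. Unwinding your recursion, everything reduces to verifying that the finitely many faces of $T^1,\dots,T^{Q-1}$ lying in $T\setminus R_1$ --- in particular the three non-corner children of the first split and their descendants adjacent to the rays $[0,\xi]$ and $[0,\eta]$ --- are not flipped. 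You dismiss this as ``direct inspection of four explicit triangles,'' but whether these faces flip depends delicately on $\gamma$ and $\theta$; it is not a data-independent finite check, and it is precisely here that the paper expends its effort (the case analysis into corner, red, gray and green triangles, the binomial-series lower bound for $\re\parr{(1+\xi)^\gamma}-1$, and the function $\Upsilon(\theta)$), succeeding only under $\lceil\gamma\rceil\theta<\pi/2$, $\theta<60.4^\circ$ and the isosceles assumption. Your positivity of the signed area of the single corner face (which does follow from $0<\gamma\theta<\pi$) covers one of the $4^q$ faces; the degree-counting argument you invoke requires all of them. As written, your argument therefore does not prove the Conjecture; it reproduces the conditional half of the paper's Lemma and leaves open exactly what the paper leaves open.
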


For our needs it is enough to prove the following, slightly weaker, result:

\begin{lem}\label{lem:discretization_of_power_map}
Let $T=\Delta(\xi,0,\eta)$ be an isosceles triangle ($|\xi|=|\eta|$) and denote the angle $\theta=\measuredangle(\xi,0,\eta)$. Further let $T^q$ be the $q^{th}$ level of regular 1-4 subdivision of $T$. Denote by $h(z)=z^\gamma$ the power map, and assume that $\lceil \gamma \rceil \theta < \frac{\pi}{2}$, and that $\theta < 60.4^\circ$. Then, the simplicial maps $h^q$ defined by sampling $h(z)$ over the vertices of $T^q$ and extending by linearity are homeomorphisms that satisfy $\CD(h^q)\leq K$ for some $K\geq 1$ independent of $q$.
\end{lem}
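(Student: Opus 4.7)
The proof plan rests on exploiting the self-similarity of the power map $h(z)=z^\gamma$ at the apex $0$. The key observation is that one step of $1$-$4$ subdivision decomposes $T$ into (i) the ``apex'' sub-triangle $\tfrac{1}{2}T$ and (ii) three ``outer'' sub-triangles $T_1,T_2,T_3$ whose closures are bounded away from $0$. Since $h(z/2)=2^{-\gamma}h(z)$, the restriction of the simplicial interpolant $h^q$ to $\tfrac{1}{2}T$ is equivalent, via similarities on both source and target, to $h^{q-1}$ on $T$. Iterating this identification, every face of $T^q$ is similarity-equivalent (and hence has the same $\CD$) to either the single ``tip'' affine piece $\Delta(0,\xi,\eta)\to\Delta(0,\xi^\gamma,\eta^\gamma)$, or to a sub-triangle of $h^m|_{T_i}$ for some $1 \leq m \leq q$ and $i \in \{1,2,3\}$. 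The first has a fixed finite $\CD$ whenever its target is non-degenerate, which is ensured by $\lceil\gamma\rceil\theta<\pi/2 \Rightarrow \gamma\theta<\pi$. The whole problem thus reduces to bounding $\CD(h^m|_{T_i})$ uniformly in $m\geq 1$ and $i\in\{1,2,3\}$.

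On each outer triangle $T_i$, which is compactly contained in $\C\setminus\{0\}$, $h$ is analytic with $|h'|$ bounded both above and below and $|h''|$ bounded above. Classical $W^{1,\infty}$ finite-element interpolation estimates on the shape-regular $1$-$4$ subdivision then give
\[
\bigl\|h - h^m\bigr\|_{W^{1,\infty}(T_i)} \leq C\cdot 2^{-m}.
\]
Because $\nabla h$ is a similarity (a nonzero complex scalar) at every point of $T_i$, for $m \geq M_0$ (some threshold) the Jacobian of $h^m$ on every sub-triangle of $T_i$ is a small perturbation of a similarity and hence has $\CD\leq 2$. For the finite range $1 \leq m < M_0$ only finitely many sub-triangles arise, and I would verify directly that each maps to a non-degenerate, positively oriented image triangle, giving a finite constant $K_{\text{small}}$. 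Taking $K=\max(K_{\text{small}},\,2,\,\CD\text{ of the tip affine piece})$ then gives the desired uniform bound.

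The main obstacle I anticipate is the small-$m$ verification, which cannot be handled by Taylor-type smallness and must rest on sharp geometric estimates specific to the power map. This is where the hypotheses $\lceil\gamma\rceil\theta<\pi/2$ and $\theta<60.4^\circ$ enter: the first bounds the total angular spread of $h(T)$ strictly below $\pi/2$, ruling out any image triangle being oriented against its source; the second, combined with the isosceles symmetry $|\xi|=|\eta|$, is likely extracted from a sharp numerical computation ensuring that even the coarsest outer-sub-triangle images are non-degenerate and positively oriented. Verifying these coarse-mesh positivity conditions by an explicit geometric argument, rather than merely asymptotically, is the technical heart of the proof; the self-similarity reduction is what makes such a finite, case-by-case check sufficient to produce a bound uniform in $q$.
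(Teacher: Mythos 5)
Your overall architecture coincides with the paper's: peel off the apex sub-triangle using the self-similarity $h(az)=a^{\gamma}h(z)$, so that the level-$q$ interpolant on $\tfrac12 T$ reduces (up to similarities on source and target) to the level-$(q-1)$ interpolant on $T$; handle the region bounded away from the origin by an interpolation estimate (the paper uses Lemma \ref{lem:sampling_conformal_with_triplet} to get $\CD\leq 2$ there once the level is large enough); and observe that what remains is a controlled collection of coarse configurations. This reduction is sound and is exactly the paper's induction $\CD(\chi^{q+1}\mid_{T_{near}})\leq\CD(\chi^{q})$ together with $\CD(\chi^{q+1}\mid_{T_{far}})\leq 2$.

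The gap is that the step you yourself flag as ``the technical heart'' --- non-degeneracy and positive orientation of the sub-triangles near the apex --- is where essentially all of the content of the lemma lives, and it cannot be discharged by a ``direct verification'': although for each fixed $(\gamma,\theta)$ only finitely many coarse triangles occur below your threshold $M_0$, the statement must hold for a continuum of admissible $(\gamma,\theta)$ (and $M_0$ itself depends on them), so a uniform analytic argument is required, not a case check. The paper supplies it as follows: every non-corner triangle is normalized by the scaling symmetry to $\Delta(1,1+\eta,1+\xi)$ (so in fact no restriction to coarse levels is needed for the orientation argument); the cases $\re\xi<0$ and $\re\xi>0$ are treated separately; and in the hard case one proves $\re(1+\xi)^{\gamma}>1$ via the binomial series. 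There the hypothesis $\lceil\gamma\rceil\theta<\pi/2$ is used to make the first $\lceil\gamma\rceil-1$ terms nonnegative (since $\cos((j+1)\varphi)\geq\cos(\lceil\gamma\rceil\theta)\geq 0$), the tail is bounded by $-\log(1-r)/r-1$ with $r=|\xi|\leq (2(1+\cos\theta))^{-1/2}$ coming from the isosceles hypothesis and the cosine law, and the resulting function $\Upsilon(\theta)$ is positive precisely for $\theta<60.4^{\circ}$ --- which is where that numerical hypothesis originates. Your proposal correctly reduces the lemma to this estimate but does not prove it; as written it is a reduction to the hardest part rather than a proof.
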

The proof for this lemma is rather technical and therefore deferred to Appendix \ref{a:auxilary_lemmas}.
\begin{floatingfigure}[r]{0.4\textwidth}
%\begin{wrapfigure}{r}{0.2\textwidth}%\vspace{-0.5cm}
  \begin{center}
    \includegraphics[width=0.4\textwidth]{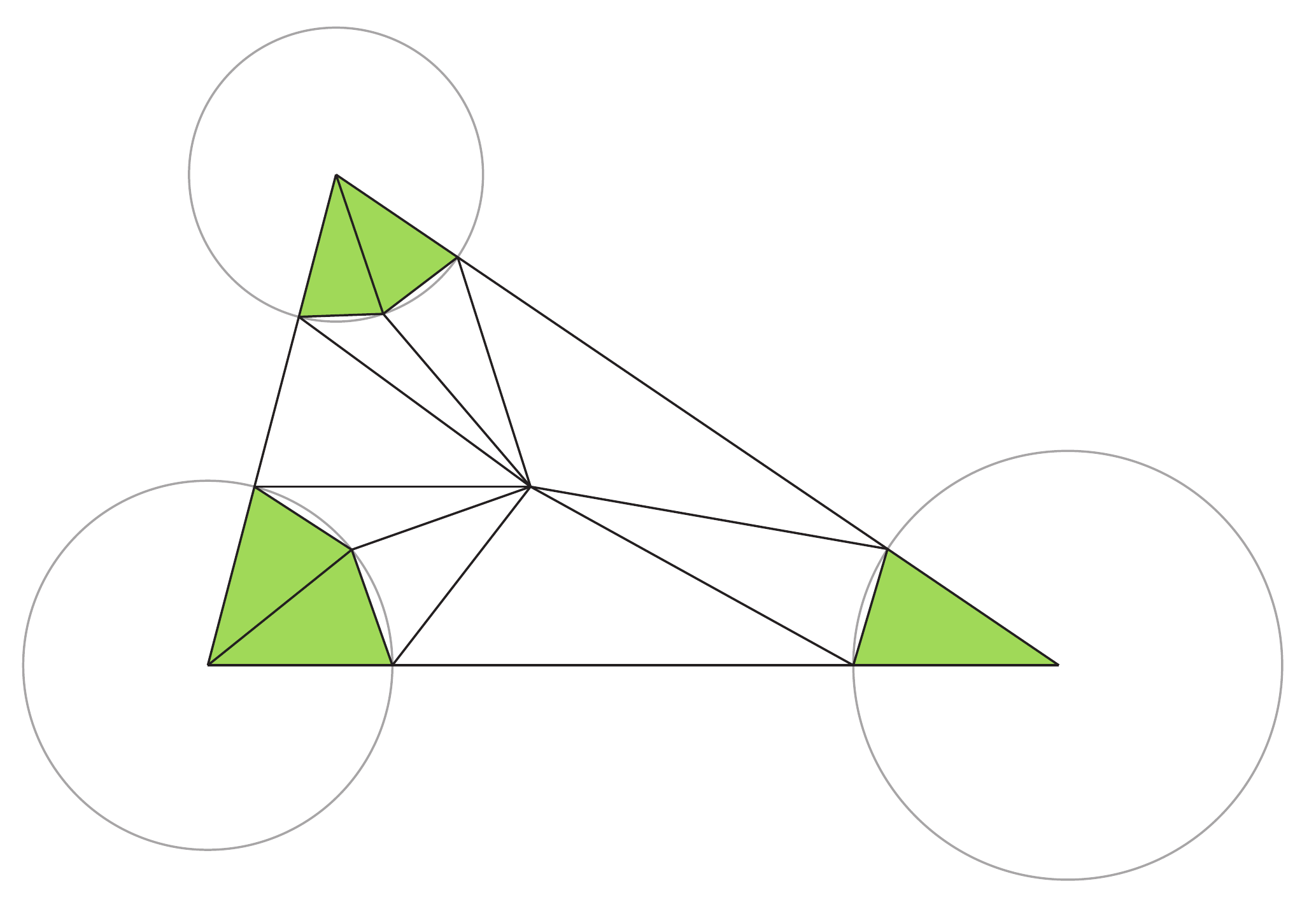}%\vspace{-0.2cm}
  \end{center}
  %\caption{The Projection approach.}
  %\label{fig:multi_connected}
%\end{wrapfigure}
\end{floatingfigure}

Building upon this lemma, we can (without loosing generality) subdivide each of the original triangles $f_j\in\F$ of the polyhedral surface $\S$, so that the two conditions $\lceil\gamma\rceil\theta_{i,j}<\frac{\pi}{2}$, and $\theta_{i,j}<60.4^\circ$, where $\gamma=\frac{\Theta_i}{\theta_i}$ are satisfied. We can also guarantee that every triangle that touches one of the original vertices $\V$ is isosceles (all the other triangles have all flat vertex angles\footnote{By flat vertex angles we mean the angle sum around a vertex is $2\pi$.}  and so their charts are rigid congruencies and hence quasiconformal). Such subdivision is shown in the inset. Note that this subdivision produces a conformally equivalent polyhedral surface to $\S$ with identity as the conformal equivalence, and therefore approximating the uniformization for this new polyhedral surface is equivalent to approximating the uniformization of the original polyhedral surface $\S$.

%Let us note that once Conjecture \ref{con:discretization_of_power_map} is proved no such subdivision will be necessary since for any vertex $v_i$, the sum of angles $\theta_i$ at $v_i$ is greater (in non-degenerate cases) twice any adjacent angle $\theta_{i,j}$ (of face $f_j$ touching $v_i$), and we will have that $\gamma=2\pi/\theta_i < 2\pi / 2\theta_{i,j} = \pi/\theta_{i,j}$.

%We will actually require slightly more from our association of charts at level $q=Q$: we ask that there exists a constant $R'>0$ such that for all faces $f_j\in\F^Q$ and their charts $\set{\Omega_i,\phi_i}$ the distance (using the induced piecewise Euclidean metric) from the triangle to the boundary of the charts' domain $\partial\Omega_i$ is at-least $R'$, that is, $$\min_{p\in f_j,p'\in\partial\Omega_i} d_\S(p,p')>R',$$ where $d_\S(p,p')$ is the geodesic distance on $\S$. Intuitively, we ask that each triangle will be of some bounded  distance (from below) to the boundary of its chart. For boundary vertices that are not corner vertices we use similar requirement with the difference that we measure the distance to $\partial\Omega_i\setminus \partial S$, that is, we do not care about the distance to the boundary edge. The reason is that we later use Schwarz reflection principle and the boundary edges won't be an obstacle.  \yl{what about corner charts?}

%%%%%%%%%%%%%%%%%%%%%%%%%%%%%%%%%%%%%%%%%%%%%%%%%%%%%%%%%%%%%%%%%%%%%%%%%%%%%%%%%%%%%%%%%
\section{Feasibility}
\label{s:feasibility}
We are now ready to prove our first result. Namely, that a subdivided triangulation $\S^q$, $q\geq 2$, can be mapped to $\T$ with a simplicial homeomorphism $\Phi^q\in \FK^{\S^q}$ where the discrete conformal distortion of the triangles is controlled.

% is globally bounded $\CD(\Phi^q)\leq K^*$, for some $K^*>0$ independent of $q$, and for arbitrary open set $\Omega\subset \S, \Omega\cap\V=\emptyset$ satisfies asymptotically $\CD_\Omega(\Phi^q)=1+\O(2^{-q})$, where $\O(2^{-q})$ is a quantity that its absolute value is not larger than some constant times $2^{-q}$.

\begin{thm}\label{thm:feasibility_result}
There exist simplicial maps $\Phi^q\in \FF^{\S^q}_{\K}$, for $\K=(K_1,...,K_{|\F^q|})$, $K_j=1+\O(2^{-\kappa_i q})$, where $\kappa_i=\min\set{\gamma_i,1}$ and face $f_j\in\F^q$ is associated with chart $i=i_j$. That is, $\Phi^q:\S\too\T$ are orientation preserving homeomorphisms that satisfy for every face $f_j\in\F^q$
$$\DCD(\Phi^q\mid_{f_j})=1+\O(2^{-\kappa_i q}).$$
In particular, $\Phi^q\in\FF^{\S^q}_{K_q}$, where $K_q=1+\O(2^{-\kappa q})$, and $\kappa=\min_i \set{\kappa_i}$.\\
%\item
Furthermore, for every face $f_j\in\F^q$, $\min_{k\in\Z}\abs{\arg\partial_{z}\A_j-\arg\brac{ (\Psi\circ\vphi_i^{-1})'(\wt{z})}+k2\pi}=\O\parr{2^{-q\kappa}}$, where $\A_j$ is $\Phi^q\vert_{f_j}$ in local coordinates, $\wt{z}$ is the centroid of the triangle $\Delta(z_1,z_2,z_3)$, $z_\ell=\vphi_i(v_{j_\ell})$, $\ell=1,2,3$, and $v_{j_\ell}$ are the vertices of the face $f_j$. \\
Lastly, for any fixed domain $\Omega\subset\S$ satisfying $\closure{\Omega}\footnote{We denote by $\closure{\Omega}$ the closure of the set $\Omega$.}\cap \V = \emptyset$,  $\DCD_\Omega(\Phi^q)=1+\O(2^{-q})$.
\end{thm}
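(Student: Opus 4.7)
My plan is to define $\Phi^q$ by sampling the uniformization at vertices: set $\Phi^q(v)=\Psi(v)$ for every $v\in\V^q$ and extend affinely across each face of $\F^q$. By construction $\Phi^q\in\FF^{\S^q}$, inheriting the boundary behavior and three corner constraints from $\Psi$. All remaining work consists of three tasks: (i) estimating the per-face conformal distortion $\DCD(\Phi^q|_{f_j})$, (ii) verifying that $\Phi^q$ is a global orientation-preserving homeomorphism onto $\T$, and (iii) tracking $\arg\partial_z\A_j$.

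Fix a face $f_j\in\F^q$ with associated chart $(\Omega_i,\vphi_i)$, $i=i_j$. In local coordinates $\A_j$ is the linear interpolant of $\psi_i=\Psi\circ\vphi_i^{-1}$ at the vertices $z_\ell=\vphi_i(v_{j_\ell})$. By the uniformization theorem together with the very definition of conformality in this atlas, $\psi_i$ is holomorphic on $\Lambda_i$, and being the composition of two homeomorphisms it is univalent, so $\psi_i'$ never vanishes. Lemma \ref{lem:epsilon_distance_to_charts_boundary} ensures the chart image of every assigned face at every level $q$ lies in a fixed compact subset of $\Lambda_i$; on this subset I can fix $q$-independent constants $c_i>0$ with $|\psi_i'|\geq c_i$ and $C_i>0$ with $|\psi_i''|\leq C_i$. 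Letting $\wt z$ be the centroid of $\Delta(z_1,z_2,z_3)$ and $h_j$ its diameter, a standard Taylor-remainder estimate together with a uniform lower bound on the minimum angle of $\Delta(z_1,z_2,z_3)$ yields
\[ \A_j(z)=\psi_i(\wt z)+a_j(z-\wt z)+b_j\overline{(z-\wt z)},\qquad a_j=\psi_i'(\wt z)+\O(h_j),\quad b_j=\O(h_j), \]
with implicit constants depending only on $c_i$, $C_i$, and the aspect ratio. Consequently $\DCD(\Phi^q|_{f_j})=(|a_j|+|b_j|)/(|a_j|-|b_j|)=1+\O(h_j)$ and $\arg\partial_z\A_j=\arg a_j=\arg\psi_i'(\wt z)+\O(h_j)$, which reduces the theorem to the chart-diameter estimate $h_j=\O(2^{-q\kappa_i})$.

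The chart $\vphi_i$ factors as a rigid unfolding (an isometry) followed by $z\mapsto z^{\gamma_i}$, whose differential has modulus $\gamma_i|z|^{\gamma_i-1}$. A triangle $f_j\in\F^q$ not touching $v_i$ has intrinsic diameter $\sim 2^{-q}$ and intrinsic distance $r_j\gtrsim 2^{-q}$ from $v_i$, hence $h_j\lesssim r_j^{\gamma_i-1}\cdot 2^{-q}$; the worst case over $r_j\in[c\,2^{-q},\diam\Omega_i]$ is $\O(2^{-q\gamma_i})$ when $\gamma_i<1$ and $\O(2^{-q})$ when $\gamma_i\geq 1$. A triangle touching $v_i$ has one vertex at $0$ and the other two at chart-distance $\sim 2^{-q\gamma_i}$, so $h_j=\O(2^{-q\gamma_i})$ directly; the prepared initial subdivision guaranteeing $\gamma_i\theta_{i,j}<\pi/2$ and isosceles adjacent triangles keeps the chart angle at $0$ bounded away from $0$ and $\pi$, yielding the aspect-ratio bound used above. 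Combining cases, $h_j=\O(2^{-q\kappa_i})$ with $\kappa_i=\min\{\gamma_i,1\}$, proving the per-face distortion bound and (by taking maxima) the uniform bound $K_q=1+\O(2^{-q\kappa})$. For the last statement, any $\Omega$ with $\closure{\Omega}\cap\V=\emptyset$ has every $f_j\subset\Omega$ at intrinsic distance bounded below (independent of $q$) from each $v_i\in\V$, so $h_j=\O(2^{-q})$ uniformly and $\DCD_\Omega(\Phi^q)=1+\O(2^{-q})$.

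Finally, for $q$ large enough each $\A_j$ has distortion below $2$ and positive Jacobian, so its face is mapped affinely onto a non-degenerate triangle with the correct orientation. Since $\Phi^q$ agrees with the homeomorphism $\Psi$ on $\V\subset\V^q$ and on $\partial\S$, the map $\partial\S^q$ to $\partial\T$ is a homeomorphism respecting the three prescribed corners, and a standard degree/winding-number argument upgrades per-face injectivity to global injectivity, concluding that $\Phi^q:\S\to\T$ is an orientation-preserving homeomorphism. I expect the main obstacle to be the chart-diameter estimate of the third paragraph: the non-uniform Jacobian of $z\mapsto z^{\gamma_i}$ near $0$ forces the rate from $\O(2^{-q})$ down to $\O(2^{-q\gamma_i})$ when $\gamma_i<1$, and one must simultaneously verify that the chart triangles retain bounded aspect ratio (relying on the prepared isosceles initial subdivision, and on the fact that away from $0$ the power map is a quasi-similarity on small triangles) so that the Taylor estimate of the second paragraph applies uniformly in $q$.
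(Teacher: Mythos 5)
Your construction and overall strategy coincide with the paper's: sample $\Psi$ at the vertices of $\V^q$, extend affinely, control each $\A_j$ by a first‑order Taylor/interpolation estimate for $\psi_i=\Psi\circ\vphi_i^{-1}$, obtain the rate $\O(2^{-q\kappa_i})$ from the Jacobian $\gamma_i|z|^{\gamma_i-1}$ of the power map, and upgrade per‑face positivity of the Jacobian plus the boundary bijection to a global homeomorphism. The Taylor step is the content of the paper's Lemma \ref{lem:sampling_conformal_with_triplet} (proved there with a Cauchy‑integral remainder rather than a second‑derivative bound, which is immaterial).

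There is, however, a genuine gap in your handling of boundary and corner charts. Your estimate needs $\psi_i$ to be holomorphic, with $|\psi_i'|$ bounded below, on a disk of \emph{fixed} radius around the chart image of every face. For a face $f_j$ meeting $\partial\S$, its image $\Delta(z_1,z_2,z_3)$ touches the straight part of $\partial\Lambda_i$, where $\psi_i$ is a priori defined only on one side; no compact subset of the open set $\Lambda_i$ contains a fixed‑radius neighborhood of such faces, so the constants $c_i$, $C_i$ you "fix" do not exist as stated. The paper closes this by the Schwarz reflection principle: $\psi_i$ maps the straight boundary of $\Lambda_i$ into a straight edge of $\T$, hence extends conformally across it, and at the three corners a double reflection works precisely because the corner charts were normalized to $\Theta_i=\pi/3$, matching the interior angle of the equilateral. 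Without this step (or an equivalent boundary‑regularity argument) your distortion and argument estimates are only established for faces associated with interior charts. A secondary soft spot is your aspect‑ratio claim for chart triangles near $v_i$: for triangles a bounded number of generations from the origin the power map is not close to affine, and the uniform angle bound really comes from the scale invariance $h(az)=a^\gamma h(z)$ exploited in Lemma \ref{lem:discretization_of_power_map}; your phrase "quasi‑similarity on small triangles" does not cover those finitely many self‑similar shapes, though this is readily repaired by invoking that lemma as the paper does.
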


We will need an auxiliary Lemma regarding approximation of conformal mappings with simplicial maps:
\begin{lem}\label{lem:sampling_conformal_with_triplet}
Let $f: \D(0,r)\to \D(0,1)$ be a conformal map such that $|f'(0)|\geq c_{der}>0$.  Let $z_\ell\in D(0,r)$, $\ell=1,2,3$, such that $z_1+z_2+z_3=0$, and the minimal angle of the triangle  $\Delta(z_1,z_2,z_3)$ is bounded from below, and denote $h=\max_\ell \abs{z_\ell}$. Then, the affine map $\A(z)=\alpha z+\beta \bbar{z} + \delta$, $\alpha,\beta,\delta\in\C$, defined uniquely by $\A(z_\ell)=f(z_\ell)$, $\ell=1,2,3$, satisfies:
\begin{enumerate}
\item $\abs{\alpha-f'(0)} = \O(h)$.
\item $\abs{\beta} = \O(h)$.
\item $\CD(\A)=1+\O\parr{\frac{h}{r(r-h)}}$, with the constant inside the $\O$-notation depending only upon the minimal angle of the triangle $\Delta(z_1,z_2,z_2)$ spanned by $z_1,z_2,z_3$.
\item $\min_{k\in \Z}\abs{\arg\alpha_j-\arg f'(0)+2\pi k}=\O(h)$. \end{enumerate}

\end{lem}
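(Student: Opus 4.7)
The plan is to decompose $f=L+g$ where $L(z)=f(0)+f'(0)z$ is the linear Taylor part and $g(z)=f(z)-L(z)$ vanishes to second order at $0$. The centroid assumption $z_1+z_2+z_3=0$ is there precisely so that $0$ aligns with the centroid of the interpolation, making $f'(0)$ the natural target in claim~(1). Since $L$ is $\C$-affine, it interpolates itself, so the real-affine map $\A-L$, whose coefficients are $\alpha-f'(0)$, $\beta$, $\delta-f(0)$, is nothing but the $\R$-affine interpolant of $g$ at $z_1,z_2,z_3$. The whole lemma thus reduces to sharp bounds on the interpolation coefficients of $g$.

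Next I would write these coefficients down explicitly. Setting $u=z_2-z_1$ and $v=z_3-z_1$, subtracting pairs of the interpolation equations eliminates the constant term and yields
\[
\alpha-f'(0)=\frac{\bar v\,[g(z_2)-g(z_1)]-\bar u\,[g(z_3)-g(z_1)]}{u\bar v-v\bar u},\qquad \beta=\frac{u\,[g(z_3)-g(z_1)]-v\,[g(z_2)-g(z_1)]}{u\bar v-v\bar u}.
\]
The denominator $u\bar v-v\bar u=2i\,\im(u\bar v)$ has modulus equal to four times the Euclidean area of $\Delta(z_1,z_2,z_3)$, which by the minimal-angle hypothesis and $h=\max_\ell|z_\ell|$ is at least $c\,h^2$ for some $c>0$ depending only on the minimal angle; the edges all have length $\leq 2h$.

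The heart of the proof is a sharp Cauchy-type bound on $g'$ on $\D(0,h)$. Because $f$ takes values in $\D(0,1)$, I have the crude estimate $|g|\leq 3$ on $\D(0,r)$, and because $g$ vanishes to order two at $0$ the quotient $\tilde g(z):=g(z)/z^2$ extends holomorphically to $\D(0,r)$. The maximum principle applied on $|z|\leq r'$ with $r'\uparrow r$ then gives $|\tilde g|\leq 3/r^2$ throughout $\D(0,r)$, hence $|g(z)|\leq 3|z|^2/r^2$. Cauchy's estimate applied to $\tilde g$ on the disk of radius $r-h$ centered at any $z$ with $|z|\leq h$ then yields $|\tilde g'(z)|\leq 3/\bigl(r^2(r-h)\bigr)$, and combining via $g'(z)=2z\,\tilde g(z)+z^2\,\tilde g'(z)$ produces
\[
|g'(z)|\leq \frac{6h}{r^2}+\frac{3h^2}{r^2(r-h)}\leq C\cdot\frac{h}{r(r-h)}\quad\text{on }\D(0,h).
\]
I expect this sharper-than-naive bound to be the main (if routine) obstacle: applying Cauchy to $f''$ directly would give only $O(h/(r-h)^2)$, which is weaker than the $r(r-h)$ denominator demanded by claim~(3).

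With these ingredients in hand, $|g(z_\ell)-g(z_k)|\leq\|g'\|_{\D(0,h)}|z_\ell-z_k|=O\!\left(h^2/(r(r-h))\right)$ in the numerators of the explicit formulas, together with the lower bound $|u\bar v-v\bar u|\geq ch^2$, gives $|\alpha-f'(0)|,|\beta|=O(h/(r(r-h)))$, which proves (1) and (2). For (3), once $h$ is small enough that $|\beta|<|\alpha|$ (automatic whenever $h/(r(r-h))$ is sufficiently small relative to $c_{der}$, since $|\alpha|\geq c_{der}-O(h/(r(r-h)))$), the map $\A$ is orientation-preserving with singular values $|\alpha|\pm|\beta|$, giving
\[
\CD(\A)=\frac{|\alpha|+|\beta|}{|\alpha|-|\beta|}=1+\frac{2|\beta|}{|\alpha|-|\beta|}=1+O\!\left(\frac{h}{r(r-h)}\right).
\]
Finally, for (4), writing $\alpha=f'(0)(1+\eta)$ with $|\eta|=|\alpha-f'(0)|/|f'(0)|=O(h/c_{der})$, the bound $\arg(1+\eta)=O(|\eta|)$ for $|\eta|$ small yields $\arg\alpha-\arg f'(0)=O(h)$ modulo $2\pi$, completing the argument.
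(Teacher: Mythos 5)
Your proposal is correct and follows essentially the same route as the paper: approximate $f$ by its first-order Taylor polynomial, bound the second-order remainder using only the fact that $f$ maps into $\D(0,1)$, and invert the interpolation system using the area lower bound $|\Delta|\geq c\,h^2$ supplied by the minimal-angle hypothesis. The only (immaterial) differences are that the paper bounds the remainder at the three nodes directly via the Cauchy-integral form of the Taylor remainder and applies Cramer's rule to the full $3\times 3$ system, whereas you use the maximum principle on $g(z)/z^2$ and the reduced $2\times 2$ system -- and your detour through $\sup_{\D(0,h)}|g'|$ is not actually needed, since your own pointwise bound $|g(z)|\leq 3|z|^2/r^2$ already controls the numerators via $|g(z_\ell)-g(z_k)|\leq|g(z_\ell)|+|g(z_k)|$ to the required order.
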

\begin{proof}
 We use the Taylor expansion (see for example,\cite{ahlfors1979complex} page 179) of $f$ developed around $z=0$:
\begin{equation}\label{e:taylor_expansion}
f(z)-f(0)=f'(0)z+\frac{z^2}{2\pi\i}\int_C \frac{f(\xi)\,d\xi}{\xi^2(\xi-z)},
\end{equation}
where $C=\partial \closure{\D(0,r)}$ is the circle of radius $r$ centered at the origin. We bound the reminder term for $f(z_\ell),\ell=1,2,3,$ as follows. For $z_\ell, \ell=1,2,3$: $$\abs{\frac{z_\ell^2}{2\pi\i}\int_C \frac{f(\xi)\,d\xi}{\xi^2(\xi-z_\ell)}}\leq \frac{h^2}{r(r-h)}.$$

Denote, for brevity $\eps=\frac{h^2}{r(r-h)}$. We can write three equations that characterize the affine map $\A(z)=\alpha z+ \beta \bbar{z} + \delta$ : $$\alpha z_\ell + \beta\bbar{z_\ell} + \delta = f(0) + f'(0)z_\ell + \eps_\ell  \ \ , \ell=1,2,3,$$
where $\abs{\eps_\ell}\leq \eps$. After rearranging:
$$ \parr{\alpha-f'(0)} z_\ell + \beta\bbar{z_\ell} + \parr{\delta-f(0)} = \eps_\ell\ \ , \ell=1,2,3.$$
Let us consider the matrix of the linear system:
\begin{equation}\nonumber
A=\begin{array}{cccc}

                   \begin{pmatrix}
                     z_1 & \bbar{z_1} & 1 \\
                     z_2 & \bbar{z_2} & 1 \\
                     z_3 & \bbar{z_3} & 1 \\
                   \end{pmatrix}.
\end{array}\end{equation}
Cramer's rule implies that we can bound $|\alpha-f'(0)|$ and $|\beta|$ by bounding $\abs{\det A_\ell / \det A}$, $\ell=1,2$, where $A_\ell$ is identical to $A$ except that we replace its $\ell^{th}$ column with the vector $(\eps_1,\eps_2,\eps_3)^t$. A direct calculation shows that
$$\det A = (z_2-z_1)\bbar{(z_3-z_1)}-(z_3-z_1)\bbar{(z_2-z_1)}=2\i\det  \begin{pmatrix}
                     \re(z_3-z_1) &  \im(z_3-z_1) \\
                     \re(z_2-z_1) &  \im(z_2-z_1) \\
                   \end{pmatrix}.$$
Therefore, $\abs{\det A}=4|\Delta|$,  where $\abs{\Delta}$ denotes the area of the triangle $\Delta$ spanned by $z_1,z_2,z_3$. The terms of the form $\abs{\det A_\ell}$ can be bounded by $$\abs{\det A_\ell} \leq 6\eps h, \quad \ell=1,2.$$ Combining the above we get $$|\alpha-f'(0)| \leq \frac{\abs{\det A_\ell}}{\abs{\det A}}\leq \frac{6\eps h}{4\abs{\Delta}}=\frac{3}{2}\frac{h^3}{r(r-h)\abs{\Delta}}=\O(h),$$ where in the last equality we used the fact that the minimal angle in triangle $\Delta$ is bounded from below and therefore $|\Delta|\geq c'h^2$. And similar bound holds for $|\beta|$.

Since $\abs{f'(0)}\geq c_{der}> 0$, there exists some constant depending on $c_{der}$ such that, up-to an integer multiplication of $2\pi$, the difference  $\arg f'(0)-\arg\alpha$ are of the same order as $\abs{\alpha-f'(0)}$, namely, $$\min_{k\in \mathbb{Z}}\abs{\arg\alpha-\arg f'(0)+2\pi k}=\O(h).$$

Finally,
$$\CD(\A)=\frac{|{\alpha}|+|{\beta}|  } {\abs{|{\alpha}|-|{\beta}|}}=\frac{\abs{\alpha}+\O(h)}{\abs{\alpha}+\O(h)}=1+\O(h),$$ where we used the fact that $\abs{f'(0)}\geq c_{der}>0$ to bound $\alpha = f'(0) + \O(h)$ away from zero.
The lemma is proved.
\end{proof}

%proof of theorem
\begin{proof}(of Theorem \ref{thm:feasibility_result})
We construct $\Phi^q$ by sampling the uniformization map $\Psi$, that is, for $v_i\in\V^q$, we set $\Phi^q(v_i)=\Psi(v_i)$, and extend $\Phi^q$ to the whole $\S^q$ by requiring linearity in each face.

Each face $f_j\in\F^q$, $q\geq Q$, is associated with some chart $\set{\Omega_i,\vphi_i}$, $i=i_j$, (used to define its conformal structure, see Section \ref{s:smooth_and_discrete_conformal_structures}). We denote, as above, by $v_{j_\ell}, \ell=1,2,3$ the (ordered) vertices of $f_j$, and $z_\ell=\vphi_i(v_{j_\ell}), \ell=1,2,3$ their image in the local coordinates.\\
We start with considering $f_j$ that are associated with \emph{interior charts}, namely, charts $\set{\Omega_i,\vphi_i}$ for $v_i\in \interior \S$. Lemma \ref{lem:epsilon_distance_to_charts_boundary} assures existence of a constant $\eps_0>0$ such that $\cup_{p\in f_j}\D(\vphi_i(p),\eps_0)\subset \Lambda_i$. Let us denote by $\wt{z}$ the centroid of the triangle $\Delta\parr{z_1,z_2,z_3}$. Let us note that the edges' length of $\S^q$ are asymptotically of order $2^{-q}$. In the local coordinates (i.e., $\Lambda_i$) the edge length goes to zero not slower than $2^{-q\kappa_i}$, where $\kappa_i=\min\set{\gamma_i,1}$ (i.e., depends upon the vertex angles $\theta_i$ and how different they are from $\Theta_i$ - angle deficit). Let us denote $h=\max_{\ell=1,2,3}|z_\ell-\wt{z}|$ the radius of disk around $\wt{z}$ containing $\Delta\parr{z_1,z_2,z_3}$, and the conformal map $\psi_i=\Psi \circ \vphi_i^{-1}$. The derivative of the conformal map, $\psi'_i$, can be bounded from below over $\closure{\Lambda_i}$, namely $\abs{\psi'_i(z)}\geq c_{der}>0$, $z\in \Lambda_i$ (since $\vphi_i$, and therefore $\psi_i$ can be extended to a neighborhood of $\closure{\Lambda_i}$). Since we have a finite number of charts and faces in $\F$ we can take $c_{der}$to be a uniform lower bound for all charts.

We now would like to use Lemma \ref{lem:sampling_conformal_with_triplet}. We assume w.l.o.g.~ that $\wt{z}=0$. We set $r=\eps_0$, and note that $\T \subset \closure{\D(0,1)}$. Also note that all the faces in every subdivision level $\S^q$ are similar to the original faces $\F$ of $\S$. Lemma \ref{lem:discretization_of_power_map} therefore implies that the triangle $\Delta=\Delta(z_1,z_2,z_3)$ have bounded angles from below. Lemma \ref{lem:sampling_conformal_with_triplet} now implies that the conformal distortion of the unique affine map $\A_j$ taking $\Delta$ to the triangle in $\T$ spanned by $\psi_i(z_\ell),\ell=1,2,3$ is bounded by $1+\O\parr{\frac{h}{r(r-h)}}=1+\O(2^{-\kappa_i q})$, since $h=\O(2^{-q \kappa_i})$. In addition this lemma indicates that $\abs{\arg\psi_i'(\wt{z})-\arg\alpha_j}=\O(2^{-\kappa_i q})$, where $\alpha_j=\partial_z\A_j$ (as-usual, the arguments are considered up-to addition of $2\pi k$).

 If we fix a domain $\Omega\subset \S$ such that $\closure{\Omega}\cap \V = \emptyset$ then for $f_j\subset\Omega$ the edge length of $\Delta(z_1,z_2,z_3)$ is asymptotically $2^{-q}$ and we achieve that the conformal distortion of $\A_j$ is bounded with $1+\O(2^{-q})$. Lastly, for this case, note that Lemma \ref{lem:sampling_conformal_with_triplet} also indicates that for sufficiently high $q$ the affine map $\A_j$ is orientation preserving and non-degenerate. Indeed since $\abs{\psi'_i(\wt{z})}\geq c_{der}>0$ one can use Lemma \ref{lem:sampling_conformal_with_triplet} to show that for sufficiently large $q$, $\det\A_j=|\alpha_j|^2-|\beta_j|^2>0$. \\

Now we move to faces $f_j$ associated with \emph{boundary charts}.
In this case we can use Schwarz reflection principle to extend $\psi_i$ to the union of $\Lambda_i$ and its reflection over its straight line boundary. Note that we can bound the derivative of the extension away from zero also in these cases (find a new $c_{der}>0$ that works also for boundary charts). Now we are again in the situation where we can apply Lemma \ref{lem:sampling_conformal_with_triplet}. (we can adjust the constant $\eps_0>0$ to also work for these charts.)\\

%\begin{floatingfigure}[r]{0.3\textwidth}
%\begin{wrapfigure}{0.3\textwidth}
  \begin{center}
    \includegraphics[width=0.7\textwidth]{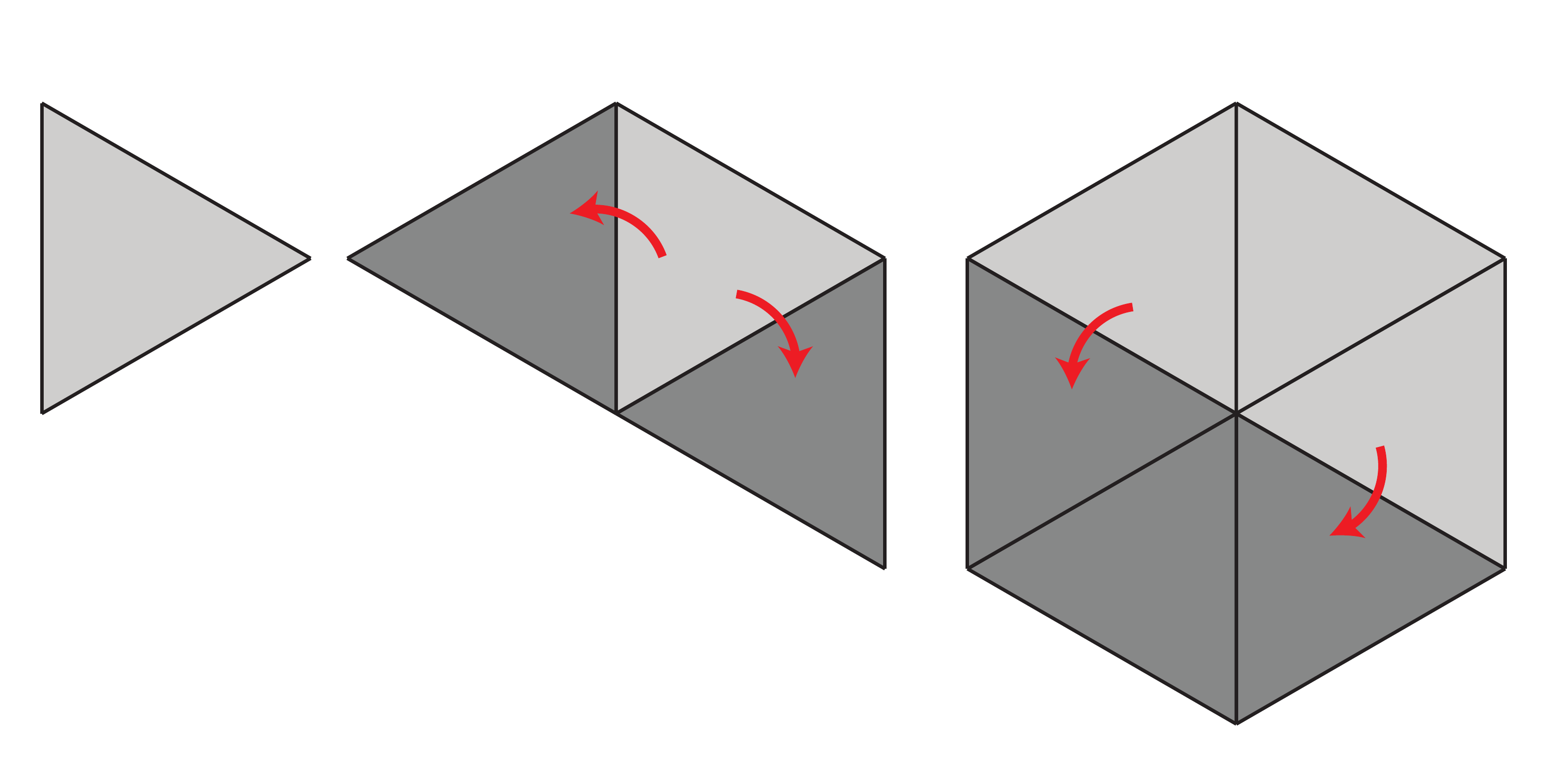}%\vspace{-0.2cm}
  \end{center}
  %\caption{The Projection approach.}
  %\label{fig:multi_connected}
%\end{wrapfigure}
%\end{floatingfigure}
The last case includes faces $f_j$ associated with \emph{corner charts} $\set{\Omega_i,\vphi_i}$, $i=1,2,3$. This case can also be dealt with the Schwarz reflection principle, as follows. By definition of our conformal structure, the uniformization map in the corner coordinate charts $\psi_i:\Psi\circ\vphi_i^{-1}:\Lambda_i\too\T$, $i=1,2,3$, takes the domain $\Lambda_i=\vphi_i(\Omega_i)$ to a neighborhood of one of the corners of the equilateral $\T$. Since the angle of the corner of $\Lambda_i$ equals $\frac{\pi}{3}$ which is also the angle of the equilateral, we can use Schwarz reflection principle to extend $\psi_i$ to a conformal map in a  neighborhood of $\Lambda_i$, as shown in the figure above. We first reflect w.r.t.~two each of the edges touching the corner, and then reflect w.r.t the new line boundary (similarly to other, non-corner, boundary charts). The extended map is analytic (and conformal) also at the corner point because it is continuous there and conformal in its neighborhood.\\

To finish the proof we need to show that $\Phi^q$ are homeomorphic mappings from $\S$ to $\T$. Note that $\Phi^q\mid_{f_j}$ is an affine map that is composed of two affine maps. The first, denoted by $\B_j$, takes $f_j$ in $\S^q$ to the coordinate chart's triangle $\Delta=\Delta(z_1,z_2,z_3)$, and the second, $\A_j$, maps $\Delta$ to $\Delta(\Psi(v_{j_1}),\Psi(v_{j_2}),\Psi(v_{j_3}))$. The first affine map is orientation preserving and non-degenerate (and with bounded conformal distortion) for sufficiently large $q$ due to Lemma \ref{lem:discretization_of_power_map}. The second affine map $\A_j$ was proven above (by showing that the determinant of its Jacobian is positive) to be orientation preserving and non-degenerate as well for all faces and for sufficiently large $q$. Therefore, $\Phi^q\mid_{f_j}$ is orientation preserving non-degenerate as a composition of two orientation-preserving and non-degenerate affine maps. Since $\Phi^q$ also maps the boundary of $\S$ bijectively onto the boundary of $\T$ (by definition of $\FF^{\S^q})$), it is a global bijection (see \cite{Lipman:2012:BDM:2185520.2185604} for a detailed proof). Since it is simplicial, its inverse is also continuous, a fact which makes it a homeomorphism.
\end{proof}

A comments is in order. One can build the charts based on an already subdivided version of the surface $\S^q$, $q>0$, and achieve better estimates of the conformal distortion of the maps $\Phi^q$. For example, if one takes the charts defined by $\S^1$, the vertices $v_i\in\V^1\setminus\V$ are all flat, i.e. $\theta_i=2\pi$, and therefore triangles associated with the corresponding charts will have conformal distortion of order $1+\O(2^{-q})$.

\newpage
%%%%%%%%%%%%%%%%%%%%%%%%%%%%%%%%%%%%%%%%%%%%%%%%%%%%%%%%%%%%%%%%%%%%%%%%%%%%%%%%%%%%%%%%%
\section{Approximation}
\label{s:approximation}

Our next step is showing that if we are able to put our hands on a series of simplicial maps with discrete conformal distortion that goes uniformly to one, then this series has to uniformly converge to the uniformization map $\Psi$. Note that this result applies to \emph{any} such series, not only the ones
built in Theorem \ref{thm:feasibility_result}.
%Furthermore, up to a conformal change of coordinates the rate of convergence is linear in $\abs{\DCD(\Phi^q)-1}=\O(2^{-q\gamma })$, where $\kappa=\min\set{\kappa_i}$, and as defined before $\kappa_i=\min\set{\gamma_i,1}$ (see Section \ref{s:smooth_and_discrete_conformal_structures}, and Section \ref{s:feasibility}).
\begin{thm}\label{thm:approximation}
Let $\Psi:\S\too\T$ be the uniformization map of the polyhedral surface $\S$.  Let $\set{\Phi^q}_{q\geq Q} \subset \FF^{\S^q}_{K_q}$, for some series $K_q=1+\O(2^{-q\kappa})$, where $\kappa>0$ is some constant.\\
Then, $\Phi^q\circ\Psi^{-1}:\T\too\T$ converges uniformly in any compact subset of $\interior{\T}$ to the identity map $I_d:\T\too\T$.
\end{thm}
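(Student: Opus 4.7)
The plan is to translate the per-face discrete conformal distortion bound into an honest quasiconformality bound for $\Phi^q$ as a map between the Riemann surface $\S$ and the planar domain $\T$, and then invoke the classical compactness theory of quasiconformal mappings together with the three-point rigidity of the equilateral.

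First, I would check that each $\Phi^q$ is a $K_q$-quasiconformal homeomorphism from $\S$ (endowed with the smooth conformal structure defined in Section \ref{s:smooth_and_discrete_conformal_structures}) onto $\T$. In the interior of any face $f_j \in \F^q$, read in the associated chart $(\Omega_i,\vphi_i)$, the map $\Phi^q \circ \vphi_i^{-1}$ is the affine map $\A_j$ with $\CD(\A_j) \leq K_q$. Since the 1-skeleton of $\S^q$ has measure zero and $\Phi^q$ is Lipschitz, the analytic definition of $K_q$-quasiconformality holds on all of $\S$ (the discrete corner points of $\S$ are isolated, and quasiconformality extends across isolated points). Post-composing with the conformal map $\Psi^{-1}:\T\to\S$ yields $g_q := \Phi^q \circ \Psi^{-1}: \T \to \T$, a $K_q$-quasiconformal self-homeomorphism of the equilateral $\T$ fixing the three corners $t_1,t_2,t_3$.

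Second, because $K_q \to 1$, the family $\{g_q\}_{q \geq Q}$ is uniformly $K$-quasiconformal for $K = \sup_q K_q < \infty$, its image lies in the bounded set $\T$, and the three fixed corners provide the usual three-point normalization that prevents any subsequential limit from collapsing to a constant or a boundary point. By the compactness theorem for normalized quasiconformal mappings (e.g., Lehto--Virtanen, Chapter II), $\{g_q\}$ is a normal family and every subsequential limit $g_\infty$ is a $K$-quasiconformal self-homeomorphism of $\T$. Moreover, the limit is in fact $K_\infty$-quasiconformal for any $K_\infty > \liminf_q K_q = 1$, so $g_\infty$ is $1$-quasiconformal, hence a conformal homeomorphism $\T \to \T$. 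Any conformal self-homeomorphism of the closed equilateral fixing three distinct boundary points must be the identity (transport to the disk via Riemann mapping and use the three-point rigidity of M\"obius transformations). Therefore every subsequential limit equals $I_d$, and $g_q \to I_d$ uniformly on compact subsets of $\interior{\T}$.

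The main delicate point will be the first step: verifying that the per-face bound on $\CD(\A_j)$, which is measured in the local coordinates induced by the power-map charts, really controls the quasiconformal dilatation of $\Phi^q$ at points lying on edges and, especially, at the cone vertices of $\S$. Away from the original vertices $\V$ the charts are local isometries of planar pieces and the translation is immediate; at an original vertex $v_i$ the chart $\vphi_i$ is $z \mapsto z^{\gamma_i}$ after unfolding, which is conformal on $\Omega_i \setminus \{v_i\}$, so the chartwise bound on $\CD(\A_j)$ transfers to an $\S$-wise quasiconformality bound on $\Phi^q$ off a discrete set, and the removability of isolated singularities for quasiconformal maps takes care of the vertices themselves. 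Once this reduction is in place, the remainder of the argument is the standard normal-family and rigidity package cited above.
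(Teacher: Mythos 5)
Your overall architecture (compose with $\Psi^{-1}$, normal families, $K_q\to1$ forces a conformal limit, three-point rigidity of the equilateral) is the same as the paper's, but your first step contains a genuine gap that the paper has to work around, and your proof as written does not close it. You assert that, read in the chart $(\Omega_i,\vphi_i)$, the map $\Phi^q\circ\vphi_i^{-1}$ \emph{is} the affine map $\A_j$, so that $g_q$ is $K_q$-quasiconformal with $K_q\to1$. This is false. By definition, $\A_j$ is the affine map taking the \emph{straight} triangle $\Delta(z_1,z_2,z_3)$, $z_\ell=\vphi_i(v_{j_\ell})$, to the image triangle; it agrees with $\Phi^q\circ\vphi_i^{-1}$ only at the three vertices. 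The actual map factors as $\Phi^q|_{f_j}=\A_j\circ\B_j$, where $\B_j:f_j\to\Delta(z_1,z_2,z_3)$ is the simplicial (vertex-sampled) approximation of the power map $z\mapsto z^{\gamma_i}$, so that in the chart one sees $\A_j\circ(\B_j\circ\vphi_i^{-1})$. The extra factor $\B_j\circ\vphi_i^{-1}$ is quasiconformal with a bound $K'$ independent of $q$ (this is exactly what Lemma \ref{lem:discretization_of_power_map} is for), but by the scale-invariance $h(az)=a^\gamma h(z)$ the corner triangles at the cone vertices carry the \emph{same} fixed distortion at every level $q$: it does not tend to $1$. Hence $g_q$ is only uniformly $K'K_q$-quasiconformal, and your step ``the limit is $K_\infty$-quasiconformal for any $K_\infty>\liminf K_q=1$'' breaks down. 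Your closing paragraph gestures at the right worry but resolves it incorrectly: the conformality of $\vphi_i$ away from $v_i$ is not the issue; the issue is that $\CD(\A_j)$ is measured against the straightened chart triangle, not against the curvilinear image $\vphi_i(f_j)$.

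The paper repairs this with a two-tier argument that your proposal is missing. First it uses only the uniform bound $K=K'K''$ to extract a locally uniformly convergent subsequence with a $K$-quasiconformal, non-constant limit $g$ (the corner normalization, as in your argument). Then, to show $g$ is conformal, it restricts to an arbitrary domain $O$ with $\closure{O}\cap\Psi(\V)=\emptyset$: there the triangles are sampled at bounded distance from the cone points, so Lemma \ref{lem:sampling_conformal_with_triplet} gives $\CD(\B_j)=1+\O(2^{-q})$, and together with $\CD(\A_j)=1+\O(2^{-q\kappa})$ one gets that $g|_O$ is $1$-quasiconformal, hence conformal on $\interior{\T}\setminus\Psi(\V)$; finally $\Psi(\V)$ is discrete and removable. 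If you add this localization-plus-removability step (and the extension of quasiconformality across the analytic arcs $\Psi(\E^q)$, which you correctly handle by removability of a measure-zero set), your argument becomes the paper's.
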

\begin{proof}
Define $g^q=\Phi^q\circ\Psi^{-1}:\T\too\T$.\\ We start by showing that $g^q$ is $K$-quasiconformal with some global distortion bound $K\geq 1$.\\ First, since $\Psi$ is an orientation-preserving homeomorphism, $\Psi^{-1}$ is orientation-preserving homeomorphism. By assumption $\Phi^q$ is an orientation-preserving  homeomorphism and so $g^q$ is an orientation-preserving homeomorphism. Let us observe that $g^q_j=g^q\mid_{\Psi\parr{\interior{f_j}}}$ can be written as  $$g^q_j=\Phi^q~\circ \vphi_i^{-1}\circ\underbrace{\vphi_i\circ\Psi^{-1}}_{(*)}\mid_{\Psi\parr{\interior{f_j}}},$$ where $i=i_j$ is the chart associated with the face $f_j\in\F^q$. The part that is marked with $(*)$ is a conformal map over $\Psi\parr{\interior{f_j}}$ (by the definition of the uniformization map). Next, recall that $\vphi_i$ consists of a composition of a rigid maps and power map $h(z)=z^{\gamma_i}$, which is conformal except at the origin. Therefore, $\vphi^{-1}_i$ is conformal over the domain $\vphi_i\circ\Psi^{-1}\parr{\Psi\parr{\interior{f_j}}}=\vphi_i\parr{\interior{f_j}}$. Lastly, the $\Phi^q$ part is defined over the set $\vphi_i^{-1}\circ\vphi_i\circ\Psi^{-1}\parr{\Psi\parr{\interior{f_j}}}=\interior{f_j}$ and can be written as a composition of two affine maps (as before) the first, denoted by $\B_j$, takes $f_j$ in $\S^q$ to the coordinate chart triangle $\Delta=\Delta\parr{\vphi_i(v_{j_1}),\vphi_i(v_{j_2}),\vphi_i(v_{j_3})}$,where $v_{j_\ell}$, $\ell=1,2,3$ are the (ordered) vertices of face $f_j$. The second affine map, $\A_j$, maps $\Delta$ to $\Delta(\Psi(v_{j_1}),\Psi(v_{j_2}),\Psi(v_{j_3}))$. $\B_j$ is $K'$-quasiconformal, for some $K'\geq 1$ independent of $q$ by Lemma \ref{lem:discretization_of_power_map}. $\A_j$ is $K''$-quasiconformal, again independently of $q$, due to the assumption that $\Phi^q \in \FF^{\S^q}_{K_q}$, $K_q=1+\O(2^{-q\kappa})$. Then, $g^q_j$ is $K=K'K''$-quasiconformal, for all faces $f_j\in\F^q$ and all $q$. Denote the domain $$D=\cup_{f_j\in\F^q}\Psi\parr{\interior{f_j}}\subset\interior{\T}.$$ The set $\T\setminus D$, which consists of the image under $\Psi$ of the edges $\E^q$ of $\S^q$ is a union of analytic arcs. By extension theorem for quasiconformal mappings (see e.g., Theorem 8.2, page 42, and Theorem 8.3, page 45, in \cite{lehto1973quasiconformal}) the conformal distortion of $g^q$ over the whole $\T$ is $K$ again. Since bounded $K$-quasiconformal maps form a normal family there exists a subsequence $g^{q_k}$ that converges locally uniformly to a $K$-quasiconformal map or a constant (see Theorem 3.1.3, page 49 in \cite{astala2009elliptic}). Denote the limit map by $g$. Since $g^q$ fixes the corners of $\T$, $g$ is $K$-quasiconformal. Next we show that $g$ is a conformal map. It is enough to prove that $g$ is 1-quasiconformal (see e.g., Theorem 5.1, page 28 in \cite{lehto1973quasiconformal}). For that end, take an arbitrary domain $O\subset \T$ such that $\closure{O}\cap\Psi\parr{\V}=\emptyset$. Now, we look at $g^{q_k,O}_j=g^{q_k}\mid_{O\cap\Psi(\interior{f_j})}$. Similar to above, it consists of a composition of conformal maps ($\vphi_i\circ\Psi^{-1}$ and the inverse of the power map $h(z)=z^{\gamma_i}$), and an affine map that can be represented as composition of two affine maps $\B_j$ and $\A_j$. The conformal distortion of $\A_j$ equals $\CD(\A_j)=1+\O(2^{-q\kappa})$ by assumption. The affine map $\B_j$ is defined by sampling the power map $h(z)=z^{\gamma_i}$, $i=i_j$ at the corners of $f_j$ (the origin is placed at vertex $v_i\in\V$) and extending linearly. Since, by the intersection with the domain $\O$, we sample $h(z)$ at some bounded (from below) distance from the origin (vertex $v_i$), Lemma \ref{lem:sampling_conformal_with_triplet} indicates that $\CD(\B_j)=1+\O(2^{-q})$. This implies that the conformal distortion of $g_j^{q_k,O}$ is $1+\O(2^{-q\kappa})$. As before, extension theorems for quasiconformal mappings implies that $g^{q_k}\mid_O$ is $1+\O(2^{-q\kappa})$-quasiconformal. Taking arbitrary large $Q$ and observing the series tail $\Phi^q$ for $q\geq Q$ implies that $g\mid_O$ is $1$-quasiconformal (for relevant convergence theorem, see for example Theorem 5.2 in \cite{lehto1973quasiconformal} page 29). So we get that for arbitrary domain $O\subset \T$ such that $\closure{O}\cap\Psi\parr{\V}=\emptyset$,
$g\mid_O$ is conformal. This means that $g$ is conformal in $\interior{\T}\setminus\Psi(\V)$. However $\Psi(\V)$ is a set of discrete points and therefore standard extension theorems of conformal mappings imply $g$ is conformal in the whole $\interior{\T}$. Since $g$ fixes the corners of the equilateral it has to be the identity map since it is the only conformal bijection from $\T\too\T$ that fixes the three corners.

Since every infinite subsequence of $\set{g^q}$ has a subsequence converging locally uniformly to the identity, $g^q$ has to converge to the identity map locally uniformly.

\end{proof}

%%%%%%%%%%%%%%%%%%%%%%%%%%%%%%%%%%%%%%%%%%%%%%%%%%%%%%%%%%%%%%%%%%%%%%%%%%%%%%%%%%%%%%%%%
\section{Convex Spaces of Simplicial Quasiconformal mappings}
\label{s:algorithm}
In this section we develop the convex spaces of simplicial quasiconformal mappings, prove the main theorem of this paper (Theorem \ref{thm:main}), and develop two algorithms for approximating the uniformization map $\Psi:\S\too\T$.

Let us recap what we know so far:
\begin{enumerate}
\item
\textbf{Uniformization:} there exists a unique conformal homeomorphism of the polyhedral surface to the equilateral $\Psi:\S\too\T$ taking three prescribed boundary vertices $v_1,v_2,v_3$ to the corners $t_1,t_2,t_3$ of the equilateral $\T$.

\item
\textbf{Feasibility:} for sufficiently large $q$, $\FF^{\S^q}_{1+\O(2^{-q\kappa})}$ is not empty, where $\kappa=\min\set{\kappa_i}$, $\kappa_i=\min\set{1,\gamma_i}$. Moreover, $\FF^{\S^q}_\K$ is not empty, where $\K=(K_1,...,K_{|\F^q|})$, and $K_j=1+\O(2^{-q\kappa_i})$ for $f_j\in\F^q$ associated with the chart $i=i_j$.

\item \textbf{Approximation:} any series $\set{\Phi_q}_{q\geq Q}$,  $\Phi_q \in \FF^{\S^q}_{1+\O(2^{-q\kappa})}$, where $\kappa>0$ is a constant, converge to the uniformization map $\Psi$ on compact subsets of $\interior{\T}$.
\end{enumerate}

These observations motivate the meta-algorithm shown in Algorithm \ref{alg:approx_of_uniformization_map} for approximating the uniformization map $\Psi$.

\begin{algorithm}[h!]%\small
% \dontprintsemicolon
  \KwIn{Polyhedral surface $\S=(\V,\E,\F)$, with three positively-oriented boundary points $v_1,v_2,v_3\in\V$\\ \quad \quad \quad \
        Subdivision level $q\geq 2>0$ \\ \quad \quad \quad \
        Constant $0<c<1$}
  \KwOut{Quasiconformal simplicial map $\Phi^q$}
%\SetLine
\BlankLine
\BlankLine
\tcp{Set the charts}
\ForAll{$v_i\in\V$}
{
Calculate $\theta_i,\gamma_i,\kappa_i$\;
}
\BlankLine
\tcp{Subdivide}
Subdivide $\S$ $q$ times to get $\S^q=\parr{\V^q,\E^q,\F^q}$\;
\BlankLine
\tcp{compute discrete conformal structure}
\ForAll{$f_j\in\F^q$}
{
Associate $f_j$ with a chart $i=i_j$\;
Choose a coordinate system on $f_j$ such that $v_i$ is placed at the origin\;
Map the vertices $v_{j_\ell},\ell=1,2,3$ of $f_j$ by $z\mapsto z^{\gamma_i}$, and set as $\Delta_j$\;
$K_j\leftarrow 1+2^{-c q\kappa_{i}}$\;
}
$\K\leftarrow \parr{K_1,K_2,..,K_{|\F^q|}}$\;
\BlankLine
\tcp{Approximate}
Find a map $\Phi^q\in\FF^{\S^q}_{\K}$\;
\BlankLine
Return $\Phi^q$\;
\caption{Approximation of the uniformization map.}
\label{alg:approx_of_uniformization_map}%
\end{algorithm}

The correctness of this algorithm can be explained as follows: since $0<c<1$, the series $2^{-c q \kappa_{i_j}}$ approaches zero slower than $\O(2^{-q\kappa_{i_j}})$, Theorem \ref{thm:feasibility_result} guarantees that for sufficiently large $q$ the set $\FF^{\S^q}_\K$ with $K_j=1+2^{-c q\kappa_{i_j}}$ is not empty. Theorem \ref{thm:approximation} then implies that $\Phi^q$ will converge to the uniformization map locally uniformly.

At this point we are left with the problem of finding an element in the non-empty set $\FF^{\S^q}_\K$. This is a non-trivial task since $\FF^{\S^q}_\K$ is a non-convex set and hence finding an element in this space is a non-convex problem. Nevertheless, as we explain next, we can identify a collection of convex subsets of $\FF^{\S^q}_\K$ the union of which spans $\FF^{\S^q}_\K$.

The space $\FF^{\S^q}_\K$ contains orientation-preserving homeomorphic simplicial maps $\Phi^q$ such that $\DCD(\Phi^q\mid_{f_j})=\CD(\A_j)\leq K_j$, where $\A_j$ is defined by mapping the triangle
$\Delta_j=\Delta_j\parr{z_1,z_2,z_3}$, $z_\ell=\vphi_i(v_{j_\ell})$, $\ell=1,2,3$ (i.e., the triangle formed by the vertices of $f_j$
written in local coordinates) to the triangle
$\Delta(\Phi^q(v_{j_1}),\Phi^q(v_{j_2}),\Phi^q(v_{j_3}))$ (i.e., the image of $f_j$ under $\Phi^q$).
Next, we write down characterizing equations for $\FF^{\S^q}_\K$.

We start by examining a single face $f_j\in \F^q$. We will work with its image in the local coordinates as given by its associated chart $\set{\Omega_i,\vphi_i}$, $i=i_j$. An affine map $\A=\A_j$ mapping $\Delta_j$ to the complex plane can be written (using standard complex notation) as $$\A(z)=\alpha z + \beta \bbar{z} + \delta, \quad \alpha,\beta,\delta \in\C.$$
In what follows we will only consider the linear part of $\A$, namely $\alpha z+\beta\bbar{z}$. We will abuse notation and denote $\A(z)=\alpha z +\beta\bbar{z}$, the translation does not affect conformal distortion nor the orientation and therefore it will be safe to ignore it for now. We ask that $\A$ is orientation-preserving and $\CD(\A)\leq K$, for some arbitrary but fixed $K\geq 1$. The conformal distortion of $\A$ can be calculated using the formula (see e.g., \cite{ahlfors1966lectures}, page 11)
\begin{equation}\label{e:CD_of_A}
    \CD(\A) = \frac{|\alpha|+|\beta|}{|\alpha|-|\beta|} \leq K.
\end{equation}
And the Jacobian of $\A$,
\begin{equation}\label{e:J_of_A}
    \J(\A) = |\alpha|^2-|\beta|^2.
\end{equation}

$\A$ is an orientation preserving homeomorphism if and only if $\J(\A) > 0$. Therefore, necessary and sufficient conditions for the affine map $\A$ to be $K$-quasiconformal are \begin{eqnarray}
% \nonumber to remove numbering (before each equation)
  \CD(\A) & \leq & K \label{e:D_of_A_leq_K}\\
  \J (\A) & > & 0 \label{e:J_of_A_ge_0}
\end{eqnarray}

We will write the above conditions in a more convenient form. First, using (\ref{e:J_of_A}), (\ref{e:J_of_A_ge_0}) is equivalent to
 \begin{equation}\label{e:alpha_ge_beta}
    |\alpha| > |\beta|.
 \end{equation}
 Second, using (\ref{e:CD_of_A}), (\ref{e:D_of_A_leq_K}) can be rearranged, taking into account that (\ref{e:alpha_ge_beta}) implies $\alpha \ne 0$ we get
\begin{equation}\label{e:alpha_leq_dil_beta}
    |\beta| \leq \frac{K-1}{K+1}|\alpha|, \quad \alpha\ne 0
\end{equation}

Let us denote the collection of orientation preserving planar linear transformations with conformal distortion $K$ by $\FK$. The collection $\FK$ can now be thought of as a subset of $\C\times\C$ where a pair $(\alpha,\beta)\in\C\times\C$ represents an element (without translation) $\A\in\FK$, namely, $\A(z)=\alpha z+ \beta \bbar{z}$.
%Note that $\FK$ is scale invariant, that is, if $\A(z)=\alpha z+\beta\bbar{z} \in \FK$, then also $\wh{\A}=r\alpha z + r\beta \bbar{z} \in \FK$, for $r\in \Real\setminus\set{0}$.

 In this parameterization we characterize the maximal convex subspaces of $\FK$. As we prove next there is one parameter family of maximal convex subspaces $\FKtau\subset\FK\subset \C\times \C$ defined by:

%\begin{floatingfigure}[r]{0.25\textwidth}
%%\begin{wrapfigure}{r}{0.3\textwidth}
%  \begin{center}
%    \includegraphics[width=0.25\textwidth]{figures/FKtet.pdf}%\vspace{-1.5cm}
%  \end{center}
%  %\caption{The Projection approach.}
%  %\label{fig:multi_connected}
%%\end{wrapfigure}
%\end{floatingfigure}

%\begin{eqnarray}
% % \nonumber to remove numbering (before each equation)
%   |\beta| & \leq & r \label{e:FFKtet_1}\\
%   \re \parr{e^{-\i \tau} \alpha} & \geq & \frac{K+1}{K-1}r \label{e:FFKtet_2}\\
%   r & > & 0 \label{e:FFKtet_3}
% \end{eqnarray}
%
%
\begin{eqnarray}
 % \nonumber to remove numbering (before each equation)
   |\beta| & \leq & \frac{K-1}{K+1}\re \parr{e^{-\i \tau} \alpha},\quad \alpha\ne0  \label{e:FFKtet_1}
 \end{eqnarray}

From this definition we see that $\FKtau$ are convex subsets of $\FK$ (this can verified directly by taking convex combinations of elements in $\FKtau$). These convex spaces span $\FF_K$, $$\FK=\bigcup_{\tau\in[0,2\pi)}\FKtau.$$  Lastly, $\FKtau$ are maximal convex subsets of $\FK$. We say that  $U\subset \FK$ is a \emph{maximal convex subset} if every convex set $V \subset \FK$ that contains it $U\subset V$ has to be equal to it, $U=V$.

\begin{lem}\label{lem:optimality_of_FKtet}
$\FKtau$, $\tau\in[0,2\pi)$ are maximal convex subsets of $\FK$.
\end{lem}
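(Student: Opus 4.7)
The plan is to establish maximality by contrapositive: given any point $(\alpha_0,\beta_0)\in\FK\setminus\FKtau$, I aim to exhibit some $(\alpha_1,\beta_1)\in\FKtau$ and $t\in(0,1)$ such that the convex combination $t(\alpha_0,\beta_0)+(1-t)(\alpha_1,\beta_1)$ lies outside $\FK$. Once this is available, any convex $V$ with $\FKtau\subsetneq V\subset\FK$ would have to contain both $(\alpha_0,\beta_0)\in V\setminus\FKtau$ and $(\alpha_1,\beta_1)\in\FKtau\subset V$, hence their bad convex combination, contradicting $V\subset\FK$; this forces $V=\FKtau$. Writing $c=(K-1)/(K+1)$, the convexity of $\FKtau$ itself is immediate from the identity $|\beta|=\sup_{\phi}\re(e^{-\i\phi}\beta)$, which realizes $\FKtau$ as the intersection of the closed real half-spaces $\re(e^{-\i\phi}\beta)\le c\,\re(e^{-\i\tau}\alpha)$ as $\phi$ ranges over $[0,2\pi)$.

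The construction of the bad companion $(\alpha_1,\beta_1)$ splits by the sign of $\re(e^{-\i\tau}\alpha_0)$. In the case $\re(e^{-\i\tau}\alpha_0)\le 0$, I will choose $\alpha_1=-\alpha_0$, which places $\alpha_1$ in the half-plane $\re(e^{-\i\tau}\alpha_1)\ge 0$ required by $\FKtau$, together with $\beta_1=0$ when $\beta_0\ne 0$. The midpoint is $(0,\beta_0/2)$, which fails the orientation constraint $\alpha\ne 0$ imposed by $\FK$. If instead $\beta_0=0$, then $(\alpha_0,0)\notin\FKtau$ forces the strict inequality $\re(e^{-\i\tau}\alpha_0)<0$, so one may take $\beta_1$ to be any sufficiently small nonzero complex number in the admissible disk of $\FKtau$; the midpoint again has $\alpha=0$ with nonzero $\beta$, escaping $\FK$.

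The nontrivial case is $\re(e^{-\i\tau}\alpha_0)>0$, where membership in $\FK\setminus\FKtau$ forces $|\beta_0|>c\,\re(e^{-\i\tau}\alpha_0)$ together with $|\beta_0|\le c|\alpha_0|$. After rotating so that $\tau=0$ and writing $\alpha_0=a+\i b$ with $a>0$, these inequalities imply $b\ne 0$ (else $|\beta_0|\le ca$). My plan here is the reflection choice $\alpha_1=a-\i b$ and $\beta_1=(ca/|\beta_0|)\beta_0$, which lies on the boundary of $\FKtau$ since $|\beta_1|=ca=c\,\re\alpha_1$. Averaging cancels the imaginary parts of $\alpha$ while adding the magnitudes of $\beta$: the midpoint has $\alpha_{1/2}=a$ and $|\beta_{1/2}|=\tfrac{1}{2}(|\beta_0|+ca)>ca=c|\alpha_{1/2}|$, violating the quasiconformality bound $|\beta|\le c|\alpha|$ and thus lying outside $\FK$.

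The main obstacle I expect is the careful bookkeeping of degenerate and boundary situations: ensuring that the chosen $(\alpha_1,\beta_1)$ genuinely lies in $\FKtau$ (including the $\alpha_1\ne 0$ condition inherited from $\FK$), and that the borderline configurations $\re(e^{-\i\tau}\alpha_0)=0$ or $\beta_0=0$ do not fall between the two cases. These subtleties turn on keeping strict versus weak inequalities in line with the definition of $\FKtau$, but once that is done the verifications are short linear calculations.
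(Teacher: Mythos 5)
Your proof is correct and reaches the conclusion by a genuinely different construction than the paper's. The paper uses one uniform companion: it pairs $(\alpha^*,\beta^*)\in\FK\setminus\FKtau$ with $\bigl(e^{\i\tau}\tfrac{K+1}{K-1}|\beta^*|,\,\beta^*\bigr)\in\FKtau$ (same $\beta$, with $\alpha$ rotated to the $\tau$-direction on the boundary circle) and shows that for sufficiently small $\lambda$ the interpolated $\alpha(\lambda)$ has modulus strictly below $\tfrac{K+1}{K-1}|\beta^*|$ while $\beta$ is held fixed, so the segment leaves $\FK$. You instead split on the sign of $\re(e^{-\i\tau}\alpha_0)$ and exhibit an explicit midpoint violation: cancellation of $\alpha$ (breaking the condition $\alpha\ne 0$) in the half-plane case, and in the main case a reflection of $\alpha_0$ across the $\tau$-axis paired with $\beta_0$ rescaled onto the boundary of $\FKtau$, so that averaging kills the imaginary part of $e^{-\i\tau}\alpha$ while the moduli of the $\beta$'s add, giving $|\beta|>\tfrac{K-1}{K+1}|\alpha|$ at the midpoint. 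Your route costs a case analysis but is fully explicit (no limiting $\lambda$) and, more substantively, covers the degenerate case $\beta_0=0$ with $\re(e^{-\i\tau}\alpha_0)<0$, where the paper's companion collapses to $(0,0)\notin\FKtau$ and its argument silently fails. One caveat common to both proofs (really to the definition itself): since (\ref{e:FFKtet_1}) excludes $\alpha=0$, the single point $(0,0)$ --- which is the midpoint of $(e^{\i(\tau+\pi/2)},0)$ and $(e^{\i(\tau-\pi/2)},0)$, both in $\FKtau$ --- is removed from the closed set that your half-space argument shows to be convex, so $\FKtau$ as literally written is convex only up to this degenerate boundary point; that is an artifact of the definition, not of either argument.
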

\begin{proof}
Let $U \subset \FK$ be a convex subset such that $\FKtau \subsetneqq U$. Let $(\alpha^*,\beta^*)\in U\setminus \FKtau$. By definition (\ref{e:FFKtet_1}) $\FKtau$ contains the linear map $(e^{\i \tau}\frac{K+1}{K-1}|\beta^*|,\beta^*)$. Now since $U$ is convex it contains the convex combinations $$\parr{\alpha(\lambda),\beta(\lambda)}:=\parr{\parr{1-\lambda}e^{\i \tau}\frac{K+1}{K-1}|\beta^*| + \lambda \alpha^*, \beta^*}\in U.$$
Since $(\alpha^*,\beta^*)\notin\FK$ we have
that $$|\beta^*| \frac{K+1}{K-1} >  \re \parr{e^{-\i \tau} \alpha^*}.$$
Using this inequality, \begin{eqnarray*}
\abs{\alpha(\lambda)}^2 &<& \parr{\frac{K+1}{K-1}}^2\abs{\beta^*}^2\parr{1-\lambda^2}+\abs{\alpha^*}\lambda^2 \\
& = & \parr{\frac{K+1}{K-1}}^2\abs{\beta^*}^2 + \lambda^2\parr{\abs{\alpha^*} - \parr{\frac{K+1}{K-1}}^2\abs{\beta^*}^2}.
\end{eqnarray*}
Hence, for sufficiently small $\lambda$ we get that $$\abs{\alpha(\lambda)}<\frac{K+1}{K-1}\abs{\beta^*}=\frac{K+1}{K-1}\abs{\beta(\lambda)},$$
which is a contradiction with the fact that $U\subset\FK$.\end{proof}

The subset $\FKtau\subset\FK$ contains affine maps $\A(z)=\alpha z+\beta \bbar{z} +\delta \in \FK$ with distortion $\CD(\A)=k\leq K$ that in addition satisfy
\begin{equation}\label{e:angle_restriction}
\abs{\arg \alpha - \tau}\leq\cos^{-1}\parr{\frac{K+1}{K-1}\frac{k-1}{k+1}}.
\end{equation}
As usual, the argument in the l.h.s.~should be understood up-to addition of $k2\pi$, $k\in \Z$.
To see this, let $\A(z)=\alpha z+\beta\bbar{z}+\delta$ have conformal distortion $\CD(\A)=k\leq K$, and assume it satisfies equation (\ref{e:angle_restriction}) then we need to show that $\A\in\FK$. Indeed, we have
\begin{eqnarray*}
\re(e^{-\i\tau}\alpha)&=&\abs{\alpha}\cos\parr{\arg(\alpha)-\tau} \\
&\geq& \abs{\alpha}\frac{K+1}{K-1}\frac{k-1}{k+1} \\
&=& \abs{\beta}\frac{K+1}{K-1},
\end{eqnarray*}
where in the last equality we used the fact that $\CD(\A)=k$. These inequalities also show that the angle in the r.h.s. of (\ref{e:angle_restriction}) cannot be enlarged while $\A\in\FKtau$. This shows that (\ref{e:angle_restriction}) is necessary and sufficient condition for affine map $\A$ with distortion $\CD(\A)=k$ to belong to $\FKtau$. We proved
\begin{lem}\label{lem:angle_restriction}
The convex space $\FKtau$ contains all affine maps $\A$ with conformal distortion $k=\CD(\A)\leq K$ that their rotation (i.e., $\arg \partial_z\A$) satisfies equation (\ref{e:angle_restriction}).
\end{lem}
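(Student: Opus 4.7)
The plan is to derive the defining inequality of $\FKtau$, equation (\ref{e:FFKtet_1}), directly from the two hypotheses (the bound on distortion and the bound on rotation), treating this as essentially an algebraic rearrangement.

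First I would translate the hypothesis $\CD(\A)=k$ via formula (\ref{e:CD_of_A}) into the ratio identity
\[
|\beta| \;=\; \frac{k-1}{k+1}\,|\alpha|.
\]
The degenerate case $k=1$ forces $\beta=0$ and a pure similarity, which belongs to every $\FKtau$ trivially, so it is enough to handle $k>1$; note also that then $|\alpha|>|\beta|\geq 0$ so $\alpha\neq 0$ and the ``$\alpha\neq 0$'' clause in (\ref{e:FFKtet_1}) is automatic. Observe further that the $\cos^{-1}$ in (\ref{e:angle_restriction}) is well-defined precisely when $\tfrac{K+1}{K-1}\cdot\tfrac{k-1}{k+1}\in[0,1]$, which is equivalent to $k\leq K$, so the hypothesis is consistent.

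Second, using that $\cos$ is decreasing on $[0,\pi]$, the angle restriction (\ref{e:angle_restriction}) rewrites as
\[
\cos(\arg\alpha - \tau) \;\geq\; \frac{K+1}{K-1}\cdot\frac{k-1}{k+1}.
\]
Expanding $\re(e^{-\i\tau}\alpha)=|\alpha|\cos(\arg\alpha-\tau)$ and chaining with the ratio identity from the first step yields
\[
\re(e^{-\i\tau}\alpha) \;\geq\; |\alpha|\cdot\frac{K+1}{K-1}\cdot\frac{k-1}{k+1} \;=\; \frac{K+1}{K-1}\,|\beta|,
\]
which on rearrangement is precisely the defining inequality (\ref{e:FFKtet_1}). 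This puts $\A$ in $\FKtau$.

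For the sharpness remark immediately following the lemma — that the bound in (\ref{e:angle_restriction}) cannot be enlarged — I would simply note that every inequality in the above chain is sharp: equality in (\ref{e:angle_restriction}) forces equality in (\ref{e:FFKtet_1}), so any strictly larger angular gap violates the defining condition of $\FKtau$. I do not anticipate a genuine obstacle here; the content of the lemma is the translation between the ``modulus'' form (\ref{e:FFKtet_1}) and the ``modulus plus argument'' form (\ref{e:angle_restriction}), and the only care needed is in handling the boundary cases $k=1$ and $k=K$, together with verifying that the $\cos^{-1}$ is in its admissible range.
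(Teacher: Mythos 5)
Your proof is correct and is essentially the paper's own argument: convert $\CD(\A)=k$ into $|\beta|=\frac{k-1}{k+1}|\alpha|$, expand $\re(e^{-\i\tau}\alpha)=|\alpha|\cos(\arg\alpha-\tau)$, and chain with the angle hypothesis to recover the defining inequality (\ref{e:FFKtet_1}), with the same observation on sharpness. One small slip: a pure similarity ($k=1$, $\beta=0$) does \emph{not} belong to every $\FKtau$ (it fails whenever $|\arg\alpha-\tau|>\pi/2$, since then $\re(e^{-\i\tau}\alpha)<0$), but this is harmless because your main chain already covers $k=1$ — the angle hypothesis gives $\cos(\arg\alpha-\tau)\geq 0$, hence $\re(e^{-\i\tau}\alpha)\geq 0=\frac{K+1}{K-1}|\beta|$ — so the separate casing can simply be dropped.
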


A corollary of this observation is that there does not exist a convex subset $U\subset\FK$ such that it contains all affine maps $\A$ with conformal distortion $k=\CD(\A)\leq K$ with larger angle limit than described in (\ref{e:angle_restriction}). Indeed, if there was such a convex space it would be a superset of $\FKtau$ which will contradict the maximality of $\FKtau$ shown in Lemma \ref{lem:optimality_of_FKtet}. We summarize:
\begin{cor}\label{cor:no_possible_larger_angle}
There is no convex subset of the $K$-quasiconformal simplicial mapping space, $U\subset \FK$, that contains all the affine maps $\A$ with conformal distortion $\CD(\A)\leq K$ and larger rotation angle limit than $\FKtau$ for any $\tau\in [0,2\pi)$.
\end{cor}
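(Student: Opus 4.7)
The plan is to derive this corollary directly from Lemmas \ref{lem:optimality_of_FKtet} and \ref{lem:angle_restriction}: the geometric work is already done and only a short contradiction argument remains. I would argue by contradiction: suppose that for some $\tau \in [0,2\pi)$ there exists a convex $U \subset \FK$ which contains every affine map $\A$ with $\CD(\A) \leq K$ whose rotation $\arg \partial_z \A$ lies in a window around $\tau$ strictly larger than the one described by equation (\ref{e:angle_restriction}).

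The key step is to observe, via Lemma \ref{lem:angle_restriction}, that $\FKtau$ equals \emph{exactly} the set of affine maps in $\FK$ whose rotation satisfies (\ref{e:angle_restriction}). Consequently, every element of $\FKtau$ also lies in $U$, so $U \supseteq \FKtau$. Moreover, because the window available to $U$ is strictly wider, $U$ must contain at least one affine map $\A_0 \in \FK$ with $\CD(\A_0)\leq K$ whose rotation falls outside the window of (\ref{e:angle_restriction}); by Lemma \ref{lem:angle_restriction} again, $\A_0 \notin \FKtau$, yielding the strict inclusion $U \supsetneq \FKtau$.

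To close, I invoke Lemma \ref{lem:optimality_of_FKtet}, which asserts that $\FKtau$ is a maximal convex subset of $\FK$ and therefore admits no convex proper superset inside $\FK$ — contradicting the existence of $U$. I do not anticipate any real obstacle: the only small point to check is that the witness $\A_0$ is indeed admissible (i.e., $\A_0 \in \FK$), but this is explicit in the hypothesis on $U$, and the remainder is bookkeeping between the two lemmas.
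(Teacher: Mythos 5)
Your proposal is correct and follows essentially the same route as the paper: the paper also deduces the corollary by noting that such a $U$ would be a convex proper superset of $\FKtau$ (using the necessity-and-sufficiency of (\ref{e:angle_restriction}) from Lemma \ref{lem:angle_restriction}), contradicting the maximality established in Lemma \ref{lem:optimality_of_FKtet}. Your write-up merely makes explicit the two containment steps that the paper leaves implicit.
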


Intuitively, the maximal convex space $\FKtau$ restricts the rotation angle of its member affine transformations $\A$ (i.e., $\arg \alpha$) to be the maximal possible around a prescribed rotation by $\tau$ radians.
For perfect similarities $\A(z)=\alpha z$ the rotation angle range is $(\tau-\frac{\pi}{2},\tau+\frac{\pi}{2})$, while for maps with conformal distortion $k=K$ only rotation by exactly  $\tau$ is allowed. Affine maps with intermediate distortion values $1 <  k < K$ will be allowed rotations in between these two extremal cases, as expressed in equation (\ref{e:angle_restriction}).

% \yl{the following is equivalent to the proposition in the sig submission}
% \begin{lem}
% $\FMKtet$ contains affine mappings with conformal distortion $k\leq K$ %under the following restrictions: XXX
% \end{lem}

We now move to the general triangulations $\S^q$, and their corresponding spaces $\FF^{\S^q}_{\K}$ of homeomorphic simplicial maps $\Phi^q:\S^q\too\T$ where $\DCD\parr{\Phi^q\mid_{f_j}}\leq K_j$, $\K=\parr{K_1,..,K_{|\F^q|}}$. Our goal is to find an element in $\FF^{\S^q}_\K$. We start by formulating a set of equations that characterize $\FF^{\S^q}_\K$ exactly. We derive necessary conditions for $\Phi^q\in\FF^{\S^q}_\K$, and later show that they are also sufficient. Let $\Phi^q\in\FF^{\S^q}_\K$, then, over each face $f_j\in\F^q$, $\Phi\mid_{f_j}$ is an affine map, in the local coordinates it has the form
$$\A_j(z) = \alpha_j z + \beta_j \bbar{z} + \delta_j.$$
Also set $u_i=\Phi^q(v_i)\in\C$. By definition we have $\CD(\A_j)=K_j\leq K$. That is, $\A_j$ should satisfy the equation
\begin{equation}\label{e:alpha_j_leq_dil_beta_j}
    |\beta_j| \leq \frac{K_j-1}{K_j+1}|\alpha_j|, \quad \alpha_j\ne 0
\end{equation}
Furthermore, since $\Phi^q$ is continuous, each affine map $\A_j$ should map the vertices of its face $f_j$ to $u_i$. Namely, if we denote by $j_\ell$, $\ell=1,2,3,$ the indices of the vertices of face $f_j$, and $z_{j_\ell}=\vphi_i(v_{j_\ell})$ their complex coordinates, then
\begin{equation}\label{e:compatibility_eqs}
\A_j(z_{j_\ell})=\alpha_j z_{j_\ell} + \beta_j~ \bbar{z_{j_\ell}}+\delta_j = u_{j_\ell},\quad \ell=1,2,3.
\end{equation}
Note that these are homogeneous linear equations.

Another set of necessary conditions is related to the boundary conditions. We want the boundary $\partial\S^q$ to be mapped to the boundary of the equilateral $\partial \T$. This can be achieved by first forcing the corner vertices $v_1,v_2,v_3$ to be mapped to $\T$'s corners $t_1,t_2,t_3$. For each face $f_j$ that contains one of the corner vertices $v_\ell$, $\ell=1,2,3$ we add an equation of the form
\begin{equation}\label{e:corner_vertices_to_triangle_corners}
\A_j(z_{\ell})=\alpha_j z_{\ell} + \beta_j~ \bbar{z_{\ell}}+\delta_j = t_\ell.
\end{equation}
Second, the rest of the boundary vertices $v_i\in \partial \S^q$ are mapped to the edges of $\T$. This can be achieved by ordering the vertices that are supposed to be mapped to one edge of $\T$, say $(t_1,t_2)$, w.r.t. the positive orientation:~$v_{i_1}=v_1,v_{i_2},...,v_{i_{n-1}},v_{i_n}=v_2$. Now introduce new real variables $\lambda_2,...,\lambda_{n-1}$ and set the equations
\begin{eqnarray} \label{e:boundary_edges_1}
0&\leq&\lambda_2 \leq \lambda_3 \leq ... \leq \lambda_{n-1} \leq 1\\\label{e:boundary_edges_2}
u_{i_\ell}&=&(1-\lambda_\ell)t_1+\lambda_\ell t_2,\quad \ell=2,..,n-1.
\end{eqnarray}
We can use $\leq$ instead of $<$ in equation (\ref{e:boundary_edges_1}) since (\ref{e:alpha_j_leq_dil_beta_j}) implies strict inequality automatically.

To recap, we derived necessary conditions for $\Phi^q$ to belong to $\FF^{\S^q}_\K$. Let us show that they are also sufficient, that is, every piecewise affine map that satisfies equations (\ref{e:alpha_j_leq_dil_beta_j}),(\ref{e:compatibility_eqs}),(\ref{e:corner_vertices_to_triangle_corners}),(\ref{e:boundary_edges_1}),(\ref{e:boundary_edges_2})
belongs to $\FF^{\S^q}_\K$.
\begin{prop}\label{prop:eqs_for_Phi_are_sufficient}
Let $\Phi^q$ be a piecewise affine map over $\S^q$ that satisfies equations (\ref{e:alpha_j_leq_dil_beta_j}),(\ref{e:compatibility_eqs}),(\ref{e:corner_vertices_to_triangle_corners}),(\ref{e:boundary_edges_1}),(\ref{e:boundary_edges_2}).
Then, $\Phi^q\in\FF^{\S^q}_\K$.
\end{prop}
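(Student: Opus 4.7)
The plan is to verify each defining property of $\FF^{\S^q}_\K$ in turn, leveraging the equations as local constraints and then invoking a global injectivity result to upgrade these local properties to a genuine homeomorphism onto $\T$.

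First I would address the per-face quasiconformality. Equation (\ref{e:alpha_j_leq_dil_beta_j}) directly yields $|\beta_j| \leq \frac{K_j-1}{K_j+1}|\alpha_j|$ with $\alpha_j \neq 0$, which after rearrangement gives $|\alpha_j| > |\beta_j|$ (since $K_j \geq 1$) and $\frac{|\alpha_j|+|\beta_j|}{|\alpha_j|-|\beta_j|} \leq K_j$. By (\ref{e:J_of_A}) and (\ref{e:CD_of_A}) this means $\J(\A_j) > 0$ and $\CD(\A_j) \leq K_j$, so each $\A_j$ is an orientation-preserving, non-degenerate affine map with the desired bound on conformal distortion.

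Second, I would show $\Phi^q$ is globally continuous. The compatibility equations (\ref{e:compatibility_eqs}) force any two adjacent faces $f_j, f_{j'}$ sharing a vertex $v_i$ to assign the common image $u_i$ at that vertex; since both restrictions are affine and agree on the two endpoints of each shared edge, they agree along the entire edge. Combined with the per-face definition this makes $\Phi^q$ a continuous simplicial map on $\S^q$. Third, I would verify the boundary mapping: (\ref{e:corner_vertices_to_triangle_corners}) fixes the three corners on the corresponding corners of $\T$, while (\ref{e:boundary_edges_1})--(\ref{e:boundary_edges_2}) express the interior boundary vertices on each side as a monotonic convex combination, so each boundary edge of $\S^q$ maps affinely onto a segment of the corresponding boundary edge of $\T$ and the induced map $\partial \S^q \to \partial \T$ is a bijective, orientation-preserving piecewise affine homeomorphism.

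The main obstacle is promoting these local (per-face orientation-preservation and per-face distortion) plus boundary (bijective on $\partial \S^q$) properties to a global homeomorphism $\S^q \to \T$. This is precisely the setting addressed in \cite{Lipman:2012:BDM:2185520.2185604}: a continuous simplicial map whose every face is orientation-preserving and non-degenerate and whose boundary restriction is a homeomorphism onto the boundary of a simply-connected planar domain is itself a homeomorphism onto that domain. I would apply this result (with $\T$ as the target domain) to conclude that $\Phi^q : \S^q \to \T$ is an orientation-preserving homeomorphism. Finally, continuity of $(\Phi^q)^{-1}$ is automatic because $\Phi^q$ is simplicial on a compact triangulation, so all conditions defining $\FF^{\S^q}_\K$ are met, completing the proof.
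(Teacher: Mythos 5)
Your proposal is correct and follows essentially the same route as the paper's own (much terser) proof: continuity from the compatibility equations, per-face orientation-preservation and distortion bounds from the cone condition, and the global bijectivity result of \cite{Lipman:2012:BDM:2185520.2185604} to upgrade local orientation-preservation plus boundary control to a homeomorphism onto $\T$. Your additional detail on edge agreement and on the boundary monotonicity (where strictness of the $\lambda_\ell$ ordering follows from non-degeneracy of the faces) simply makes explicit what the paper leaves implicit.
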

\begin{proof}
From equations (\ref{e:compatibility_eqs}) we get that $\Phi^q$ is continuous. Equations (\ref{e:alpha_j_leq_dil_beta_j}) imply that over each face $f_j$, $\Phi$ is an orientation-preserving affine map with conformal distortion bounded by $K_j$. This implies that $\Phi:\S^q\too\T$ is a bijection (see \cite{Lipman:2012:BDM:2185520.2185604} for a proof). Since $\Phi^q$ is an orientation preserving piecewise-affine continuous bijection it is an orientation preserving homeomorphism. Therefore $\Phi^q\in\FF^{\S^q}_\K$.
\end{proof}

Going back to Algorithm \ref{alg:approx_of_uniformization_map}, we can find an element in $\FF^{\S^q}_\K$ by simply looking for variables $\set{u_i},\set{\alpha_j,\beta_j,\delta_j},\set{\lambda_\ell}$ that satisfies equations (\ref{e:alpha_j_leq_dil_beta_j}),(\ref{e:compatibility_eqs}),(\ref{e:corner_vertices_to_triangle_corners}),(\ref{e:boundary_edges_1}),(\ref{e:boundary_edges_2}).
In fact, equation (\ref{e:compatibility_eqs}) can be used to express the variables $\set{\alpha_j,\beta_j,\delta_j}$ as linear combinations of the variables $\set{u_i}$ and therefore eliminate these variables from the equations. All the mentioned equations are convex equations except equations (\ref{e:alpha_j_leq_dil_beta_j}). Since the bounded conformal distortion equations (\ref{e:alpha_j_leq_dil_beta_j}) are not convex, finding a feasible solution $\Phi^q$ is not a convex problem. However, we can replace equation (\ref{e:alpha_j_leq_dil_beta_j}) with the convex conditions (\ref{e:FFKtet_1}) (describing the maximal convex subset)  and achieve a convex problem. That is, for each face $f_j\in\F^q$, we need to assign an angle $\tau_j\in[0,2\pi)$, and replace (\ref{e:alpha_j_leq_dil_beta_j}) with the equation
\begin{eqnarray}
 % \nonumber to remove numbering (before each equation)
   |\beta_j| & \leq & \frac{K_j-1}{K_j+1}\re \parr{e^{-\i \tau_j} \alpha_j},\quad \alpha_j\ne 0.   \label{e:FFKtet_j_1}
 \end{eqnarray}
Let us denote by $\FF^{\S^q}_{\K,\Tau}$, $\Tau=\parr{\tau_1,\tau_2,...,\tau_{|\F^q|}}\in[0,2\pi)^{|\F^q|}$ the convex space defined by equations (\ref{e:FFKtet_j_1}),(\ref{e:compatibility_eqs}),
(\ref{e:corner_vertices_to_triangle_corners}),(\ref{e:boundary_edges_1}),(\ref{e:boundary_edges_2}). By construction and Proposition \ref{prop:eqs_for_Phi_are_sufficient}, $\FF^{\S^q}_{\K,\Tau}\subset \FF^{\S^q}_\K$. In Algorithm \ref{alg:approx_of_uniformization_map} we set $K_j=1+2^{-cq\kappa_{i_j}}$, where $i_j$ is the index of the chart associated with face $f_j$. We proved in Theorem \ref{thm:feasibility_result} that there exists a simplicial map, let us denote it $\wh{\Phi}^q$, such that the discrete conformal distortion $\DCD(\wh{\Phi}^q\mid_{f_j})$, that is $\CD(\wh{\A}_j)$ equals $1+\O(2^{-q\kappa_{i_j}})$. Denote $\wh{\A}_j(z)=\wh{\alpha}_jz+\wh{\beta}_j\bbar{z}+\wh{\delta}$, and $\wh{\tau}_j=\arg \wh{\alpha}_j$. Equation (\ref{e:angle_restriction}) now implies that $\FF^{\S^q}_{\K,\Tau}$ will contain $\wh{\Phi}^q$ if  $$\abs{\wh{\tau}_j - \tau_j}\leq\cos^{-1}\parr{\frac{2+2^{-cq\kappa_{i_j}}}{2^{-cq\kappa_{i_j}}}\frac{\O(2^{-q\kappa_{i_j}})}{2+\O(2^{-q\kappa_{i_j}})}} ,$$  for all $f_j\in\F^q$, and the r.h.s. converge to $\frac{\pi}{2}$ as $q\too \infty$ (remember that the constant inside the $\O$-notation is set per chart and there is a finite number of charts). Therefore, for any $\eps>0$, there exists sufficiently large $q$ such that $\wh{\Phi}^q\in\FF^{\S^q}_{\K,\Tau}$, as long as $\abs{\tau_j-\wh{\tau}_j}<\frac{\pi}{2}-\eps$. (as before, the arguments should be understood up-to addition of $k2\pi$, $k\in \Z$) We have proved:
\begin{lem}\label{lem:feasiblity_in_convex_Ftau}
Let $\wh{\Phi}_j:\S^q\too \T$ be the simplicial map, the existence of which is set by Theorem \ref{thm:feasibility_result}. Denote its restriction to face $f_j$ in local chart coordinates by $\wh{\A}_j(z)=\wh{\alpha}_jz+\wh{\beta}_j\bbar{z}+\wh{\delta}$, and set $\wh{\tau}_j=\arg \wh{\alpha}_j$. \\ For arbitrary $\eps>0$ there exists sufficiently large $q$ such that if $\Tau=(\tau_1,..,\tau_{|\F^q|})$ satisfy $$\max_{f_j\in\F^q}\min_{k\in\Z}\abs{\tau_j-\wh{\tau}_j+k2\pi}<\frac{\pi}{2}-\eps,$$
then, $$\wh{\Phi}^q\in\FF^{\S^q}_{\K,\Tau}$$
\end{lem}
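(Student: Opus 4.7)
The plan is to apply Lemma \ref{lem:angle_restriction} face-by-face, using the estimates on $\CD(\wh{\A}_j)$ and on $\arg\wh{\alpha}_j$ provided by Theorem \ref{thm:feasibility_result}. Since $\wh{\Phi}^q$ is obtained by sampling the uniformization map $\Psi$ at the vertices of $\S^q$, it automatically satisfies the linear continuity equations (\ref{e:compatibility_eqs}), the corner constraints (\ref{e:corner_vertices_to_triangle_corners}), and the boundary-edge ordering (\ref{e:boundary_edges_1})--(\ref{e:boundary_edges_2}) (with the auxiliary scalars $\lambda_\ell$ defined by the arclength position of $\Psi(v_{i_\ell})$ along the relevant edge of $\T$). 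The genuine content of the lemma is therefore only the verification of the convex inequality (\ref{e:FFKtet_j_1}) on every face $f_j$, with $K_j=1+2^{-cq\kappa_{i_j}}$ and with $\tau_j$ in the window around $\wh{\tau}_j$ stipulated in the hypothesis.

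Writing $k_j:=\CD(\wh{\A}_j)$, Lemma \ref{lem:angle_restriction} shows that (\ref{e:FFKtet_j_1}) holds for $\wh{\A}_j$ if and only if
$$\min_{k\in\Z}\abs{\arg\wh{\alpha}_j-\tau_j+k2\pi}\leq \cos^{-1}\!\parr{\frac{K_j+1}{K_j-1}\cdot\frac{k_j-1}{k_j+1}}.$$
Theorem \ref{thm:feasibility_result} supplies $k_j=1+O(2^{-q\kappa_{i_j}})$, hence $\frac{k_j-1}{k_j+1}=O(2^{-q\kappa_{i_j}})$, while the algorithm's choice $K_j=1+2^{-cq\kappa_{i_j}}$ yields $\frac{K_j+1}{K_j-1}=O(2^{cq\kappa_{i_j}})$. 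Their product is $O(2^{-(1-c)q\kappa_{i_j}})$, which tends to $0$ as $q\to\infty$ because $0<c<1$ and $\kappa_{i_j}>0$; so the arccos in the display tends to $\frac{\pi}{2}$.

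To finish, I need the above estimate to be uniform in $j$. Because the atlas is finite, the rates $\kappa_{i_j}$ and the implicit constants in the $O$-notation take only finitely many distinct values over all $f_j\in\F^q$ and all $q$. Hence for any fixed $\eps>0$ there is a $q_0$ such that for $q\geq q_0$ the arccos bound simultaneously exceeds $\frac{\pi}{2}-\eps$ for every face. Combined with the hypothesis $\max_{f_j}\min_{k\in\Z}\abs{\tau_j-\wh{\tau}_j+k2\pi}<\frac{\pi}{2}-\eps$, the angle condition (\ref{e:angle_restriction}) then holds on every face; Lemma \ref{lem:angle_restriction} gives that each $\wh{\A}_j$ satisfies (\ref{e:FFKtet_j_1}), and in view of the first paragraph this places $\wh{\Phi}^q$ in $\FF^{\S^q}_{\K,\Tau}$.

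The only real obstacle is this uniformity step; it would fail if either the $O$-constants or the rates $\kappa_{i_j}$ could degenerate in the subdivision process. Both are controlled by the fact that the original atlas has a fixed finite size and that every face in $\F^q$ inherits its chart (and therefore its $\kappa_{i_j}$) from an ancestor in $\F^0$, so the comparison $2^{-(1-c)q\kappa}<\cos(\frac{\pi}{2}-\eps)$ only has to be checked for finitely many values of $\kappa$.
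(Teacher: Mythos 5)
Your proposal is correct and follows essentially the same route as the paper: the paper's own argument (given in the paragraph immediately preceding the lemma) likewise applies the angle criterion of Lemma \ref{lem:angle_restriction} face-by-face, computes that $\frac{K_j+1}{K_j-1}\cdot\frac{k_j-1}{k_j+1}=\O(2^{-(1-c)q\kappa_{i_j}})\to 0$ so the arccos bound tends to $\frac{\pi}{2}$, and invokes the finiteness of the chart atlas for uniformity in $j$. Your explicit verification of the continuity and boundary equations is a harmless addition that the paper leaves implicit (since $\wh{\Phi}^q\in\FF^{\S^q}_{\K}$ by construction).
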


We are now in a position to prove the main theorem of this paper:\\

\textbf{Theorem \ref{thm:main}}
\textit{
Let $\Psi:\S\too\T$ be the uniformization map of a disk-type polyhedral surface to the equilateral $\T$ taking three prescribed boundary vertices of $\S$ to the corners of $\T$. Let $\S^q$ be the $q^{th}$-level subdivided version of $\S$. \\
If, for an arbitrary but fixed $\eps>0$, the argument of $\Psi'$ is known up to an error of $\pm(\frac{\pi}{2}-\eps)$, then one can construct a series of non-empty convex spaces $U^q=\set{\Phi^q}\subset \FF^{\S^q}$ of simplicial maps of $\S^q$ such that:
\begin{enumerate}
\item
Every map $\Phi^q \in U^q$ is $K$-quasiconformal (QC) homeomorphism that maps $\S$ onto $\T$, with some constant $K$ independent of $q$.
\item
Every series $\set{\Phi^q}_{q\geq Q}$, where $\Phi^q\in U^q$, converges locally uniformly to the uniformization map $\Psi$. That is, $\Phi^q\circ\Psi^{-1}:\T\too\T$ converges uniformly in any compact subset of $\interior{\T}$ to the identity map $I_d:\T\too\T$.
\end{enumerate}}
\begin{proof}
Let $\eps>0$ be arbitrary but fixed. Let $\wh{\Phi}^q:\S^q\too\T$ be the simplicial map, the existence of which is assured by Theorem \ref{thm:feasibility_result}. Denote, as before, $\wh{\A}_j(z)=\wh{\alpha}_jz+\wh{\beta}_j\bbar{z}+\wh{\delta}$, and $\wh{\tau}_j=\arg \wh{\alpha}_j$, $z_\ell=\vphi_i(v_{j_\ell})$, $i=i_j$, $\ell=1,2,3$, and $\Delta=\Delta(z_1,z_2,z_3)$ the face $f_j$ in the local coordinate chart, and the centroid $\wt{z}=\frac{1}{3}\parr{z_1+z_2+z_3}$.
Let (by Theorem  \ref{thm:feasibility_result} again)  $Q>0$ be sufficiently large such that for all $q\geq Q$, $f_j\in\F^q$,
\begin{equation}\label{e:for_main_thm_1}
\min_{k\in \Z}\abs{\wh{\tau}_j-\arg \brac{\parr{\Psi\circ\vphi^{-1}_i}'(\wt{z})}+k2\pi}<\frac{\eps}{2}.
\end{equation}
Now we set $U^q=\FF^{\S^q}_{\K,\Tau}$, where $\K=\parr{K_1,..,K_{|\F^q|}}$, $K_j=1+2^{-cq\kappa_{i}}$, for some constant $1>c>0$, $\Tau=\parr{\tau_1,..,\tau_{|\F^q|}}$. $\FF^{\S^q}_{\K,\Tau}$ is defined by equations (\ref{e:FFKtet_j_1}),(\ref{e:compatibility_eqs}),
(\ref{e:corner_vertices_to_triangle_corners}),(\ref{e:boundary_edges_1}),(\ref{e:boundary_edges_2}), and Proposition \ref{prop:eqs_for_Phi_are_sufficient} implies that  $\FF^{\S^q}_{\K,\Tau}\subset \FF^{\S^q}_{K}$, in particular, it contains only $K$-quasiconformal homeomorphic simplicial maps from $\S\too\T$, for some sufficiently large but constant $K$.  By assumption we know the argument of $\Psi'$ up-to $\frac{\pi}{2}-\eps$ and so we take some $\tau_j$ such that
\begin{equation}\label{e:for_main_thm_2}
\min_{k\in \Z}\abs{\tau_j-\arg \brac{ \parr{\Psi\circ\vphi^{-1}_i}'(\wt{z})}+k2\pi}<\frac{\pi}{2}-\eps.
\end{equation}
By triangle inequality, (\ref{e:for_main_thm_1}) and (\ref{e:for_main_thm_2}) imply
$$\min_{k\in\Z}\abs{\tau_j-\wh{\tau}_j+k2\pi}<\frac{\pi}{2}-\frac{\eps}{2}.$$
Now we want to use Lemma \ref{lem:feasiblity_in_convex_Ftau}.  Set $Q>0$ sufficiently large as required by Lemma \ref{lem:feasiblity_in_convex_Ftau} (we could have set $Q>0$ sufficiently large a-priori) and we get $\wh{\Phi}^q\in\FF^{\S^q}_{\K,\Tau}$, and so it is not empty, for all $q\geq Q$. Now take an arbitrary series $\set{\Phi_q}_{q\geq Q}$, $\Phi^q\in \FF^{\S^q}_{\K,\Tau}$, where $K_j=1+\O(2^{-cq{\kappa}_i})=1+\O(2^{-q\wt{\kappa}})$, $\wt{\kappa}=c\min\set{\kappa_i}>0$. The convergence of the series $\Phi^q$ to the uniformization map is implied by Theorem \ref{thm:approximation}.
\end{proof}

An immediate consequence of Theorem \ref{thm:main} is that if we have approximations $\tau_j$ of the argument of $\Psi'$ up-to an error of $\pm\frac{\pi}{2}$ then we can simply solve the convex feasibility problem $\Phi^q\in\FF^{\S^q}_{\K,\Tau}$ to achieve an approximation of the uniformization map. In particular, this can be achieved by solving the following convex minimization problem:
\begin{eqnarray}\nonumber
\min &\eps& \\ \label{e:convex_feasibility_given_Tau}
\mathrm{s.t.} & &\\ \nonumber
|\beta_j| & \leq & \frac{K_j-1}{K_j+1}\re \parr{e^{-\i \tau_j} \alpha_j} +\eps,\\ \nonumber
\textrm{and} && eqs.~(\ref{e:compatibility_eqs}),(\ref{e:corner_vertices_to_triangle_corners}),(\ref{e:boundary_edges_1}),(\ref{e:boundary_edges_2}).
\end{eqnarray}

Next we suggest two algorithms on how to set $\Tau$ without any prior knowledge: the first is exhaustive, guaranteed to find the feasible $\Tau$ but requires solving (\ref{e:convex_feasibility_given_Tau}) very large number of times (exponential in the number of faces of $\S$). This algorithm is guaranteed to work, but its computational complexity makes it impractical for applications. The second algorithm is an iterative one that start from arbitrary $\Tau^0$ and construct a series of $\Tau^1,\Tau^2,...$ until finds a feasible space $\FF^{\S^q}_{\K,\Tau}$. The second algorithm is much more efficient than the first one, however lacking a proof that a feasible space is always found (for sufficiently high $q$). Nevertheless, once found a solution, all the theoretical guarantees apply and the convergence and approximation results apply. In practice this algorithm works well as we demonstrate at the end of the paper.

\subsection{First Algorithm (exhaustive)}

We can exhaustively search for a feasible $\Tau$ by testing for each $\tau_j$ three angles, say $\tau_j\in\set{0,\frac{2\pi}{3},\frac{4\pi}{3}}$. This way we are guaranteed to find an assignment $\Tau=(\tau_1,..,\tau_{|\F^q|})$ satisfying the condition of Theorem \ref{thm:main}. However, testing all such assignments requires solving (\ref{e:convex_feasibility_given_Tau}) for $3^{|\F^q|}$ convex spaces $\FF^{\S^q}_{\K,\Tau}$, which is computationally daunting. However, it is clear that the argument of the derivative of the uniformization map cannot change arbitrarily in small areas, and indeed, as expected, we will show that the number of convex spaces needed to be searched is at-most of order $3^{\abs{\F}}$, which is still high but at-least independent of the subdivision level $q$.

It is possible to prove this by developing an approximation argument for conformal maps, using the approximation Lemma \ref{lem:sampling_conformal_with_triplet} (observing how the argument of the derivative of the affine maps approximate the arguments of the derivative of the conformal map). However, we will pick a different route which we believe also gives an interesting intuition regarding quasiconformal simplicial maps. Intuitively, the claim is that the argument change in simplicial quasiconformal mappings is linearly (and simply) bounded by the conformal distortion bound. Denote by $d_{\SU}\parr{e^{\i \tau},e^{\i\tau'}}=\min_{k\in \Natural }\abs{\tau-\tau'+2\pi k}$ the distance on the circle $\SU$, then,

%\begin{floatingfigure}[t]{0.2\textwidth}
\begin{figure}[t]
%  \begin{center}
    \includegraphics[width=0.3\textwidth]{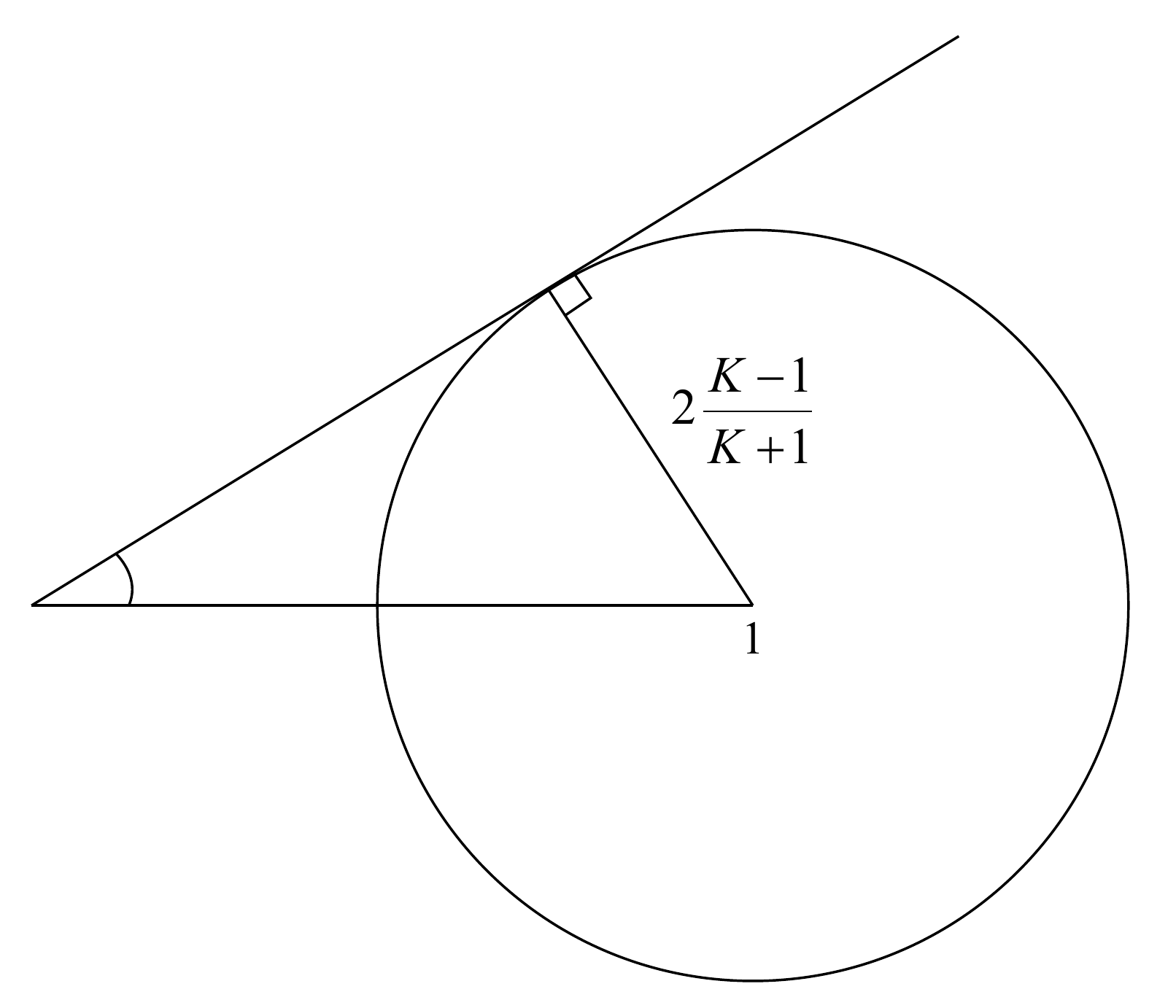}%\vspace{-0.2cm}
%  \end{center}
  \caption{Illustration for the proof of Lemma \ref{lem:arg_bound_for_KQC_maps}.}
  \label{fig:for_proof_bound_arg_KQC}
\end{figure}
%\end{floatingfigure}

\begin{lem}\label{lem:arg_bound_for_KQC_maps}
Let $f_j,f_{j'}$ be two planar faces sharing an edge. Let $\A_j,\A_{j'}$ be two planar affine maps, coinciding on the common edge and have $K$ conformal distortion bound, that is $\CD(\A_j)\leq K$, and $\CD(\A_{j'})\leq K$.
Then $$\abs{d_{\S^1}\parr{\frac{\partial_z\A_j}{\abs{\partial_z\A_j}},\frac{\partial_z\A_{j'}}{\abs{\partial_z\A_{j'}}}}}\leq \sin^{-1}\parr{2\frac{K-1}{K+1}}.$$
\end{lem}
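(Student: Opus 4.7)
The plan is to combine the compatibility of $\A_j$ and $\A_{j'}$ on the shared edge with their individual distortion bounds in order to control the ratio $|\alpha_j-\alpha_{j'}|/(|\alpha_j|+|\alpha_{j'}|)$, and then to translate this Euclidean chord bound into an angular bound on $\arg(\alpha_j/\alpha_{j'})$.

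First I would exploit continuity. Since $\A_j$ and $\A_{j'}$ coincide on a nondegenerate edge segment, and affine maps are determined by their values on any line, the two must agree on the entire line through the edge. Letting $e^{\i\theta}$ denote the direction of that edge, the directional derivative of $\A_\ell(z)=\alpha_\ell z+\beta_\ell\bar z+\delta_\ell$ along the edge equals $\alpha_\ell e^{\i\theta}+\beta_\ell e^{-\i\theta}$; matching these for $\ell=j,j'$ gives $(\alpha_j-\alpha_{j'})e^{\i\theta}+(\beta_j-\beta_{j'})e^{-\i\theta}=0$, hence $|\alpha_j-\alpha_{j'}|=|\beta_j-\beta_{j'}|$. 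Writing $k=\tfrac{K-1}{K+1}$, the $K$-distortion hypothesis is equivalent to $|\beta_\ell|\leq k|\alpha_\ell|$, so the triangle inequality yields
\[
|\alpha_j-\alpha_{j'}|=|\beta_j-\beta_{j'}|\leq |\beta_j|+|\beta_{j'}|\leq k\parr{|\alpha_j|+|\alpha_{j'}|}.
\]

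For the last step, let $\phi=d_{\SU}(\alpha_j/|\alpha_j|,\alpha_{j'}/|\alpha_{j'}|)\in[0,\pi]$ and $t=|\alpha_j|/(|\alpha_j|+|\alpha_{j'}|)\in(0,1)$. A short computation shows
\[
\frac{|\alpha_j-\alpha_{j'}|^2}{(|\alpha_j|+|\alpha_{j'}|)^2}=1-2t(1-t)(1+\cos\phi),
\]
and the minimum over $t\in(0,1)$, attained at $t=\tfrac{1}{2}$, equals $\sin^2(\phi/2)$. Hence $\sin(\phi/2)\leq k$, so $\phi\leq 2\sin^{-1}(k)\leq\sin^{-1}(2k)$, where the last inequality follows from $\sin(2\alpha)=2\sin\alpha\cos\alpha\leq 2\sin\alpha$ in the regime $2k\leq 1$ in which the Lemma is nontrivial. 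This is exactly the claimed bound.

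The only mildly subtle point is identifying the worst case $|\alpha_j|=|\alpha_{j'}|$ (the isosceles configuration suggested by the accompanying figure) in the final chord-to-angle conversion; everything else is routine algebra built directly on the definitions of conformal distortion and on linearity, so I do not anticipate any real obstacle.
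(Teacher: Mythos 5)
Your proof is correct and follows essentially the same route as the paper: continuity across the shared edge yields $|\alpha_j-\alpha_{j'}|=|\beta_j-\beta_{j'}|$, and the distortion bounds then control this difference by $\frac{K-1}{K+1}\bigl(|\alpha_j|+|\alpha_{j'}|\bigr)$. The only divergence is the final chord-to-angle step, where your symmetric minimization over $t$ in fact yields the sharper bound $2\sin^{-1}\bigl(\frac{K-1}{K+1}\bigr)$, which you then correctly relax to the stated $\sin^{-1}\bigl(2\frac{K-1}{K+1}\bigr)$, whereas the paper normalizes by the larger $|\alpha|$ and reads the angle off the disk $\bigl|1-\alpha_{j'}/\alpha_j\bigr|\leq 2\frac{K-1}{K+1}$.
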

\begin{proof}
Let $e_k$ denote the common edge of the faces $f_j,f_{j'}$, and let $e_k=(v_i\,v_{i'})$, $v_i,v_{i'}$ its two vertices. By continuity $$\A_j(v_{i'}-v_i)=\A_{j'}(v_{i'}-v_i).$$
Plugging the expressions $\A_j(z)=\alpha_j z+\beta_j\bbar{z}+\delta_j$, $\A_j(z)=\alpha_{j'} z+\beta_{j'}\bbar{z}+\delta_{j'}$ for $\A_j,\A_{j'}$ we get $$\alpha_j(v_{i'}-v_i)+\beta_j \bbar{(v_{i'}-v_i)} = \alpha_{j'}(v_{i'}-v_i)+\beta_{j'} \bbar{(v_{i'}-v_i)},$$
and so
$$\parr{\alpha_j-\alpha_{j'}}(v_{i'}-v_i) = \parr{\beta_{j'}-\beta_j}\bbar{(v_{i'}-v_i)}.$$
Taking the absolute value and dividing by $\abs{v_{i'}-v_i}$ we get $$\abs{\alpha_j-\alpha_{j'}}= \abs{\beta_{j'}-\beta_j}.$$ Since $\CD(\A_j)\leq K$, we have that $|\beta_j|\leq \frac{K-1}{K+1}|\alpha_j|$, and similarly for $\A_{j'}$. Let us assume w.l.o.g.~that $|\alpha_j|\geq |\alpha_{j'}|.$ Therefore,
$$\abs{\alpha_j-\alpha_{j'}}= \abs{\beta_{j'}-\beta_j} \leq |\beta_j|+|\beta_{j'}|\leq \frac{K-1}{K+1}\parr{|\alpha_j|+|\alpha_{j'}|}\leq 2\frac{K-1}{K+1}|\alpha_j|.$$
% \begin{floatingfigure}[r]{0.4\textwidth}
% %\begin{wrapfigure}{l}{0.3\textwidth}
%   \begin{center}\vspace{-1.0cm}\hspace{2cm}
%     \includegraphics[width=0.2\textwidth]{figures/for_proof_of_bound_arg_KQC.pdf}%\vspace{1cm}
%   \end{center}
%   %\caption{The Projection approach.}
%   %\label{fig:multi_connected}
% %\end{wrapfigure}
% \end{floatingfigure}

Dividing both sides by $|\alpha_j|$, we get
$$\abs{1-\frac{\alpha_{j'}}{\alpha_j}}\leq 2\frac{K-1}{K+1}.$$
It is not hard to check now (see Figure \ref{fig:for_proof_bound_arg_KQC}) that $$\abs{d_{\S^1}\parr{\frac{\alpha_j}{\abs{\alpha_{j}}},\frac{\alpha_{j'}}{\abs{\alpha_{j'}}}}} = \abs{\arg\frac{\alpha_{j'}}{\alpha_j}}\leq\sin^{-1}\parr{2\frac{K-1}{K+1}}$$
\end{proof}

Lemma \ref{lem:arg_bound_for_KQC_maps} indicates that for $\wh{\Phi}^q\in\FF^{\S^q}_{\K}$, $K_j=1+\O(2^{-q\kappa_i})$,  a pair of adjacent faces $f_j,f_{j'}\subset\Omega_i$ associated to the same chart $\set{\Omega_i,\vphi_i}$, $$d_{\SU}\parr{\frac{\wh{\alpha}_j}{\abs{\wh{\alpha}_j}},\frac{\wh{\alpha}_{j'}}{\abs{\wh{\alpha}_{j'}}}}=\O\parr{2^{-q\kappa_i}},$$
where we used a Taylor expansion of $\sin^{-1}(\cdot)$ around zero. This is still not enough since to traverse from one face to another in the $q^{th}$ subdivision level inside $\Omega_i$ we will need $\O(2^q)$ edge crossings, and hence we only get that the difference between different $\arg\wh{\alpha}_j$ for $f_j\subset \Omega_j$ is only bounded by $\O(2^{q(1-\kappa_i)})$ and since $\kappa_i$ can be strictly smaller than one, this is still not a constant. We therefore need to refine our argument. We will show that the argument change in a chart $\Omega_i$ can be bounded independently of the subdivision level $q$.

% \begin{thm}\label{thm:argument_bounded_change}
% Let $\wh{\Phi}^q:\S^q\too \T$ be the simplicial map the existence of which is set by Theorem \ref{thm:feasibility_result}. Let $f_j,f_{j'}\in\F^q$ be two faces associated with chart $\set{\Omega_i,\vphi_i}$. Denote by $\wh{\A}_j(z)=\wh{\alpha}_jz+\wh{\beta}_j\bbar{z}+\wh{\delta}$ the local affine map of $\wh{\Phi}^q\mid_{f_j}$, and similarly for $\wh{\A}_{j'}$.
% Then, $$d_{S^1}\parr{\frac{\wh{\alpha}_j}{\abs{\wh{\alpha}_j}},\frac{\wh{\alpha}_{j'}}{\abs{\wh{\alpha}_{j'}}}}=\O\parr{1}.$$
% \end{thm}
%\begin{}
To do that we note that in Theorem \ref{thm:feasibility_result} the estimate $\DCD(\Phi^q\mid_{f_j})=\CD(\A_j)=1+\O(2^{-q\kappa_{i}})$ (remember that $\kappa_i=\min\set{1,\gamma_i}$) can be refined by replacing it with $$\DCD(\Phi^q\mid_{f_j})=\CD(\A_j)=1+\O\parr{h_j},$$
where $h_j=\max_{\ell=1,2,3}\abs{z_\ell-\wt{z}}$, $\wt{z}=\frac{1}{2}\parr{z_1+z_2+z_3}$ is the centroid of the triangle $\Delta(z_1,z_2,z_3)$, and $z_\ell=\vphi_i(v_{j_\ell})$ are the images in the local coordinates of the vertices of the face $f_j$, and the constant in the $\O$-notation is independent of which face $f_j$ we choose in chart $\set{\Omega_i,\vphi_i}$. Now, take two faces $f_j,f_{j'}\subset \Omega_i$ in chart $i=i_j$. W.l.o.g. we can assume they both belong to one $f\in\F$, that is $f_j,f_{j'}\subset f$, otherwise we can use the following argument few times but not more than half the valence of the vertex $v_i$ (defining the chart $i$). We can also assume that $f_j$ is the face touching the vertex $v_i$, otherwise we can bound the argument change by the sum of bounds when traversing from $f_j$ to the corner face (touching $v_i$) and from the corner face to $f_{j'}$.

%\begin{floatingfigure}[r]{0.25\textwidth}
\begin{wrapfigure}{r}{0.25\textwidth}\vspace{-0.5cm}
  \begin{center}
    \includegraphics[width=0.25\textwidth]{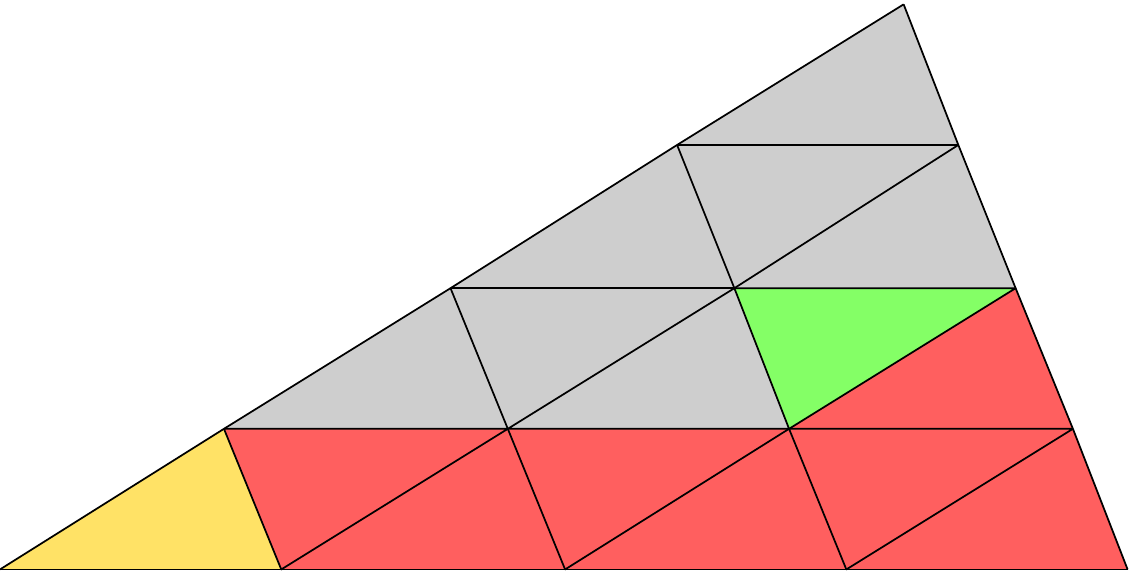}\vspace{-0.2cm}
  \end{center}
  %\caption{The Projection approach.}
  %\label{fig:multi_connected}
\end{wrapfigure}
%\end{floatingfigure}

Starting from (the corner) $f_j$ we traverse to any $f_{j'}$ in the manner depicted in the inset figure (red triangles represent the path connecting the face $f_j$, in yellow, and the face $f_{j'}$, in green). Denote the faces we traverse by $f_j\too f_{j_1} \too f_{j_2} \too ... \too f_{j'}$. We would now like to bound $\O(h_j)+\O(h_{j_1})+\O(h_{j_2})+...+\O(h_{j'})$. Since, as mentioned above, the constant in the $\O$-notation is independent of the choice of face in the chart we actually need to bound $h_j+h_{j_1}+h_{j_2}+...+h_{j'}$. This will be done by using the binomial formula.

%We will also use the same assumption of that lemma. Note that since we already made these assumption once, there is no harm assuming it again, but it seems possible to make the following arguments more general.
\begin{floatingfigure}[r]{0.25\textwidth}
%\begin{wrapfigure}{r}{0.25\textwidth}\vspace{-0.0cm}
  \begin{center}
    \includegraphics[width=0.25\textwidth]{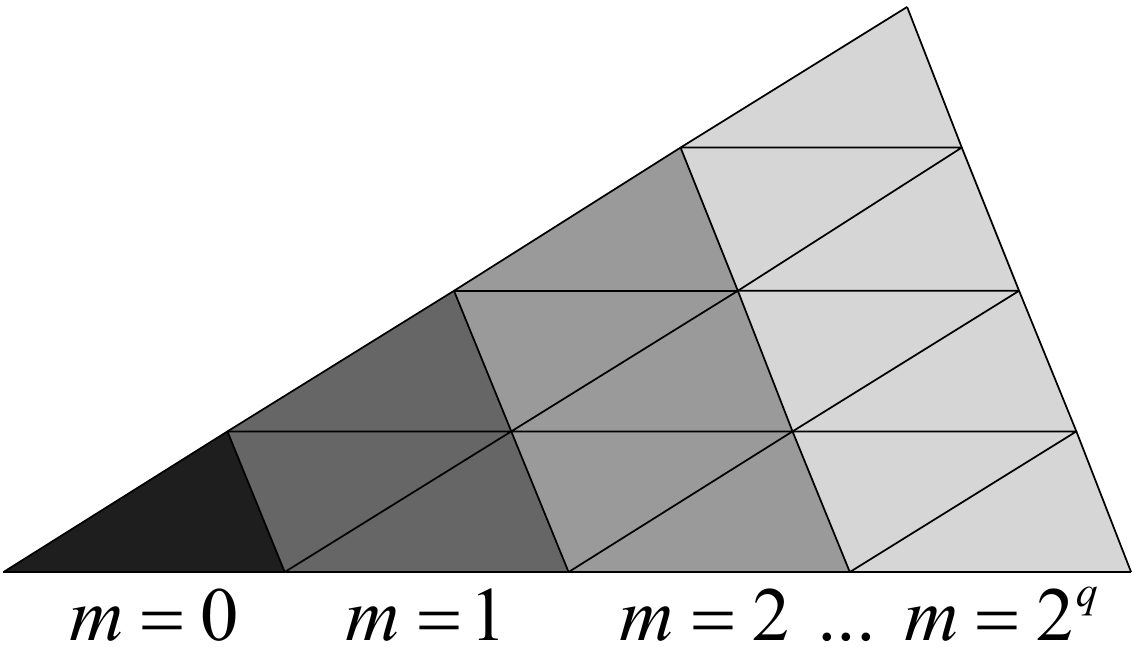}\vspace{-0.2cm}
  \end{center}
  %\caption{The Projection approach.}
  %\label{fig:multi_connected}
%\end{wrapfigure}
\end{floatingfigure}
The first thing we do is to bound $h$ for a general face $f_{j''}\subset f$. We will then use this bound to bound the sum of $h$'s for our path. To bound $h=\max_\ell|z_\ell-\wt{z}|$ it is enough to bound the length of one of the edges of the triangle $\Delta(z_1,z_2,z_3)$, where $z_\ell=\vphi_i(v_{j''_\ell})$, and $v_{j''_\ell}$,$\ell=1,2,3$ are the vertices of the face $f_{j''}$. Let us denote by $m$ the generation of $f_{j''}$ counted from the corner, as shown in the inset.
\begin{floatingfigure}[r]{0.25\textwidth}
%\begin{wrapfigure}{r}{0.25\textwidth}\vspace{-0.5cm}
  \begin{center}
    \includegraphics[width=0.25\textwidth]{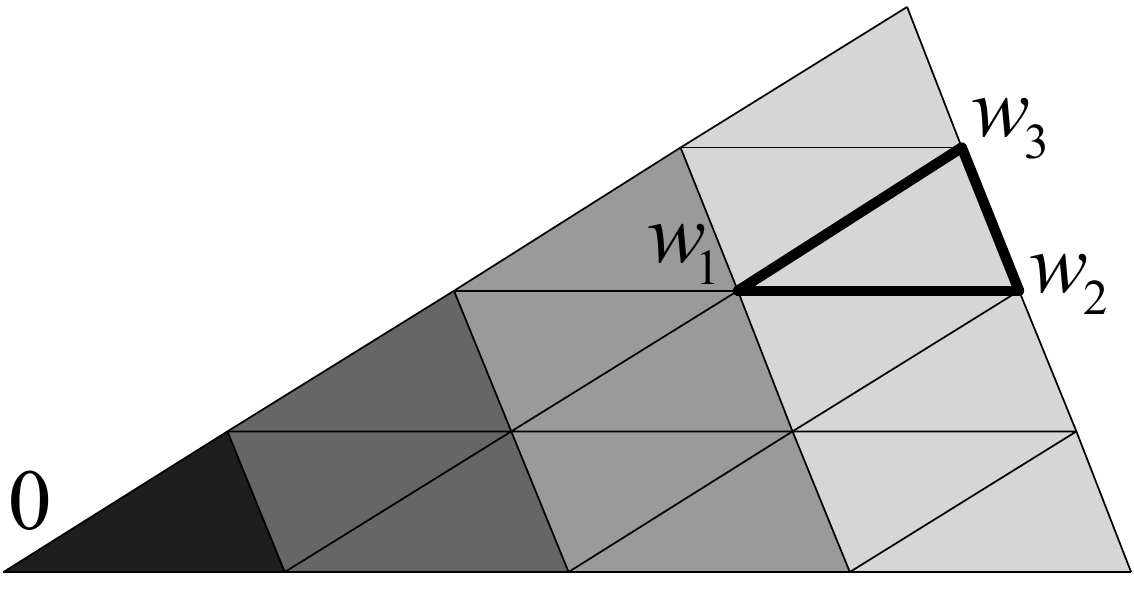}\vspace{-0.2cm}
  \end{center}
  %\caption{The Projection approach.}
  %\label{fig:multi_connected}
%\end{wrapfigure}
\end{floatingfigure}

%We repeat the assumption of Lemma \ref{lem:discretization_of_power_map} that $f$ is isosceles, supported by $\xi,\eta$ ($|\xi|=|\eta|$) and the enclosed angle $\theta$ satisfy $\theta <60.4^\circ$.
%\textcolor{red}{WHERE DO WE USE IT?}

Denote by $w_1,w_2,w_3$ the corners of $f_{j''}$. $\Delta(w_1,w_2,w_3)$ is similar to $f$, and w.l.o.g. we assume $w_1$ is the vertex corresponding to $f$'s corner under the similiarity (see inset figure for illustration). Then the length of one of the edges of $\Delta(z_1,z_2,z_3)$ is
\begin{eqnarray}\nonumber
\abs{w_1^{\gamma_i}-w_3^{\gamma_i}}&=& \abs{w_1^{\gamma_i}+\parr{w_1+\parr{w_3-w_1}}^{\gamma_i}} \\ \label{e:bounding_edge_length_1} &=& \abs{w_1}^{\gamma_i}\abs{1-\parr{1+\parr{\frac{w_3-w_1}{w_1}}}^{\gamma_i}}\\ \nonumber
&=&\abs{w_1}^{\gamma_i}\abs{1-\parr{1+\mu}^{\gamma_i}},
\end{eqnarray}
where we set $\mu=\frac{w_3-w_1}{w_1}$. Now using the binomial expansion $$(1+\mu)^{\gamma_i}=1+\gamma_i \mu+\parr{\begin{array}{c}
\gamma_i \\
2 \\
\end{array}} \mu^2+\parr{\begin{array}{c}
\gamma_i \\
3 \\
\end{array}} \mu^3+...=1+\gamma_i \mu +R,$$
and the reminder term $R$ can be bounded by noting that w.l.o.g. $\gamma_i<1$ (otherwise the edge length is $\O(2^{-q})$ and there is nothing to prove) and therefore by induction $\abs{\parr{\begin{array}{c}
\gamma_i \\
k \\
\end{array}} }\leq 1$, for $k\geq 2$, then,
$$\abs{R}\leq\abs{\mu}^2+\abs{\mu}^3+...=\frac{|\mu|^2}{1-|\mu|}.$$
Plugging back in (\ref{e:bounding_edge_length_1}) we get
\begin{eqnarray}\nonumber
\abs{w_1^{\gamma_i}-w_3^{\gamma_i}}&=& \abs{w_1}^{\gamma_i}\abs{1-\parr{1+\gamma_i\mu+R}}\\\nonumber
&\leq& \abs{w_1}^{\gamma_i}\parr{\gamma_i\abs{\mu}+\frac{|\mu|^2}{1-|\mu|}}.
\end{eqnarray}
It is not hard to see that there exists constants $c_1,c_2>0$ such that  $\abs{w_1}\leq c_1 m2^{-q}$, and $\abs{\mu}\leq c_2 m^{-1}$, and therefore
\begin{eqnarray}
\abs{w_1^{\gamma_i}-w_3^{\gamma_i}}&\leq& c~ 2^{-q\gamma_i} ~m^{\gamma_i-1}, \label{e:bounding_edge_length_2}
\end{eqnarray}
where $c>0$ is some constant. Now we use (\ref{e:bounding_edge_length_2}) to bound the sum of edge lengths in our path described above,
\begin{eqnarray*}
h_j+h_{j_1}+h_{j_2}+...+h_{j'}&\leq& \brac{ c2^{-q\gamma_i}\sum_{m=0}^{2^q}m^{\gamma_i-1}} + mc2^{-q\gamma_i}2^{q(\gamma_i-1)} \\ &\leq& c2^{-q\gamma_i}\parr{ \O\parr{2^{q\gamma_i}} + 2^{q\gamma_i}}\\&=&\O(1),
\end{eqnarray*}
where we used the fact that $m\leq 2^q$, and that $\sum_{m=0}^M m^{\alpha}=\O(M^{\alpha+1})$.
%\end{proof}

This means that we can consider only $\Tau=\parr{\tau_1,..,\tau_{|\F^q|}}$ that are defined on a subdivision level of, say, $q'= \lceil\log_4 q\rceil$. That is, for subdivision levels $q=2,3,4,5,6,...$, we take the assignment vector $\Tau$ to be piecewise constant on faces of $\S^{q'}$, $q'= 1,1,1,2,2,...$ (respectively). Since $\abs{\F^{q'}}=\O\parr{4^{q'}}=\O\parr{q}$, the number of assignments $\Tau$ we need to check to find a feasible map is of order $\O \parr{ 3^{\abs{\F^{q'}}}}=\O\parr{3^q}$. The arguments abobe imply that for sufficiently large $q$ there exists some assignment constant over faces of level $q'$ that will lead to a feasible convex problem (\ref{e:convex_feasibility_given_Tau}).

A comment is in order: as the argument of conformal maps is harmonic, it is very likely that the argument change will be slow even \emph{between} neighboring charts in $\S$ (not even the subdivided versions). Therefore,  it is possible to reduce the complexity of the algorithm for many polyhedral surfaces $\S$ by defining assignments by spreading small number of point seeds over $\S$, dividing $\S$ into sub-areas of ``constant'' argument and transporting the argument using the chart. Obviously, the way the arguments will be transported will change the transported values, but nevertheless this type of argument will result in an algorithm with complexity connected only to the geometry of $\S$ and its uniformization map to $\T$ (and not to its flat ``tessellation'').

\subsection{Second (iterative) algorithm}
Here we suggest a practical and simple algorithm for finding $\Phi^q\in \FF^{\S^q}_{\K}$. In a nutshell, this algorithm iteratively adjusts $\Tau$ based on the result of the previous iteration. Although we do not have a proof that this algorithm will always find a feasible solution (we will leave this issue to future work), it seems to do well in practice, and in the case it does find a feasible solution it is guaranteed to be an approximation of the uniformization map, and has all the theoretical properties of the exhaustive algorithm.

The idea of the algorithm is very simple: start with some arbitrary angle assignment $\Tau^0=(\tau^0_1,...,\tau^0_{|\F^q|})$ and solve the convex feasibility problem $\Phi^q\in\FF^{\S^q}_{\K,\Tau^0}$, where $\K=(K_1,..,K_{|\F^q|})$, and $K_j=1+2^{-cq\kappa_i}$. This is done by solving the cone programming problem (\ref{e:convex_feasibility_given_Tau}). If $\eps\leq0$ a feasible solution was found and we are done. Otherwise set $\tau^1_j=\arg\alpha_j$, where $\alpha_j$ is, as before, $\partial_z\A_j$. The outline of the algorithm is provided in Algorithm \ref{alg:iterative_approx_of_uniformization_map}.

\begin{algorithm}[t]%\small
% \dontprintsemicolon
  \KwIn{Polyhedral surface $\S=(\V,\E,\F)$, with three distinct boundary points $v_1,v_2,v_3\in\V$\\ \quad \quad \quad \
        Subdivision level $q\geq 2>0$ \\ \quad \quad \quad \
        Constants $0<c<1$}
  \KwOut{Quasiconformal simplicial map $\Phi^q$}
%\SetLine
\BlankLine
\BlankLine

\tcp{Set the charts}
\ForAll{$v_i\in\V$}
{
Calculate $\theta_i,\gamma_i,\kappa_i$\;
}
\BlankLine
\tcp{Subdivide}
Subdivide $\S$ $q$ times to get $\S^q=\parr{\V^q,\E^q,\F^q}$\;
\BlankLine
\tcp{compute discrete conformal structure}
\ForAll{$f_j\in\F^q$}
{
Associate $f_j$ with a chart $i=i_j$\;
Choose a coordinate system on $f_j$ such that $v_i$ is placed at the origin\;
Map the vertices $v_{j_\ell},\ell=1,2,3$ of $f_j$ by $z\mapsto z^{\gamma_i}$, and set as $\Delta_j$\;
$K_j\leftarrow 1+2^{-c q\kappa_{i}}$\;
}
$\K\leftarrow \parr{K_1,K_2,..,K_{|\F^q|}}$\;
\BlankLine

\tcp{In the following:~find a map $\Phi^q\in\FF^{\S^q}_{\K}$}
Initiate $\Tau$ arbitrarily: $\tau_j \in [0,2\pi)$\;
\While{$\eps$ is still decreasing}
{Solve (\ref{e:convex_feasibility_given_Tau})\;
Update $\Tau$: $\tau_j \leftarrow \arg \alpha_j$\;
}
\If{$\eps<0$}{Return $\Phi^q$\;}\Else{Return ``Did not find a feasible element in $\FF^{\S^q}_{\K}$.''\;}
\caption{Approximation of the uniformization map (iterative algorithm).}
\label{alg:iterative_approx_of_uniformization_map}%
\end{algorithm}

We have implemented this algorithm in MATLAB and used it to approximate uniformization of polyhedral surfaces.
One useful variation of Algorithm \ref{alg:iterative_approx_of_uniformization_map} is that once a feasible solution is found $\Phi^q\in\FF^{\S^q}_{\K,\Tau}$. It is possible to further improve the result by pushing the conformal distortion per face further down using the following, now feasible, problem:
\begin{eqnarray}\nonumber
\min &\sum_{j}\eps_j& \\ \nonumber
\mathrm{s.t.} & &\\ \label{e:convex_feasibility_given_Tau_improve}
\eps_j & \leq & \eps \\ \nonumber
|\beta_j| & \leq & \frac{K_j-1}{K_j+1}\re \parr{e^{-\i \tau_j} \alpha_j} +\eps_j,\\ \nonumber
\textrm{and} && eqs.~(\ref{e:compatibility_eqs}),(\ref{e:corner_vertices_to_triangle_corners}),(\ref{e:boundary_edges_1}),(\ref{e:boundary_edges_2}).
\end{eqnarray}
These equations take the role of eqs.~(\ref{e:convex_feasibility_given_Tau}). We now iterating with these equations, keep updating $\Tau$ and resolving until $\sum_j \eps_j$ is not decreasing.

Another comment is that in implementing Algorithm \ref{alg:iterative_approx_of_uniformization_map} it is possible to replace the cone conditions in (\ref{e:convex_feasibility_given_Tau}) (and similarly in (\ref{e:convex_feasibility_given_Tau_improve})) with linear constraints to end up with standard linear programming problem:

\begin{eqnarray}\nonumber
\min &\eps& \\ \label{e:convex_feasibility_given_Tau_linprog}
\mathrm{s.t.} & &\\ \nonumber
|\beta_j|_\infty & \leq & \frac{1}{\sqrt{2}}\frac{K_j-1}{K_j+1}\re \parr{e^{-\i \tau_j} \alpha_j} +\eps,\\ \nonumber
\textrm{and} && eqs.~(\ref{e:compatibility_eqs}),(\ref{e:corner_vertices_to_triangle_corners}),(\ref{e:boundary_edges_1}),(\ref{e:boundary_edges_2}),
\end{eqnarray}
where we denote $|\beta|_\infty=\max\set{\abs{\re\beta},\abs{\im\beta}}$. It is shown in \cite{Lipman:2012:BDM:2185520.2185604} (see Proposition 4.2), that this formulation will be equivalent to the cone condition as $K_j\too 1$.

Figures \ref{fig:cat_head} and \ref{fig:igea} show two examples of approximations of the uniformization maps calculated with our iterative algorithm. The first row shows the subdivided series of meshes $\S^0=\S \prec \S^1 \prec \S^2 \prec \S^3$. The second (and third row in Figure \ref{fig:igea}) shows a checkerboard texture mapped to the polyhedral surface by the inverse of the simplicial approximations $\Phi^q$. The bottom row shows the homeomorphic image under $\Phi^q$ of the triangulation $\S^q$ onto the equilateral. Red color indicates conformal distortion. Note how the approximation improve as the refinement level increases.

\begin{figure}[t] %\vspace{-0.5cm}
\centering
\begin{tabular}{@{\hspace{0.0cm}}c@{\hspace{0.05cm}}c@{\hspace{0.0cm}}c@{\hspace{0.0cm}}c@{\hspace{0.0cm}}}
  \includegraphics[width=0.25\columnwidth]{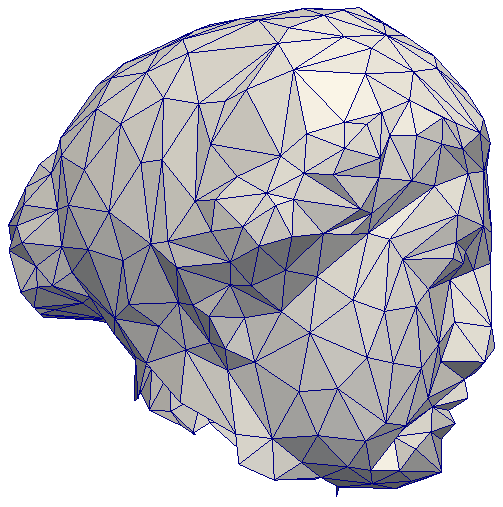}&
    \includegraphics[width=0.25\columnwidth]{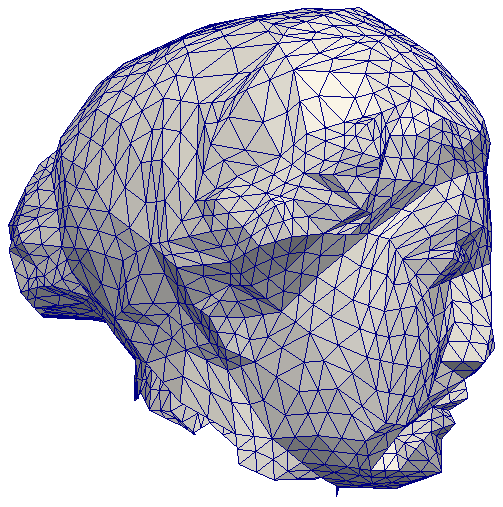}&
      \includegraphics[width=0.25\columnwidth]{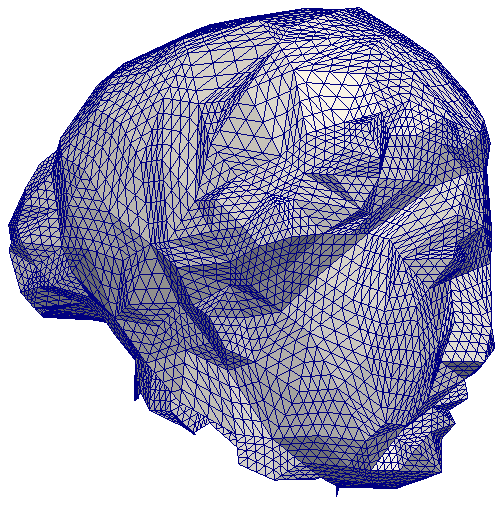}&
        \includegraphics[width=0.25\columnwidth]{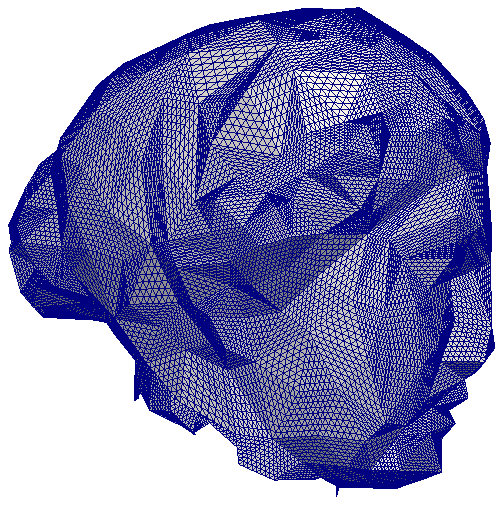}\\
&
    \includegraphics[width=0.25\columnwidth]{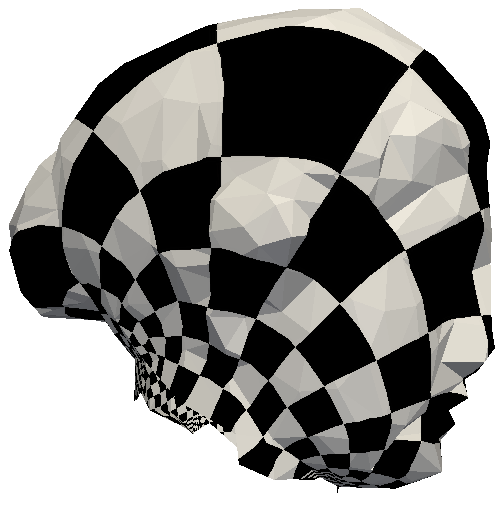}&
      \includegraphics[width=0.25\columnwidth]{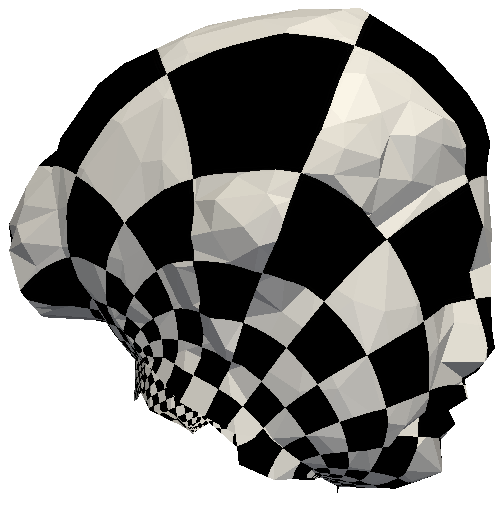}&
        \includegraphics[width=0.25\columnwidth]{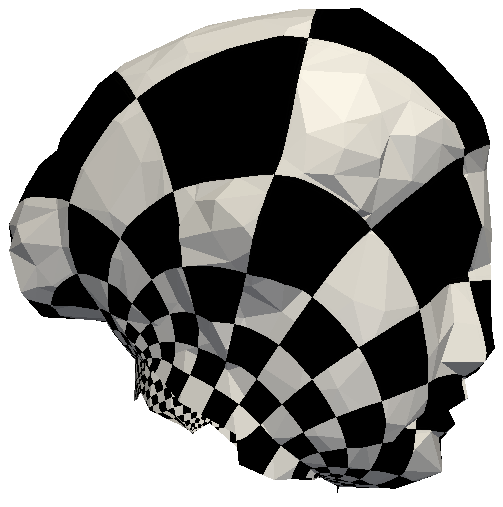}\\

&
    \includegraphics[width=0.25\columnwidth]{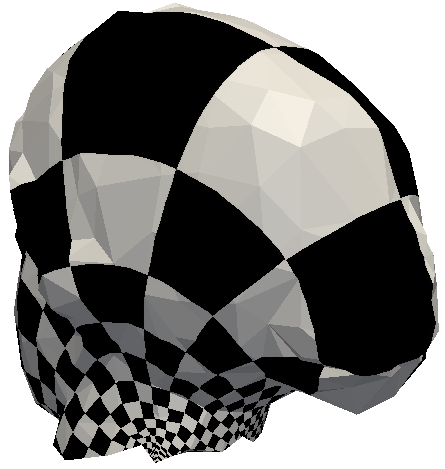}&
      \includegraphics[width=0.25\columnwidth]{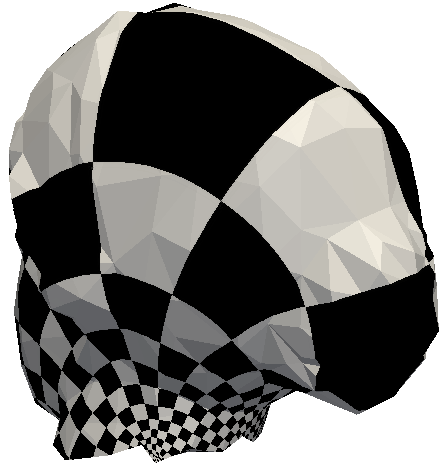}&
        \includegraphics[width=0.25\columnwidth]{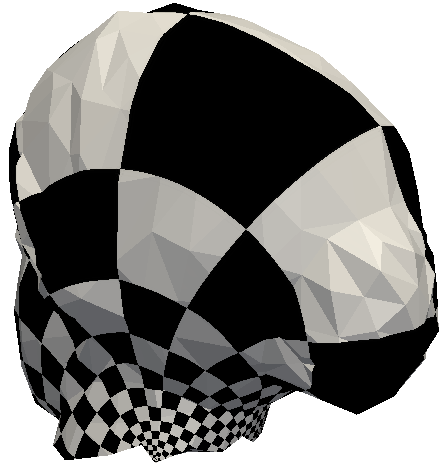}\\

&
    \includegraphics[width=0.25\columnwidth]{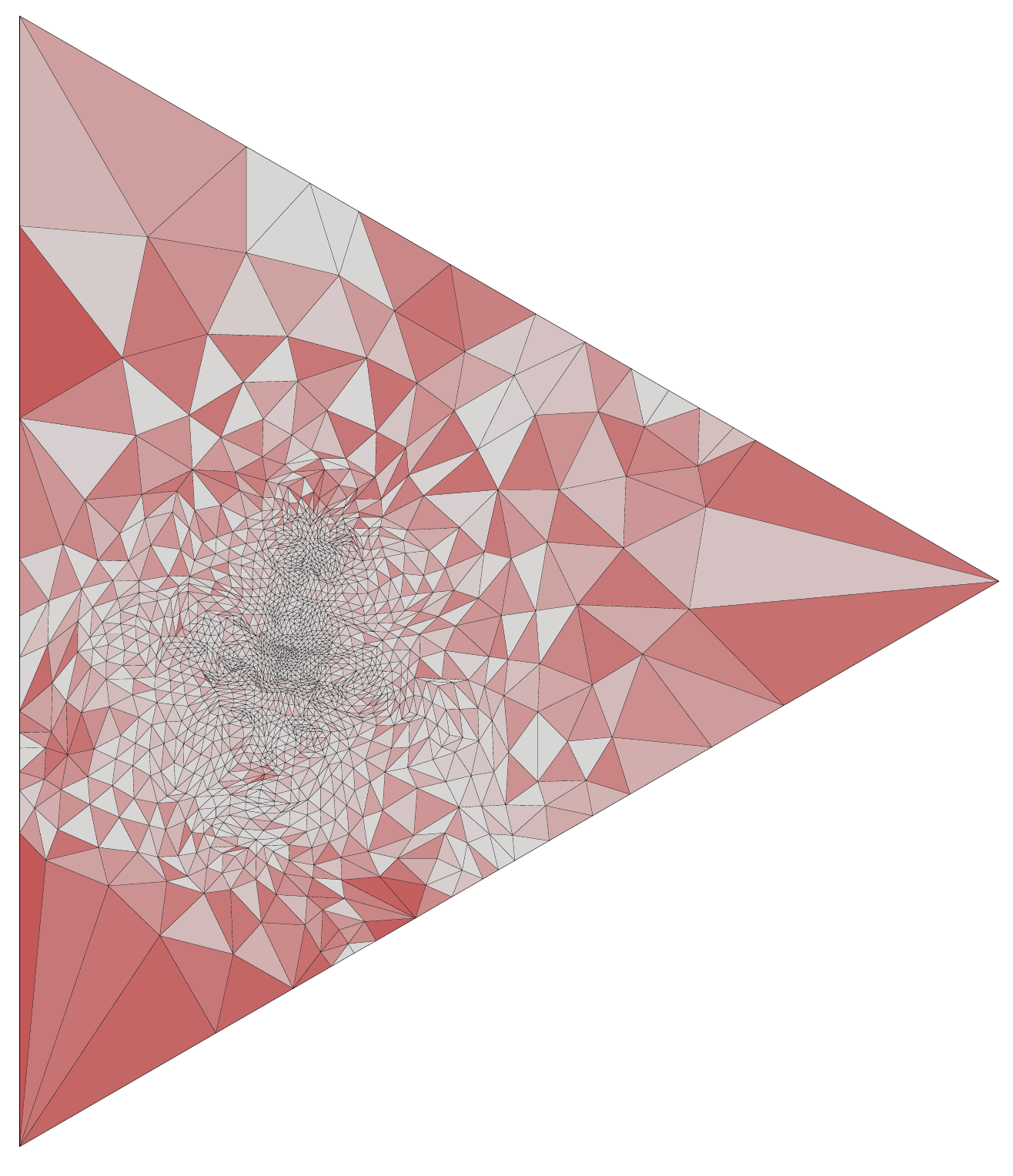}&
      \includegraphics[width=0.25\columnwidth]{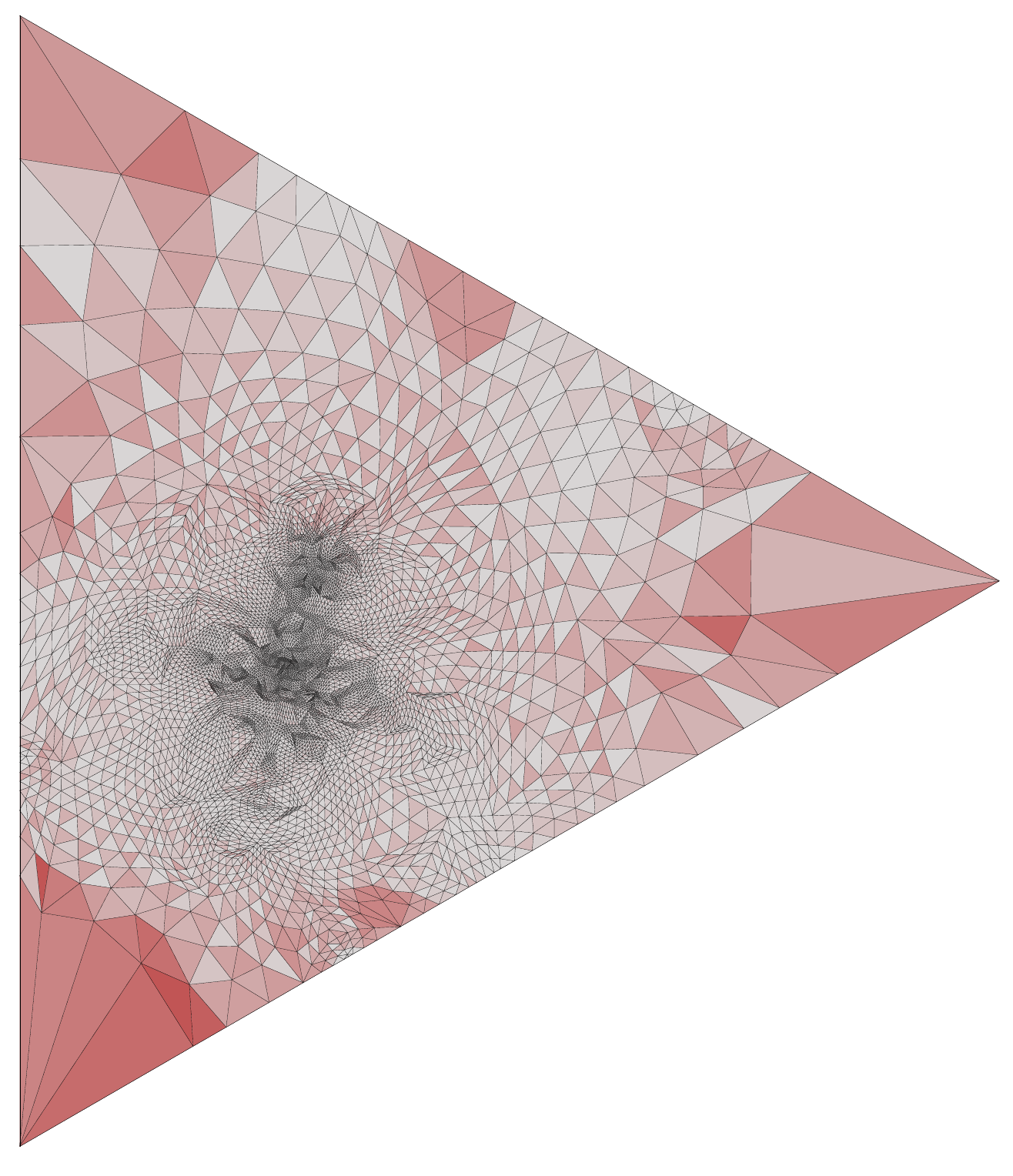}&
        \includegraphics[width=0.25\columnwidth]{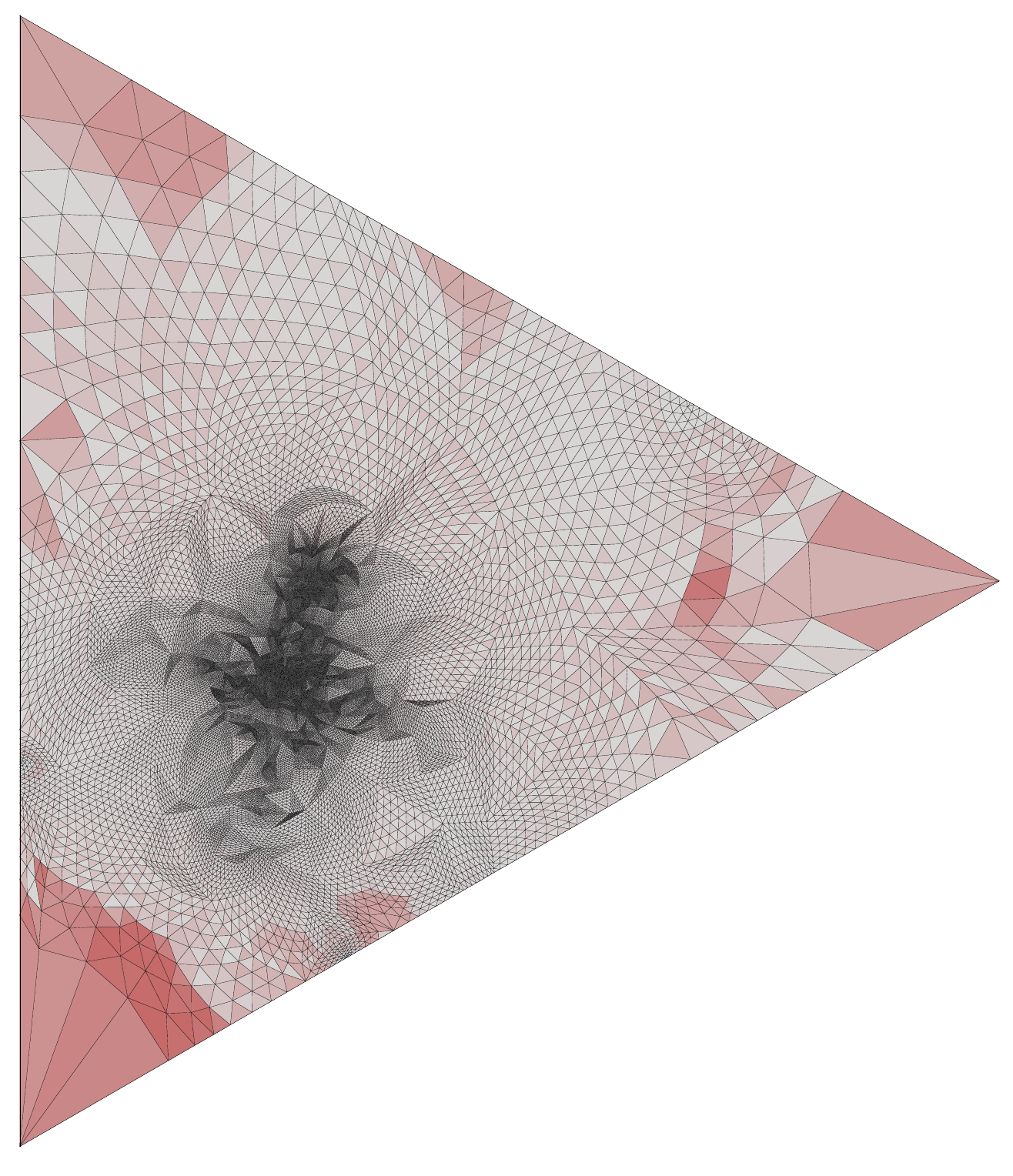}\\
$\S$ & $\S^1$ & $\S^2$ & $\S^3$ \\
\end{tabular}
  \caption{Approximation of the uniformization map of the Igea polyhedral surface. }\label{fig:igea}
\end{figure}

%
%\textbf{Practical implementation and concluding remarks}
%We know $\wh{\tau}$ on the boundary and the arg is harmonic...
%
%\begin{enumerate}
%\item
%Set $\eps>0$ in (\ref{e:alpha_j_leq_dil_beta_j}).
%\item
%Add some constant to algorithm $K_j=1+C 2^{-c q \gamma_{i_j}}$.
%\item
%Use only $u_i$ as variables.
%\item describe the convex problem (cone etc) and also relaxation ($L_2$). Show the relaxation does not hurt us.
%\end{enumerate}

\newpage

%%%%%%%%%%%%%%%%%%%%%%%%%%%%%%%%%%%%%%%%%%%%%%%%%%%%%%%%%%%%%%%%%%%%%%%%%%%%%%%%%%%%%%%%%
\appendix
\renewcommand{\thesection}{\Alph{section}}
\section{}
\label{a:auxilary_lemmas}
\renewcommand{\thesection}{\Alph{section}}

\textbf{Lemma \ref{lem:discretization_of_power_map}}
\textit{
Let $T=\Delta(\xi,0,\eta)$ be an isosceles triangle ($|\xi|=|\eta|$) and denote the angle $\theta=\measuredangle(\xi,0,\eta)$. Further let $T^q$ be the $q^{th}$ level of regular 1-4 subdivision of $T$. Denote by $h(z)=z^\gamma$ the power map, and assume that $\lceil \gamma \rceil \theta < \frac{\pi}{2}$, and that $\theta < 60.4^\circ$. Then, the simplicial maps $h^q$ defined by sampling $h(z)$ over the vertices of $T^q$ and extending by linearity are homeomorphisms that satisfy $\CD(h^q)\leq K$ for some $K\geq 1$ independent of $q$..}
\begin{proof} We will denote by $\chi^{q}:T\too\C$ the simplicial map that is defined by sampling $h(z)=z^\gamma$ at the vertices of $T^q$ and extending linearly. In particular, $\chi^q$ will map affinely the triangle $\Delta=\Delta(z_1,z_2,z_3)\subset T^q$ to the triangle $\Delta(z_1^\gamma, z_2^\gamma, z_3^\gamma)$. We need to prove that $\chi^{q}$ are all quasiconformal maps with a universal bound on their conformal distortion (a bound independent of $q$).

We first prove that it is enough to show that for sufficiently large $Q>0$, $\chi^q$, $q\geq Q$ is a homeomorphism. Then we show that such a $Q$ indeed exists.

Let us denote the set $T_{far}\subset T$ to include all the points $z\in T$ such that if we write $z=\lambda_1 \xi + \lambda_2 \eta$ then $\lambda_1+\lambda_2\geq \frac{1}{2}$. We also denote $T_{near}=\closure{T\setminus T_{far}}$. Lemma \ref{lem:sampling_conformal_with_triplet} indicates that there exists some finite level of subdivision $\wt{Q}$, such that for all subdivision levels $q\geq \wt{Q}$, the map $\chi^q$ restricted to the set of triangles in $T^q$ contained in $T_{far}$ is quasiconformal with bounded conformal distortion of, say, $\wt{K}=2$. That is, $\CD(\chi^q\mid_{T_{far}})\leq 2$ for $q\geq \wt{Q}$. Now our assumption is that for sufficiently large $Q$, $\chi^q$ is homoemorphic for $q\geq Q$. In particular we can take $Q\geq\wt{Q}$.

Now, $\chi^Q$ is homeomorphic, and therefore has to be quasiconformal since it maps finite simplicial complex (build out of finitely many triangles). Denote $\CD(\chi^Q)=K_Q$. Since $h(az)=(az)^\gamma=a^\gamma h(z)$, and multiplying by a complex number keeps  distortion and orientation unchanged we see that the conformal distortion of $\chi^{Q+1}$ over triangles in $T^{Q+1}$ contained in  $T_{near}$ is also bounded by $K_{Q}$. Therefore, the conformal distortion $\CD(\chi^{Q+1}\mid_{T_{near}})\leq K_{Q}$, while we already know that $\CD(\chi^{Q+1}\mid_{T_{far}})\leq 2$. Therefore $$\CD(\chi^{Q+1}\mid_{T_{near}})\leq \max\set{K_{Q},2}.$$
Continuing with induction we prove that $$\CD(\chi^{q}\mid_{T_{near}})\leq \max\set{K_{Q},2},$$ for all $q\geq Q$.

\begin{floatingfigure}[l]{0.3\textwidth}
%\begin{wrapfigure}{l}{0.3\textwidth}
  \begin{center}
    \includegraphics[width=0.3\textwidth]{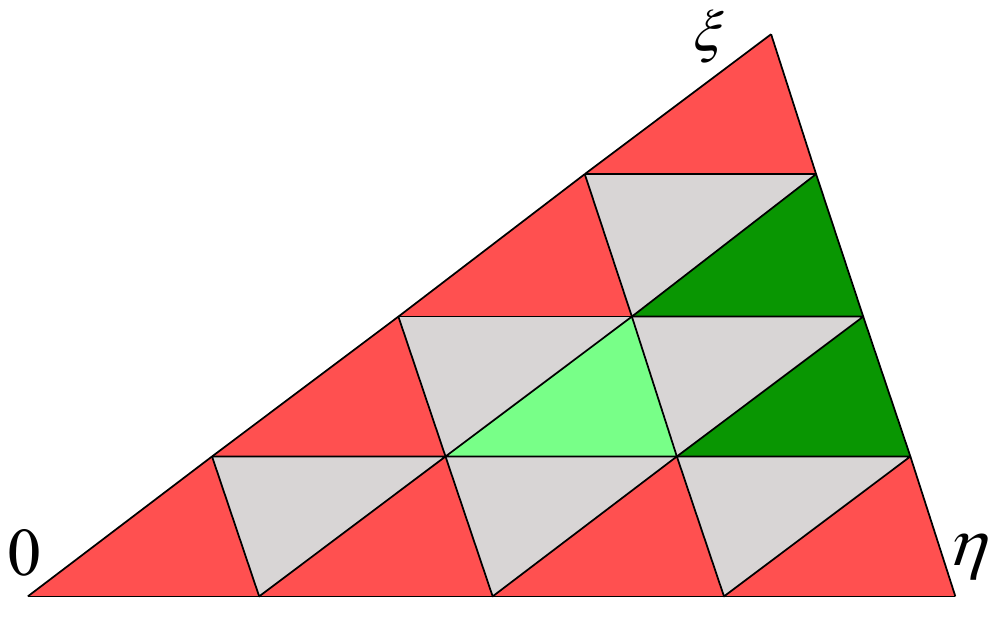}%\vspace{-0.2cm}
  \end{center}
  %\caption{The Projection approach.}
  %\label{fig:multi_connected}
%\end{wrapfigure}
\end{floatingfigure}
We are left with the task of proving that $\chi^q$, $q\geq Q$ are all homeomorphisms for sufficiently large $Q$. We will refer to the inset image depicting an example of $\Delta(\xi,0,\eta)$ for illustration. It is enough to show that every triangle in every subdivision level $T^q$ is not flipped by $\chi^q$. Indeed since the boundary polygonal of $T^q$ is mapped to another simple polygon, the fact that all inner triangles maintain their orientation implies homeomorphism (see, e.g., \cite{Lipman:2012:BDM:2185520.2185604} for a proof).

Let us fix some arbitrary $q$ (e.g., see the inset). The first observation is that the ``corner'' triangle, namely the triangle touching the origin does not flip orientation under $\chi^q$. This can be seen by using the fact that $\lceil\gamma\rceil\theta<\frac{\pi}{2}$. This actually implies that all triangles with an edge on the segments $[0,\xi]$ and $[0,\eta]$ (colored red in inset) are not flipped.\\
We are left with the triangles of the type colored green and gray in the inset. Since, as we noted before, scaling does not change the orientation we can consider w.l.o.g. triangles of the form $\Delta(1,1+\eta,1+\xi)\in T^q$. Since every triangle in $T^q$ is similar to $T$, we can further assume that the corner at $1$ corresponds to $0$ in $T$.\\
Since $\theta < 60.4^{\circ} < \pi/2$ one must have $\re(\xi)\re(\eta)>0$. The triangles for which $\re(\xi)<0$ and $\re(\eta)<0$ (corresponds to the gray triangles) are not flipped since $h(1)=1$, $h(1+\xi)=(1+\xi)^\gamma$, $\abs{(1+\xi)^\gamma}<1$, and $\arg\parr{(1+\xi)^\gamma}<\frac{\pi}{2}$ since $\theta\lceil\gamma\rceil<\frac{\pi}{2}$ and $\arg(1+\xi)\leq\theta$. Similarly, $\arg\parr{(1+\eta)^\gamma}>-\frac{\pi}{2}$ and therefore the triangle $\Delta(1,h(1+\eta),h(1+\xi))$ has the same orientation as $\Delta(1,1+\eta,1+\xi)$. \\
We are left with the triangles for which $\re(\xi)>0$ and $\re(\eta)>0$ (the green triangles in the inset). We will later use the fact that since all the triangles are isoceles a consequence from the cosine law is that,
for all ``green'' triangles,\begin{equation}\label{e:bound_r}
|\xi|\leq  \frac{1}{\sqrt{2(1+\cos(\theta))}},
\end{equation} and similar bound holds for $|\eta|$.\\
A sufficient condition that will prevent triangle $\Delta(1,1+\eta,1+\xi)$ from flipping is $$\re(1+\xi)^\gamma-1 > 0\ , \ \re(1+\eta)^\gamma -1>0.$$ We will show the first inequality, and the second is proved similarly. In the following we use the binomial expansion \cite{ahlfors1979complex}.  Denote $\xi=r e^{\i \vphi}$, then
\begin{eqnarray}\nonumber
\re(1+\xi)^\gamma-1 &=& \gamma~\re\xi+\frac{\gamma(\gamma-1)}{2}~\re\xi^2+\frac{\gamma(\gamma-1)(\gamma-2)}{3!}~\re\xi^3+...\\\nonumber%\frac{\gamma(\gamma-1)(\gamma-2)(\gamma-3)}{4!}~\re\xi^4+...\\
&=& \gamma r\cos(\vphi)+\frac{\gamma(\gamma-1)}{2}r^2\cos(2\vphi)+\frac{\gamma(\gamma-1)(\gamma-2)}{3!}r^3\cos(3\vphi)+...\\\nonumber%\frac{\gamma(\gamma-1)(\gamma-2)(\gamma-3)}{4!}r^4\cos(4\vphi)+...
&=& \gamma r\parr{\cos(\vphi)+\frac{\gamma-1}{2}r\cos(2\vphi)+\frac{(\gamma-1)(\gamma-2)}{3!}r^2\cos(3\vphi)+...}\\ \label{e:re(1+xi)^gamma-1}
&=& \gamma r\parr{\sum_{j=0}^\infty a_j\cos\parr{(j+1)\vphi}r^j},
\end{eqnarray}
where $a_j = \frac{(\gamma-1)(\gamma-2)\cdot\cdot\cdot(\gamma-j)}{2\cdot3\cdot...(j+1) }$, for $j=1,2,...$, and $a_0=1$. Let $k\in \Natural$ be such that $k<\gamma\leq k+1$ . Since $0<\gamma-k\leq1$ we have
$$\abs{a_k} = \abs{\frac{(\gamma-1)(\gamma-2)\cdot...\cdot(\gamma-k)}{2\cdot3\cdot...\cdot(k+1) }}\leq \frac{1}{k+1}.$$
Using induction one proves that for $j=k+1,k+2,...$
$$\abs{a_j}\leq \frac{1}{j+1}.$$
We can now bound (\ref{e:re(1+xi)^gamma-1}) as follows
  \begin{eqnarray}\nonumber
\re(1+\xi)^\gamma-1 &=& \gamma r\parr{\sum_{j=0}^k a_j\cos\parr{(j+1)\vphi}r^j\ +\ \sum_{j=k+1}^\infty a_j\cos\parr{(j+1)\vphi}r^j}\\\nonumber
&\geq& \gamma r\parr{\cos(\vphi) - \sum_{j=k+1}^\infty \frac{1}{j+1}r^j}\\\nonumber
&\geq& \gamma r\parr{\cos(\vphi) - \frac{1}{r}\sum_{j=k+1}^\infty \frac{1}{j+1}r^{j+1}}\\\nonumber
&\geq& \gamma r\parr{\cos(\vphi) - \frac{1}{r}\parr{-\log\parr{1-r}-r}}\\\label{e:re(1+xi)^gamma_final_bound}
&\geq& \gamma r\parr{\cos(\theta) + \frac{\log\parr{1-r}}{r}+1},
\end{eqnarray}
where in the second row we used the fact that for $1\leq j \leq k$, $j+1\leq \lceil\gamma\rceil$, and therefore $\cos((j+1)\vphi)\geq\cos(\lceil\gamma\rceil\theta)\geq 0$ as $\theta \lceil\gamma\rceil<\pi/2$.
Plugging (\ref{e:bound_r}) in (\ref{e:re(1+xi)^gamma_final_bound}) we get
\begin{eqnarray*}
\re(1+\xi)^\gamma-1 &\geq& \gamma r \parr{\cos(\theta)+\frac{\log\parr{1-\brac{2(1+\cos(\theta))}^{-1/2}}}{\brac{2(1+\cos(\theta))}^{-1/2}}+1}\\ &=& \gamma r \Upsilon(\theta),
\end{eqnarray*}

\begin{floatingfigure}[r]{0.4\textwidth}
%\begin{wrapfigure}{l}{0.3\textwidth}
  \begin{center}\vspace{-1.5cm}
    \includegraphics[width=0.2\textwidth]{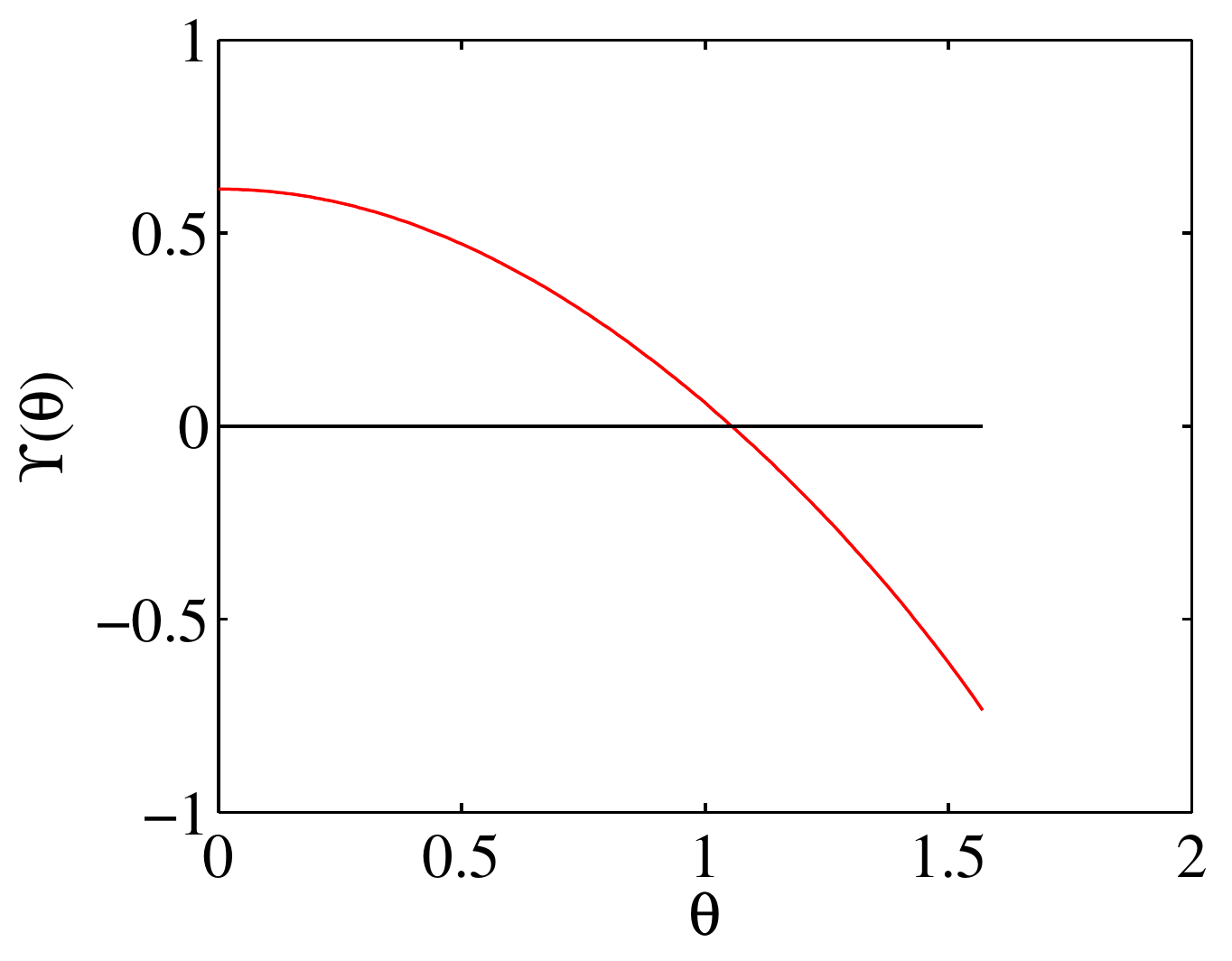}%\vspace{1cm}
  \end{center}
  %\caption{The Projection approach.}
  %\label{fig:multi_connected}
%\end{wrapfigure}
\end{floatingfigure}

Investigating the function $\Upsilon(\theta)$ (see inset for its graph, in red) reveals that for $\theta\leq 60.4^\circ$ $\Upsilon(\theta)>0$ and therefore $\re(1+\xi)^\gamma-1>0$. Similarly, $\re(1+\eta)^\gamma-1>0$ and therefore the triangle $\Delta(1,h(1+\eta),h(1+\xi))$ is not flipped.
\end{proof}

\bibliographystyle{amsplain}
\bibliography{discrete_qc}

\providecommand{\bysame}{\leavevmode\hbox to3em{\hrulefill}\thinspace}
\providecommand{\MR}{\relax\ifhmode\unskip\space\fi MR }
% \MRhref is called by the amsart/book/proc definition of \MR.
\providecommand{\MRhref}[2]{%
  \href{http://www.ams.org/mathscinet-getitem?mr=#1}{#2}
}
\providecommand{\href}[2]{#2}
\begin{thebibliography}{10}

\bibitem{ahlfors1966lectures}
L.V. Ahlfors, \emph{Lectures on quasiconformal mappings}, University lecture
  series, American Mathematical Society, 1966.

\bibitem{ahlfors1979complex}
\bysame, \emph{Complex analysis: an introduction to the theory of analytic
  functions of one complex variable}, International series in pure and applied
  mathematics, McGraw-Hill, 1979.

\bibitem{ahlfors2010conformal}
\bysame, \emph{Conformal invariants: Topics in geometric function theory}, Ams
  Chelsea Publishing, AMS Chelsea Pub., 2010.

\bibitem{Bobenko10_CETM}
Boris~Springborn Alexander~Bobenko, Ulrich~Pinkall, \emph{Discrete conformal
  maps and ideal hyperbolic polyhedra}, arXiv:1005.2698 (2010).

\bibitem{DDG_Oberwolfach08}
John M. Sullivan (Editor) G\"{u}nter M. Ziegler~(Editor) Alexander I. Bobenko
  TU Berlin~(Editor), Peter Schr\"{o}der~(Editor) (ed.), \emph{Discrete
  differential geometry (oberwolfach seminars)}, Birkh\"{a}user, 2008.

\bibitem{astala2009elliptic}
K.~Astala, T.~Iwaniec, and G.~Martin, \emph{Elliptic partial differential
  equations and quasiconformal mappings in the plane}, Princeton mathematical
  series, Princeton University Press, 2009.

\bibitem{ben2008conformal}
M.~Ben-Chen, C.~Gotsman, and G.~Bunin, \emph{Conformal flattening by curvature
  prescription and metric scaling}, Computer Graphics Forum, vol.~27, Wiley
  Online Library, 2008, pp.~449--458.

\bibitem{Bobenko11}
A.I. Bobenko, \emph{Introduction to compact riemann surfaces, in: Computational
  approach to riemann surfaces}, Lecture Notes in Mathematics,Springer
  \textbf{2013} (2011), 3--64.

\bibitem{driscoll2002schwarz}
T.A. Driscoll and L.N. Trefethen, \emph{Schwarz-christoffel mapping}, Cambridge
  monographs on applied and computational mathematics, Cambridge University
  Press, 2002.

\bibitem{farkas1992riemann}
H.M. Farkas and I.~Kra, \emph{Riemann surfaces}, Graduate texts in mathematics,
  Springer-Verlag, 1992.

\bibitem{Gu:2003:GCS:882370.882388}
Xianfeng Gu and Shing-Tung Yau, \emph{Global conformal surface
  parameterization}, Proceedings of the 2003 Eurographics/ACM SIGGRAPH
  symposium on Geometry processing (Aire-la-Ville, Switzerland, Switzerland),
  SGP '03, Eurographics Association, 2003, pp.~127--137.

\bibitem{He_Schramm98}
Zheng-Xu He and Oded Schramm, \emph{The c $\infty$-convergence of hexagonal
  disk packings to the riemann map}, Acta Mathematica \textbf{180} (1998),
  219--245 (English).

\bibitem{jin2008discrete}
M.~Jin, J.~Kim, F.~Luo, and X.~Gu, \emph{Discrete surface ricci flow},
  Visualization and Computer Graphics, IEEE Transactions on \textbf{14} (2008),
  no.~5, 1030--1043.

\bibitem{Kharevych:2006:DCM:1138450.1138461}
Liliya Kharevych, Boris Springborn, and Peter Schr\"{o}der, \emph{Discrete
  conformal mappings via circle patterns}, ACM Trans. Graph. \textbf{25}
  (2006), 412--438.

\bibitem{lehto1973quasiconformal}
O.~Lehto and K.I. Virtanen, \emph{Quasiconformal mappings in the plane},
  Grundlehren der mathematischen Wissenschaften, Springer, 1973.

\bibitem{Levy:2002:LSC:566654.566590}
Bruno L{\'e}vy, Sylvain Petitjean, Nicolas Ray, and J{\'e}rome Maillot,
  \emph{Least squares conformal maps for automatic texture atlas generation},
  ACM Trans. Graph. \textbf{21} (2002), no.~3, 362--371.

\bibitem{Lipman:2012:BDM:2185520.2185604}
Yaron Lipman, \emph{Bounded distortion mapping spaces for triangular meshes},
  ACM Trans. Graph. \textbf{31} (2012), no.~4, 108:1--108:13.

\bibitem{papamichael2010numerical}
N.~Papamichael and N.~Stylianopoulos, \emph{Numerical conformal mapping: Domain
  decomposition and the mapping of quadrilaterals}, World Scientific, 2010.

\bibitem{Porter05historyand}
R.~Michael Porter, \emph{History and recent developments in techniques for
  numerical conformal mapping}, 2005.

\bibitem{Rodin_Sullivan_1987}
Burt Rodin and Dennis Sullivan, \emph{The convergence of circle packings to the
  {R}iemann mapping}, J. Differential Geom. \textbf{26} (1987), no.~2,
  349--360. \MR{90c:30007}

\bibitem{Sheffer:2005:AFR:1061347.1061354}
Alla Sheffer, Bruno L\'{e}vy, Maxim Mogilnitsky, and Alexander Bogomyakov,
  \emph{Abf++: fast and robust angle based flattening}, ACM Trans. Graph.
  \textbf{24} (2005), 311--330.

\bibitem{Springborn:2008:CET:1360612.1360676}
Boris Springborn, Peter Schr\"{o}der, and Ulrich Pinkall, \emph{Conformal
  equivalence of triangle meshes}, ACM Trans. Graph. \textbf{27} (2008),
  77:1--77:11.

\bibitem{stephenson2005introduction}
K.~Stephenson, \emph{Introduction to circle packing: the theory of discrete
  analytic functions}, Cambridge University Press, 2005.

\bibitem{He96onthe}
Zheng xu~He and Oded Schramm, \emph{On the convergence of circle packings to
  the riemann map}, Invent. Math \textbf{125} (1996), 285--305.

\end{thebibliography}
\end{document}